
\documentclass[12pt]{article}
\usepackage{eurosym}
\usepackage[colorlinks,hyperfootnotes=false,pageanchor=false]{hyperref}
\usepackage{amsmath, amsthm}
\usepackage{algorithm}
\usepackage[noend]{algpseudocode}
\usepackage[nameinlink,noabbrev]{cleveref}
\usepackage[doublespacing]{setspace}
\usepackage{placeins}
\usepackage{graphicx}
\usepackage{pgfplots}
\usepackage{tikz}
\usepackage{relsize}
\usepackage{pdflscape}
\usepackage{amssymb}
\usepackage[round]{natbib}
\usepackage{amsfonts}
\usepackage{amsmath}
\usepackage{bm}
\usepackage{latexsym}
\usepackage{epsfig}
\usepackage{rotating}
\usepackage{array}
\usepackage{animate}

\setcounter{MaxMatrixCols}{10}

\hypersetup{colorlinks=true, linkcolor=black, citecolor=black}
\pgfplotsset{compat=newest}
\pgfplotsset{plot coordinates/math parser=false}
\newlength\figureheight
\newlength\figurewidth

\def\1{1\!{\rm l}}

\addtolength{\oddsidemargin}{-.5in}
\addtolength{\evensidemargin}{-.25in}
\addtolength{\textwidth}{1in}
\addtolength{\textheight}{1.6in}
\addtolength{\topmargin}{-.8in}
\setlength{\textwidth}{17.4cm}
\newtheorem{theorem}{Theorem}
\newtheorem{lemma}[theorem]{Lemma}
\newtheorem{proposition}{Proposition}
\newtheorem{assumption}{Assumption}

\newtheorem{remark}{Remark}

\setcounter{definition}{0}
\setcounter{assumption}{0}
\setcounter{theorem}{0}

\newcommand{\ignore}[1]{}

\newcommand{\y}{\mathbf{y}}

\newcommand{\dx}{\text{d}}

\DeclareMathOperator*{\plim}{plim}
\DeclareMathOperator*{\argmax}{arg\,max}

\def\spacingset#1{\renewcommand{\baselinestretch}{#1}\small}
\spacingset{1}
\newcolumntype{C}[1]{>{\centering\let\newline\\\arraybackslash\hspace{0pt}}m{#1}}
\newcolumntype{L}[1]{>{\raggedright\let\newline\\\arraybackslash\hspace{0pt}}m{#1}}

\begin{document}

\title{Focused Bayesian Prediction\thanks{{\footnotesize We would like to
thank a co-editor and two anonymous referees for very constructive comments
on an earlier draft of the paper. We would also like to thank various
participants at the following workshops and conferences for very helpful
comments on earlier versions of the paper: the Casa Matem\'{a}tica Oaxaca
Workshop on `Computational Statistics and Molecular Simulation: A Practical
Cross-Fertilization', Oaxaca, November, 2018; the Australian Centre for
Excellence in Mathematics and Statistics Workshop on `Advances and
Challenges in Monte Carlo Methods', Brisbane, November, 2018; the
International Symposium of Forecasting, Thessaloniki, June, 2019; the 13th
RCEA Bayesian Econometrics Workshop, Cyprus, June 2019; the 12th
International Conference on `Monte Carlo Methods and Applications', Sydney,
July, 2019; the Joint Statistical Meetings, Colorado, July, 2019; the
`Frontiers in Research for Statistics' Conference, Brisbane, October, 2019;
and the `Statistical Methods in Data Science Workshop', Mathematical
Research Institute, Creswick, December, 2019. This research has been
supported by Australian Research Council (ARC) Discovery Grants DP170100729
and DP200101414. Frazier was also supported by ARC Early Career Researcher
Award DE200101070.}}}
\author{Ruben Loaiza-Maya, Gael M. Martin\thanks{
{\footnotesize Corresponding author: gael.martin@monash.edu.}} \ and David
T. Frazier \medskip  \and \textit{Department of Econometrics and Business
Statistics, Monash University\smallskip } \and \textit{and Australian Centre
of Excellence in Mathematics and Statistics}}
\maketitle
\title{}

\begin{abstract}
{We} propose a new method for conducting Bayesian prediction that delivers
accurate predictions without correctly specifying the unknown true data
generating process. {A prior is defined over a class} of plausible
predictive models. {After observing data, we update} {the prior to} a
posterior over these models, via a criterion that captures a user-specified
measure of predictive accuracy. Under regularity, this{\ update} yields
posterior concentration onto the element of the predictive class that
maximizes the expectation of the accuracy measure. In a series of simulation
experiments and empirical examples we find notable gains in predictive
accuracy relative to conventional likelihood-based{\ prediction.}

\bigskip

\emph{Keywords:} Loss-based Bayesian forecasting{; }proper {s}coring rules;
stochastic volatility; expected shortfall; Murphy diagram; M4 forecasting
competition

\bigskip

\emph{MSC2010 Subject Classification}: 62F15, 60G25, 62M20\smallskip

\emph{JEL Classifications:} C11, C53, C58.
\end{abstract}

\newpage

\section{Introduction}

\baselineskip18pt

{Bayesian prediction quantifies uncertainty about the future value of a
random variable using the rules and language of probability. A probability
distribution for a future value is produced, conditioned only on past
observations; all uncertainty about the parameters of the prediction model,
plus any uncertainty about the model itself, having been }integrated{{, or
averaged out} via these simple rules. Inherent to this natural and coherent
approach to prediction, however, is the assumption that the process that has
generated the observed data is either equivalent to the particular model on
which we condition, or contained in the set of models over which we average.
Such a heroic assumption is clearly at odds with reality}; in particular in
the realm of the{\ social and {economic} sciences where statistical data
arises through complex} processes {that we can only \textit{ever} intend to
approximate. }

{In response to this limitation of the conventional paradigm, we propose {an
alternative }approach to Bayesian prediction. }{A prior is placed over a
class of \textit{plausible} predictive models. The prior is then updated to
a posterior over these models, via a criterion function that represents a {%
user-specified measure of predictive accuracy. This criterion replaces the
likelihood function in the conventional Bayesian update and, hence, obviates
the explicit need for correct model specification. }Summarization of the
posterior so produced - via its mean, for example - yields a single,
representative predictive distribution that is expressly designed to yield
accurate forecasts according to the given measure. Alternatively, the full
extent of the posterior variation that obtains can be quantified and
visualized. }Given this {deliberate} \textit{focus} on a particular aspect
of predictive performance in the building of {predictions}, we refer to the
principle as focused Bayesian prediction, or simply FBP.

{To quantify predictive accuracy }{we use the concept of a scoring rule.
(See \citealp{gneiting2007probabilistic}, and \citealp{gneiting2007strictly}%
, for early expositions.) In short, a scoring rule rewards a probabilistic
forecast for assigning a high density ordinate (or high probability mass) to
the observed value (`calibration'), subject to some criterion of
`sharpness', or some reward for accuracy in a particular part of the
predictive support (e.g. the tails; \citealp{diks2011likelihood}, %
\citealp{opschoor2017combining}). }Under appropriate regularity, we
establish that this approach ensures, asymptotically, accurate performance
according to the {specified} measure of predictive accuracy, and without
dependence on correct model choice. {Extensive numerical results support the
theoretical results: }focus{\ on predictive accuracy, rather than correct
model specification \textit{per se}, leads to improved predictive
performance.}

{This }{approach to Bayesian prediction{\ has elements in common with the
`probably approximately correct' (PAC)-Bayes approach to prediction adopted
in the machine learning literature; see \cite{guedj2019} for a recent review 
}{and extensive referencing}. {The} use of Bayesian updating \textit{per se}
without reference to a likelihood function {also echoes} the \textit{%
generalized inferential} methods proposed by, {for example}},{\ \cite%
{bissiri2016general},}\textbf{\ }\cite{giummole2017objective}, {\cite%
{GVI2019} }{and \cite{Syring2019},} in which uncertainty about unknown
parameters (in a given model) is {updated} via a general loss, or score,
function. A major challenge {in these generalizations of the standard
Bayesian paradigm} is the calibration of the scale of the loss (or score),
which has a direct impact on the resultant variance of posterior of the
parameters. Several methods for specifying this scale have been proposed (%
\citeauthor{bissiri2016general}; \citeauthor{giummole2017objective}; %
\citealp{holmes2017assigning}; \citealp{lyddon2019general}; {%
\citeauthor{Syring2019}}).\footnote{%
We note that \cite{CH03} also propose the use of quasi-posteriors based on
updating a general loss function; however, the loss function is not
calibrated as it is in the other literature referenced here.} Whilst we draw
some insights from this literature, we propose new approaches that are
informed specifically by the prediction context in which we are working, and
which ensure posterior concentration around the predictive that is optimal
under the given accuracy measure.

The predictive distributions within the plausible class {may characterize a
single dynamic structure} depending on a single vector of unknown parameters.%
\footnote{{A related frequentist literature exists in which distributional
forecasts produced via a particular model }are `optimized'{\ according to} a
given form of predictive accuracy. Work in which this idea is implemented,
or at least discussed, {includes} \cite{gneiting2005calibrated}, \cite%
{gneiting2007strictly}, \cite{elliott2008economic} {and} \cite%
{patton2019comparing}.} {However, they may also be} weighted combinations of
predictives from distinct {models}. {As such, our approach represents a
coherent Bayesian }method for{\ estimating weighted combinations of
predictives via forecast accuracy criteria, and without the need to assume
that the true model lies within the set of constituent predictives. Whilst
an established literature on estimating mixtures of predictives exists (see }%
\citealp{aastveit2018evolution}, for an extensive review) - including work
that invokes Bayesian principles - our {paper provides an alternative way of
updating predictive combinations via non-likelihood-based Bayesian principles%
}. {We comment further on the connection of our work with the literature on
predictive combinations, plus provide detailed referencing to this literature%
}, in Section \ref{comb}.

{After establishing the theoretical validity of {the new} method, its}{\
efficacy and usefulness }is demonstrated through a set of simulation
exercises, based on alternative predictive classes for a stochastic
volatility model for financial returns. These classes are deliberately
chosen to represent, at one end, a very misspecified representation of the
(known) true data generating process (DGP) and, at the other end, a less
misspecified version. The comparator in all cases is the standard, and
misspecified, likelihood-based Bayesian {update} of the given parametric
class. The results are clear: within-sample updating based on a specific
measure of predictive accuracy {almost always }leads to the best
out-of-sample performance according to that measure. The \textit{degree} of
superiority depends on the interplay between the model class, {including}
the manner in which the model is misspecified, and the desired measure of
accuracy - with animated graphics used to illustrate this point. The
differential impact of {update} choice on posterior variation is also
highlighted, via an animated display of posterior distributions for the
expected shortfall (ES) of both `long' and `short' portfolios in the
financial asset.

{Two }empirical illustrations complete the analysis. In the first, we
predict two{\ different series of daily }financial returns using predictive
classes based on the Gaussian (generalized) autoregressive conditional
heteroscedastic ((G)ARCH) class of volatility model, known to be
misspecified for the more complex process driving returns. The series
considered are returns on: the U.S. dollar currency index{, and the }S\&P500
stock index. The empirical results mimic those produced by simulation, with {%
predictive accuracy improved by using the focused }update - rather than the
conventional (likelihood-based) update - {in virtually all cases. The
increase in predictive accuracy translates into more accurate value at risk
(VaR) forecasts: }use of an {update} that focuses on tail accuracy leads to
a better match of empirical to nominal VaR coverage (than does the
likelihood {update}) in \textit{all} cases, and more frequent support of the
joint null of correct coverage and independent violations. {Improved
forecasts of ES also result in many cases.}

{In the} {second} empirical example we pit FBP against the best performers
in the Makridakis 4 (M4) forecasting competition. We perform the exercise
using the {23,000} annual time series from the set of {100,000 }series (of
varying frequencies) used in the competition. We select as the predictive
class, the exponential smoothing model of \cite{hyndman2002state} (referred
to as ETS hereafter), {which had }ranked highly {amongst} all twenty-five
competitors. Adopting the same preliminary model selection procedure as the
authors to specify the components of the ETS class for each of the {23,000
series}, we {update} the chosen class using the mean scaled interval score
(MSIS). This measure of predictive accuracy penalizes a prediction interval
if the observed value falls outside the interval (appropriately weighted by
its nominal coverage), and rewards a narrow interval, and was one of the
measures used to rank methods in the competition. As measured by MSIS, FBP
not only almost always outperforms maximum likelihood-based implementation
of ETS, but it outperforms all four predictive methods that were previously
ranked best in the competition, in a large number of cases.

The rest of the article is organized as follows. In Section~\ref%
{sec:FocusedBayesianPrediction} we propose our new Bayesian predictive
paradigm{, and briefly illustrate its ability to produce more accurate
predictions using a toy example}. {In Section \ref{theory}} asymptotic
validation of the method is provided, under the required regularity. More
extensive illustration of the {new approach} via simulation, and
visualization of the results, is the content of Section~\ref%
{section:simulaexcer}, whilst Section~\ref{empapp} illustrates the power of
the method in empirical settings. In Section~\ref{section:discuss} we
discuss the implications of our results and future lines of research. The
proofs of all theoretical results, and certain computational details, are
provided in appendices to the paper.

\section{Focused Bayesian Prediction\label{sec:FocusedBayesianPrediction}}

\subsection{Preliminaries and notation\label{prelim}}

Consider a stochastic process $\{y_{t}:\Omega \rightarrow \mathbb{R},t\in 
\mathbb{N}\}$ defined on the complete probability space $(\Omega ,\mathcal{F}%
,G)$. Let $\mathcal{F}_{t}:=\sigma (y_{1},\dots ,y_{t})$ denote the natural
sigma-field, and let $G$ denote the infinite-dimensional distribution of the
sequence $y_{1},y_{2},\dots $.

Throughout, we focus {on one-step-ahead} predictions and let $\mathcal{P}%
^{t}:=\left\{ P_{\boldsymbol{\theta }}^{t}:\boldsymbol{\theta }\in \Theta
\right\} $ denote a generic class of one-step-ahead predictive models {for }$%
y_{t+1}$, which are {conditioned} on time $t$ information $\mathcal{F}_{t}$,
and where the generic elements of $\mathcal{P}^{t}$\ are represented by $P_{%
\boldsymbol{\theta }}^{t}(\cdot ):=P(\cdot |\boldsymbol{\theta },\mathcal{F}%
_{t}).$ The {\ parameter} $\boldsymbol{\theta }$ indexes values in the
predictive class, with $\boldsymbol{\theta }$ defined on $(\Theta ,\mathcal{T%
},\Pi )$, {and where }$\Pi ${\ measures our {beliefs about} }$\boldsymbol{%
\theta }${. }{Our beliefs }$\Pi ${\ over }$\Theta ${\ - both prior and
posterior - generate corresponding beliefs over the elements in the
predictive class $\mathcal{P}^{t}${, in the usual manner} and, therefore,
throughout the remainder we abuse notation and refer to $\Pi $ as {indexing}
beliefs over the class $\mathcal{P}^{t}$. }

Our goal is to construct a sequence of probability measures over $\mathcal{P}%
^{t}$, starting from our prior beliefs $\Pi $, such that {hypotheses} in $%
\mathcal{P}^{t}$ that have `higher predictive accuracy', are given higher
posterior probability, after observing realizations from $\{y_{t}:t\geq 1\}$%
. Given a user-defined measure of accuracy, we demonstrate that such a
probability measure can be constructed using a Bayesian updating framework.

Importantly however, we deviate from the standard approach to the production
of Bayesian predictives in that the class $\mathcal{P}^{t}$ only represents 
\textit{plausible} predictive models for $y_{t+1}$. At no point in what
follows do we make the unrealistic assumption that the true one-step-ahead
predictive is contained in $\mathcal{P}^{t}$.\footnote{{The treatment of
scalar $y_{t}$ and one-step-ahead prediction is for the purpose of
illustration only, and all the methodology that follows can easily be
extended to multivariate $y_{t}$ and {multi-step-ahead} prediction in the
usual manner.}}

\subsection{Bayesian updating based on scoring rules}

Using generic notation for the moment,\ for $\mathcal{P}$ a convex class of
predictive distributions on $(\Omega ,\mathcal{F})$, we measure the
predictive accuracy of $P\in \mathcal{P}$ using the scoring rule $S:\mathcal{%
P}\times \Omega \rightarrow \mathbb{R}$, whereby if the predictive
distribution $P$ is quoted and the value $y$ eventuates, then the reward, or
positively-oriented `score', is $S(P,y).$ The expected score under the{\
true predictive} $G$ {is }defined as 
\begin{equation}
\mathbb{S}(\cdot ,G):=\int_{y\in \Omega }S(\cdot ,y)dG(y).  \label{exp_score}
\end{equation}%
We say that a scoring rule is \textit{proper} relative to $\mathcal{P}$ if,
for all $P,G\in \mathcal{P}$, $\mathbb{S}(G,G)\geq \mathbb{S}(P,G),$ and is
strictly proper, relative to $\mathcal{P}$, if $\mathbb{S}(G,G)=\mathbb{S}%
(P,G)\iff P=G.$ That is, a proper scoring rule is one whereby if the
forecaster's best judgment is indeed the true measure $G$ there is no
incentive to quote anything other than $P=G$ {\citep{gneiting2007strictly}}.

{Under the assumption that a given predictive class contains the truth},{\
i.e. that} $G\in \mathcal{P}$, {the expectation of any\textit{\ }proper
score $S(\cdot ,y)$, with respect to the truth $(G)$, will be maximized at
the truth, $G$. Hence, {maximization over }$\mathcal{P}$ {of the expected
scoring rule} $\mathbb{S}(\cdot ,G)$, will reveal the true predictive
mechanism when it is contained in $\mathcal{P}$. In practice of course, the
expected score $\mathbb{S}(\cdot ,G)$ is unattainable, and a sample estimate
based on observed data is used to define a sample score-based criterion.
Maximization of the sample criterion, which implicitly depends on the true
predictive process through the observed data,\textbf{\ }will yield the
member of the predictive class that maximizes the relevant sample criterion. 
}However, asymptotically the true predictive{\ distribution will be
recovered via \textit{any} proper score criterion {(again, on the assumption
that the true predictive lies in the class $\mathcal{P}$).}}

The very premise of this paper is that, in reality, any choice of predictive
class$\,$is such that the truth is not contained therein,\ at which point
there is no reason to presume that the expectation of any particular scoring
rule will be maximized at the truth or, indeed, maximized by the \textit{same%
} predictive distribution that maximizes a different (expected) score. This
does not, however, preclude the meaningfulness of a score as a measure of
predictive accuracy, or invalidate the goal of seeking accuracy according to
this particular measure. Indeed, it renders the distinctiveness of different
scoring rules, and what form of forecast accuracy they do and do not reward,
even more critical, and provides strong justification for driving predictive
decisions by the very score that matters for the problem at hand.

With these insights in mind, we proceed as follows, reverting now to the
specific notation that characterizes our problem, as defined in Section \ref%
{prelim}. Given observed data $\mathbf{y}_{n}=(y_{1},y_{2},...,y_{n})^{%
\prime }$, our object of interest is $P_{\boldsymbol{\theta }}^{n}$, that
is, the predictive distribution for $y_{n+1}$, conditional on information
known at time $n$, $\mathcal{F}_{n}$. Given our prior beliefs $\Pi (\cdot )$%
, over $\mathcal{P}^{n}$, we update these beliefs using the following
coherent posterior measure: for $A\subset \mathcal{P}^{n}$, 
\begin{equation}
\Pi _{w}(A|\mathbf{y}_{n})=\frac{\int_{A}\exp \left[ w_{n}S_{n}\left( P_{%
\boldsymbol{\theta }}^{n}\right) \right] \text{d}\Pi \left( P_{\boldsymbol{%
\theta }}^{n}\right) }{\int_{\Theta }\exp \left[ w_{n}S_{n}\left( P_{%
\boldsymbol{\theta }}^{n}\right) \right] \text{d}\Pi \left( P_{\boldsymbol{%
\theta }}^{n}\right) },  \label{post}
\end{equation}%
where 
\begin{equation}
S_{n}(P_{\boldsymbol{\theta }}^{n})=\sum\limits_{t=0}^{n-1}S(P_{\boldsymbol{%
\theta }}^{t},y_{t+1}),  \label{score}
\end{equation}%
and where the scale factor\textbf{\ }$w_{n}$ ({indexed by }$n$){, used to
define and index the posterior (via the notation }$\Pi _{w}(\cdot |\mathbf{y}%
_{n})$){, is to be discussed in detail below.\footnote{%
The nature of the conditioning set $\mathcal{F}_{n}$ differs according to
the dynamic structure of $P_{\boldsymbol{\theta }}^{n}.$ For example, in a
Markov model of order 1, $\mathcal{F}_{n}$ comprises the observed $y_{n}$
only. In contrast, a predictive for a long memory model conditions on all
available past observations. The conditioning set, $\mathcal{F}_{n}$, may
also, of course, include observed values of covariates. Hence, we keep the
notation for this conditioning set, $\mathcal{F}_{n}$, {which is implicit in
the definition of }$P_{\boldsymbol{\theta }}^{n}$, distinct from that of the
observed data, $\mathbf{y}_{n}$, that is used to build the posterior over
the elements of {$\mathcal{P}^{n}.$}}} {Two more comments regarding notation
are useful at this point. First, consistent with our earlier comment, we
abuse notation by defining a prior directly over a predictive, }$P_{%
\boldsymbol{\theta }}^{n}$ {here. In fact, the prior is placed over }$%
\boldsymbol{\theta }$, {and the prior over }$P_{\boldsymbol{\theta }}^{n}$ {%
merely implied. Hence, the incongruous appearance of {conditioning data in
the prior (through the definition of }}$P_{\boldsymbol{\theta }}^{n}$){\ is
of no concern. {It is }simply used to {define} the particular function of }$%
\boldsymbol{\theta }${\ (}$P_{\boldsymbol{\theta }}^{n}$) {that is our
ultimate object of interest, {and that function conditions on (past)
observed data, as it is a predictive {distribution}.} Second, the criterion
function that defines the update in (\ref{post}) is, of course, comprised of
the sequential one-step-ahead predictives, }$P_{\boldsymbol{\theta }}^{t},${%
\ for }$t=0,1,...,n-1.$

{The use of the non-likelihood-based update in (\ref{post}) mimics various
generalizations of the standard Bayesian \textit{inferential }paradigm that }%
have been proposed.{\ Such generalizations replace the likelihood with the
exponential of a problem-specific loss function; the goal being to produce
useful inference in the realistic setting in which the true DGP is unknown,
and the correct likelihood function thus unavailable. This literature has
its roots in the `Gibbs posteriors' of {\cite{Zhang2006a}, \cite{Zhang2006b}}
and {\cite{jiang2008}}, in which the exponential of a general loss function
replaces the likelihood function in the Bayesian update. However, it is
arguably \cite{bissiri2016general}, }{{and the subsequent related work in (%
\textit{inter alia}) }}{{\cite{holmes2017assigning}}}, {{\cite%
{lyddon2019general}}} {and \cite{Syring2019}, that have given the method its
recent prominence} in the statistics and econometrics literature.

{The PAC-Bayes algorithms used in machine learning are also characterized,
in part, by exponential functions of general losses. The focus therein is on
loss defined with respect to }\textit{predictors},\textit{\ }{rather than
parameters}; {hence the particular connection with our approach. Our work
is, however, quite distinct from PAC-Bayes. Most notably, the updating
mechanism in (\ref{post}) is expressed in terms of a class of plausible
conditional predictive \textit{distributions}, rather than point predictors,
and the `loss function' defined explicitly in terms of a proper scoring
rule. We also entertain predictive models that feature in the statistics
and/or econometrics literature, and provide asymptotic validation of the
method in this context.\footnote{{The PAC-Bayes method also encompassess
up-dates based on so-called `tempered', or `power' likelihoods, in which
robustness to model misspecification is sought by raising the likelihood
function associated with an assumed model to a particular power. See {\cite%
{grunwald2017}}, {\cite{holmes2017assigning}} and {\cite{miller2019robust}}
for recent examples in which such modified likelihoods feature. }We refer to 
{{{\cite{guedj2019}}} for a thorough review of PAC-Bayes, including the
methods and terminology used in that setting.}}}

{The update in (\ref{post}) is \textit{coherent}} in the sense that the
posterior that results {from updating the prior using two sets of
observations in one step, is the same as that produced by two sequential
updates. Proof of this property follows that of \cite{bissiri2016general}
and exploits the exponential form of the first term on the right-hand-side
of (\ref{post}), in addition to certain conditions on }$w_{n}$ {to be made
explicit below. Indeed, the appearance of }$w_{n}$ {serves to distinguish (%
\ref{post}) from what would be an extension (to the predictive setting) of
the loss-based inference approach adopted by \cite{CH03}.\footnote{%
We note that a negatively-oriented score can be viewed as a relevant measure
of `loss' in a predictive setting. Moreover, it is also possible to define
the loss associated with predictive inaccuracy using functions that are not
formally defined as scoring rules (see, for example, %
\citealp{pesaran2004decision}). However, we give emphasis to scoring rules
in this paper, making brief note only of the applicability of our method to
more general loss functions in the Discussion.}}

{In the case where }$w_{n}=1$ and\textbf{\ }$S(P_{\boldsymbol{\theta }%
}^{t},y_{t+1})=\ln p\left( y_{t+1}|\mathcal{F}_{t},\boldsymbol{\theta }%
\right) ,${\ with }$p\left( y_{t+1}|\mathcal{F}_{t},\boldsymbol{\theta }%
\right) ${\ denoting the predictive density (or mass) function associated
with the class} $\mathcal{P}^{t}$, {the update in (\ref{post}) }obviously
defaults {to the conventional likelihood-based update of the prior defined
over }$\boldsymbol{\theta }$. {We refer hereafter to this special case as
`exact Bayes', }and acknowledge that, given the presumption of
misspecification, there is no sense in which exact Bayes remains the `gold
standard'. {{This case remains, however, a }critical benchmark in the
numerical work{, }}{{\ in which the degree of misspecification of}%
{\normalsize \textbf{\ }$\mathcal{P}^{t}$ }{will be seen to influence the
relative out-of-sample performance of the conventional {Bayesian update}. }}

{We can summarize the posterior in (\ref{post}) by producing} a
simulation-based estimate of the mean predictive: 
\begin{equation}
\mathbb{E}_{w}\left[ P_{\boldsymbol{\theta }}^{n}|\mathbf{y}_{n}\right]
=\int P_{\boldsymbol{\theta }}^{n}\text{d}\Pi _{w}(P_{\boldsymbol{\theta }%
}^{n}|\mathbf{y}_{n}).  \label{Eq:pred_density_BHW_general}
\end{equation}%
However, it is equally feasible to construct measures that capture the
variability of the posterior, such as quantiles or the posterior variance. {%
Moreover, we can use various} {graphical techniques to visualize} the
variation of the {predictives themselves} and understand the way in which
posterior variation over the class $\mathcal{P}^{t}$ impacts on predictive
accuracy \textit{per se}.

Before moving on, we quickly demonstrate the usefulness of this new approach
to Bayesian prediction, {and the predictive gains that it can reap}, using a
simple toy example.

\subsection{A toy example: ARCH(1)\label{toy_eg}}

{We produce predictive distributions for a financial return generated from }%
a latent stochastic volatility model with {a }skewed {marginal distribution}%
, {with precise details of this true DGP to be given in Section \ref%
{section:simulaexcer}. The predictive class, }{\normalsize $\mathcal{P}^{t}$%
, }{is defined by an }ARCH {model }of order 1 (ARCH(1)) with Gaussian
errors, $y_{t}=\theta _{1}+\sigma _{t}{\epsilon _{t},}$ ${\epsilon _{t}\sim
i.i.d.N\left( 0,1\right) ,}$ ${\sigma _{t}^{2}}{=\theta _{2}+\theta
_{3}\left( y_{t-1}-\theta _{1}\right) ^{2},}$ {with }{$\boldsymbol{\theta }%
=\left( \theta _{1},\theta _{2},\theta _{3}\right) ^{\prime }$, }{and $\pi
\left( \boldsymbol{\theta }\right) \propto \frac{1}{\theta _{2}}\times I%
\left[ \theta _{2}>0,\theta _{3}\in \lbrack {0,1)}\right] ]$ }(with $I$\ the
indicator function){\ }{defining a prior density over }$\boldsymbol{\theta }$%
.

For $p\left( y_{t+1}|\mathcal{F}_{t},\boldsymbol{\theta }\right) $ denoting
the predictive density function associated with {the }Gaussian ARCH(1)
class, we implement FBP using the following two scoring rules:{%
\begin{align}
S_{\text{LS}}(P_{\boldsymbol{\theta }}^{t},y_{t+1})& =\ln p\left( y_{t+1}|%
\mathcal{F}_{t},\boldsymbol{\theta }\right)  \label{ls_prelim} \\
S_{\text{CS}}(P_{\boldsymbol{\theta }}^{t},y_{t+1})& =\ln p\left( y_{t+1}|%
\mathcal{F}_{t},\boldsymbol{\theta }\right) I\left( y_{t+1}\in A\right) +%
\left[ \ln \int_{A^{c}}p\left( y|\mathcal{F}_{t},\boldsymbol{\theta }\right)
dy\right] I\left( y_{t+1}\in A^{c}\right) .  \label{csr_prelim}
\end{align}%
}As noted{\ }already, use {of the log score }({LS}){, (\ref{ls_prelim}), in (%
\ref{post}) (with }$w_{n}=1$) {yields the conventional likelihood-based
Bayesian update, and we label the results based on this score as exact Bayes}
as a consequence. The score in (\ref{csr_prelim}) is the {censored
likelihood score} ({CS}){\ introduced by \cite{diks2011likelihood}, and
applied by }\cite{opschoor2017combining}{{\ to the prediction of financial
returns.\ This score rewards predictive accuracy over the region of interest}
$A$ ({with} $A^{c}${\ indicating the complement of this region). Here we
report results solely for }$A$ {defining} the lower {and upper tail of the
predictive distribution, as determined respectively by the 10\% and 90\%
quantile of the empirical distribution of }$y_{t}${. We label the results
based on the use of (\ref{csr_prelim}) in (\ref{post}) (also using }$w_{n}=1$%
) {as }FBP-{CS}$_{<10\%}$ and FBP-{CS}$_{>90\%}.$}

{Postponing discussion of the full design details until Section \ref%
{section:simulaexcer}, we record in Table }\ref{toy}{\ out-of-sample results
based on repeated computation of (\ref{Eq:pred_density_BHW_general}) using
expanding windows to produce }(via{\ Markov chain Monte Carlo) draws from (%
\ref{post}). Using a total of 2,000 out-of-sample values, the average {LS}
(computed across the 2,000 mean predictives) and the average }{CS for the
lower and upper 10\% tail (denoted by CS}$_{<10\%}${\ and }{CS}$_{>90\%}$ {%
respectively}) {are computed for each of the three different updating
methods. }Recalling that we use positively{-oriented scores, the largest
average score, according to each out-of-sample evaluation method, is
highlighted in bold.}

\begin{table}[tbp]
\centering%
\begin{tabular}{lllll}
\hline\hline
&  & \multicolumn{3}{c}{\textbf{Out-of-sample score}} \\ \cline{3-5}
&  & \multicolumn{1}{c}{LS} & \multicolumn{1}{c}{\ CS$_{<10\%}$} & 
\multicolumn{1}{c}{CS$_{>90\%}$} \\ 
\textbf{Updating method} &  & \multicolumn{1}{c}{} & \multicolumn{1}{c}{} & 
\multicolumn{1}{c}{} \\ \cline{1-1}
&  & \multicolumn{1}{c}{} & \multicolumn{1}{c}{} & \multicolumn{1}{c}{} \\ 
Exact Bayes &  & \multicolumn{1}{c}{\textbf{-1.3605}} & -0.4089 & -0.2745 \\ 
FBP-CS$_{<10\%}$ &  & \multicolumn{1}{c}{-1.4420} & \textbf{-0.3943} & 
-0.3833 \\ 
FBP-CS$_{>90\%}$ &  & \multicolumn{1}{c}{-3.0067} & -1.4157 & \textbf{%
-0.2397 } \\ \hline\hline
\end{tabular}%
\caption{{\protect\footnotesize Predictive accuracy of FBP, using the
ARCH(1) predictive class. The rows in each panel refer to the update method
used. The columns refer to the out-of-sample measure used to compute the
average scores. The figures in bold are the largest average scores according
to a given out-of-sample measure.}}
\label{toy}
\end{table}

{We see that use of the {CS} {rule} in the posterior update yields {better
out-of-sample }performance, \textit{as measured by that score}, in both the
upper and lower tails. In absolute terms, the gain of `focusing' is more
substantial in the upper tail than the lower tail, and in Section \ref%
{section:simulaexcer} we shall see why this is so. {The} average LS produced
by the exact Bayes ({LS-based) }update is {also} larger than the average LS
produced by both }FBP-{CS}$_{<10\%}$ and FBP-{CS}$_{>90\%}$.

{In summary, focusing works, and the following theoretical results {give
some insight into} why.}

\section{Bayesian and Frequentist Agreement\label{theory}}

Whilst the elements of $\mathcal{P}^{t}$ may, in principle, be either
parametric or nonparametric conditional distributions, in the remainder we
focus on the parametric case to {simplify} the analysis, leaving rigorous
analysis of nonparametric conditionals for later study. However, we remind
the reader that this reduction to parametric conditionals {covers both }the
canonical case where the elements of $\mathcal{P}^{t}$ are indexed by a
finite-dimensional parameter, in which case $\Theta $ is a Euclidean space,
as well as the case where the elements in $\mathcal{P}^{t} $ are (a finite
collection of) mixtures of predictives, in which case $\Theta $ denotes
either the weights of the mixture, or the combination of the weights and the
unknown parameters of the constituent predictives.

\subsection{Choosing $w_{n}$ \label{w}}

With reference to the conventional Bayesian approach to inference on the
unknown parameters, $\boldsymbol{\theta }$, which characterize an assumed
DGP, the posterior density, 
\begin{equation}
\pi (\boldsymbol{\theta }|\mathbf{y}_{n})\propto \ell (\mathbf{y}_{n}|%
\boldsymbol{\theta })\pi (\boldsymbol{\theta }),  \label{exact_Bayes}
\end{equation}%
{where }$\ell (\mathbf{y}_{n}|\boldsymbol{\theta })$ {denotes the likelihood
function, }arises via a decomposition of the joint probability distribution
for $\boldsymbol{\theta }$ and the random vector $\mathbf{y}_{n}.$ As such,
the representation of $\pi (\boldsymbol{\theta }|\mathbf{y}_{n})$ as
proportional to the product of a density (or mass) function for $\mathbf{y}%
_{n}$, and the prior for $\boldsymbol{\theta }$, reflects the usual calculus
of probability distributions, and provides a natural `weighting' between the
likelihood and the prior.

Once one moves away from this conventional framework, and replaces the {%
likelihood} with an alternative mechanism through which the data provides
information about $\boldsymbol{\theta }$, {this natural weighting is lost}. {%
Instead, a} subjective choice {must} be made regarding the relative weight
given to prior and data-based information in the production of the
posterior, with the scale factor $w_{n}$ {in (\ref{post}) }denoting this
subjective choice of weighting. \cite{bissiri2016general} propose several
methods for choosing $w_{n}$, including annealing methods,
hyper-parametrization of $w_{n}$, and setting $w_{n}$ to ensure the
equivalence of the expected `loss' of the prior and data-based components of
(\ref{post}). The authors also suggest choosing $w_{n}$ to ensure correct
frequentist coverage of posterior credible intervals, plus the use of priors
that are conjugate to the weighted data-based criterion.\footnote{%
Further proposals on the choice of $w_{n}$ can be found in \cite%
{holmes2017assigning}, \cite{lyddon2019general} and {\cite{Syring2019}.}}

In contrast, our interest is not {in inference on} $\boldsymbol{\theta }$ 
\textit{per se}, but in forecast accuracy. Given this goal, from a
theoretical standpoint, our only concern is that, for $\{w_{n}:n\geq 1\}$ a
chosen scaling sequence, the FBP {posterior measure concentrates} onto the
element of $\mathcal{P}^{n}$ that is most accurate in the chosen scoring
rule, which is defined by the following value in $\Theta ${:} 
\begin{equation}
\boldsymbol{\theta }_{\ast }=\argmax_{\boldsymbol{\theta }\in \Theta
}\lim_{n\rightarrow \infty }\mathbb{E}\left[ {S}_{n}(P_{\boldsymbol{\theta }%
}^{n})/n\right] .  \label{eq:opt}
\end{equation}%
{As the following }result demonstrates, this concentration occurs for any
reasonable choice of $w_{n}$.

\begin{lemma}
\label{lem:one} \label{Theorem1} Assume Assumptions \ref{ass:post}-\ref%
{ass:expand} in Appendix \ref{app:A} are satisfied, {and denote the }%
{\normalsize FBP }{posterior density function }{by $\pi _{w}\left( 
\boldsymbol{\theta }|\mathbf{y}_{n}\right) $.} If the sequence $\{w_{n}\}$
satisfies $\lim_{n}w_{n}=C$, $0<C<\infty $, the posterior density $\pi
_{w}(\cdot |\mathbf{y}_{n})$ converges to $P_{\boldsymbol{\theta }_{\ast
}}:=\lim_{n\rightarrow \infty }P_{\boldsymbol{\theta }_{\ast }}^{n},$\textbf{%
\ }the limit of the predictive defined by $\boldsymbol{\theta }_{\ast }$, at
rate $1/\sqrt{n}$.
\end{lemma}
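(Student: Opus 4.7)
The plan is to establish a Bernstein--von Mises type concentration result for $\pi_{w}(\boldsymbol{\theta}|\mathbf{y}_n)$ adapted to the score-based update and to the misspecified setting $G\notin\mathcal{P}^n$. Writing $\bar{S}(\boldsymbol{\theta}):=\lim_{n}\mathbb{E}[S_n(P_{\boldsymbol{\theta}}^n)/n]$, so that by (\ref{eq:opt}) $\boldsymbol{\theta}_{\ast}$ is the unique maximizer of $\bar{S}$, I would proceed in two standard steps: (i) show that $\pi_{w}$ places asymptotically all its mass in shrinking neighborhoods of $\boldsymbol{\theta}_{\ast}$; and (ii) localize via $\boldsymbol{h}=\sqrt{n}(\boldsymbol{\theta}-\boldsymbol{\theta}_{\ast})$ and prove that the density of $\boldsymbol{h}$ under $\pi_{w}$ converges to a Gaussian. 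Convergence of the induced posterior over predictives to $P_{\boldsymbol{\theta}_{\ast}}=\lim_{n}P_{\boldsymbol{\theta}_{\ast}}^n$ at rate $1/\sqrt{n}$ then follows by continuity of $\boldsymbol{\theta}\mapsto P_{\boldsymbol{\theta}}^n$.

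For step (i), I would examine the ratio $\pi_{w}(\boldsymbol{\theta}|\mathbf{y}_n)/\pi_{w}(\boldsymbol{\theta}_{\ast}|\mathbf{y}_n)=\exp\{w_n[S_n(P_{\boldsymbol{\theta}}^n)-S_n(P_{\boldsymbol{\theta}_{\ast}}^n)]\}\cdot\pi(\boldsymbol{\theta})/\pi(\boldsymbol{\theta}_{\ast})$. Uniform convergence of $S_n/n$ to $\bar{S}$ on $\Theta$ (Assumption \ref{ass:post}), together with identification of $\boldsymbol{\theta}_{\ast}$ and boundedness of $w_n$, implies that $w_n[S_n(P_{\boldsymbol{\theta}}^n)-S_n(P_{\boldsymbol{\theta}_{\ast}}^n)]\leq -cn$ for some $c>0$ on any set $\{\|\boldsymbol{\theta}-\boldsymbol{\theta}_{\ast}\|>\delta\}$, with probability tending to one. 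Combined with the lower bound on the normalizing constant obtained from integrating the quadratic approximation of step (ii) over a $1/\sqrt{n}$-ball around $\boldsymbol{\theta}_{\ast}$, this yields $\Pi_{w}(\{\|\boldsymbol{\theta}-\boldsymbol{\theta}_{\ast}\|>\delta\}|\mathbf{y}_n)=o_{p}(1)$.

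For step (ii), a second-order Taylor expansion of $S_n(P_{\boldsymbol{\theta}}^n)$ about $\boldsymbol{\theta}_{\ast}$ gives
\begin{equation*}
w_n\bigl[S_n(P_{\boldsymbol{\theta}_{\ast}+\boldsymbol{h}/\sqrt{n}}^n)-S_n(P_{\boldsymbol{\theta}_{\ast}}^n)\bigr]=w_n\,\boldsymbol{h}^{\prime}\frac{\nabla S_n(P_{\boldsymbol{\theta}_{\ast}}^n)}{\sqrt{n}}-\tfrac{1}{2}\boldsymbol{h}^{\prime}\biggl[-w_n\frac{\nabla^{2}S_n(P_{\tilde{\boldsymbol{\theta}}}^n)}{n}\biggr]\boldsymbol{h},
\end{equation*}
for some $\tilde{\boldsymbol{\theta}}$ on the segment. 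Under Assumption \ref{ass:expand} (smoothness of $S$ together with moment and mixing conditions), $-n^{-1}\nabla^{2}S_n(P_{\boldsymbol{\theta}_{\ast}}^n)\to H(\boldsymbol{\theta}_{\ast}):=-\nabla^{2}\bar{S}(\boldsymbol{\theta}_{\ast})\succ 0$ in probability, and a central limit theorem yields $n^{-1/2}\nabla S_n(P_{\boldsymbol{\theta}_{\ast}}^n)=O_{p}(1)$. Since $w_n\to C\in(0,\infty)$, the exponent above converges uniformly on compact $\boldsymbol{h}$-sets to a quadratic form with Hessian $CH(\boldsymbol{\theta}_{\ast})$. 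With continuity of $\pi(\cdot)$ at $\boldsymbol{\theta}_{\ast}$ allowing the prior factor to be replaced by $\pi(\boldsymbol{\theta}_{\ast})$ to leading order, the localized density converges to a Gaussian with covariance $[CH(\boldsymbol{\theta}_{\ast})]^{-1}$, delivering the stated $1/\sqrt{n}$ concentration.

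The main obstacle is that the summands $\{S(P_{\boldsymbol{\theta}}^t,y_{t+1})\}$ are neither independent nor identically distributed, and depend on the entire past through $\mathcal{F}_t$; hence the uniform law of large numbers required in step (i) and the LLN/CLT for $\nabla S_n$ and $\nabla^{2}S_n$ in step (ii) cannot be borrowed off-the-shelf from i.i.d.\ misspecified BvM results (e.g.\ Kleijn--van~der~Vaart). They must be obtained under whatever ergodicity, mixing, moment and envelope conditions are imposed by Assumptions \ref{ass:post}--\ref{ass:expand}, typically via empirical-process arguments for stationary dependent sequences. A secondary subtlety worth flagging is that the limiting covariance $[CH(\boldsymbol{\theta}_{\ast})]^{-1}$ is not the frequentist sandwich variance of a score-maximizing estimator; however, because the lemma only asserts $1/\sqrt{n}$ concentration and not correct coverage, any finite $C>0$ suffices, and no calibration of $w_n$ is required at this stage.
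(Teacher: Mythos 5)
Your outline is correct and is in substance the same route the paper takes: Lemma \ref{lem:one} is proved there as a corollary of a misspecified Bernstein--von Mises result (Proposition \ref{prop:bvm}), obtained by localizing around $\boldsymbol{\theta}_{\ast}$ and showing the localized posterior converges to a Gaussian. Two differences are worth flagging. First, you propose to \emph{derive} the quadratic expansion from a second-order Taylor expansion plus a ULLN, LLN and CLT for the dependent summands, and you correctly identify that as the hard analytic work; the paper sidesteps this entirely by \emph{assuming} the stochastic expansion, the asymptotic normality of $\boldsymbol{\Delta}_{n}(\boldsymbol{\theta}_{\ast})$, the convergence of $H_{n}$, and the remainder control directly as high-level conditions (Assumption \ref{ass:expand}, in the style of Chernozhukov and Hong), with identification supplied by Assumption \ref{ass:ident} --- not by Assumption \ref{ass:post}, which concerns only the prior, so your attribution of the uniform convergence to Assumption \ref{ass:post} is off. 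Second, your sketch handles the compact region $\|\boldsymbol{h}\|\leq M$ and the far region $\|\boldsymbol{\theta}-\boldsymbol{\theta}_{\ast}\|>\delta$, but is silent on the intermediate region $M\leq\|\boldsymbol{h}\|\leq\delta\sqrt{n}$; in the paper this is a separate step (``Area 2'') and is precisely where the remainder bound $|R_{n}(\boldsymbol{\theta})|=o_{p}(v_{n}^{2}\|\boldsymbol{\theta}-\boldsymbol{\theta}_{\ast}\|^{2})$ of Assumption \ref{ass:expand}.4 and the Gaussian tail domination are needed; uniform convergence on compact $\boldsymbol{h}$-sets alone does not close this. The paper also uses the polynomial tail bound on the prior in Assumption \ref{ass:post} to control the far region when $\Theta$ is non-compact and to get convergence of posterior moments, which your argument would need if it is to deliver more than set-wise concentration. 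Your closing observation that $[CH(\boldsymbol{\theta}_{\ast})]^{-1}$ is not the sandwich variance, and that this is immaterial for a rate statement, is correct and matches the paper's remark that the result holds for any $w_{n}\rightarrow C\in(0,\infty)$.
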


\begin{remark}
The above result demonstrates that, if we restrict our analysis to a class
of parametric predictives, \textit{FBP asymptotically concentrates,} at rate 
$1/\sqrt{n}$, onto the predictive that is most accurate according to the
scoring rule $P_{\boldsymbol{\theta }}^{t}\mapsto \lim_n\mathbb{E}\left[S(P_{%
\boldsymbol{\theta }}^{t},\cdot )/n\right]$.
\end{remark}

\begin{remark}
The conditions for the above result are discussed in Appendix \ref{app:A}
and are similar to the standard regularity conditions for parametric $M$%
-estimators, along with some uniform control on the tail of the prior $\Pi $%
. These conditions are similar to those used elsewhere in the literature,
e.g., {\cite{CH03}}. Interestingly, Lemma \ref{Theorem1} is valid for a wide
variety of\textbf{\ }$\{w_{n}\}$. In Sections \ref{section:simulaexcer} and %
\ref{empapp}\textit{,} we detail the particular values of $w_{n}$ {that we
use to produce our numerical predictions.}
\end{remark}

\subsection{Merging}

In the previous section, we have seen that, for a reasonable choice of $%
w_{n} $, the FBP posterior concentrates on the element of $\mathcal{P}^{n}$
that is most accurate for prediction under the chosen scoring rule. In this
section, we compare the behavior of predictions obtained from the FBP
posterior with those {that would be }obtained {using direct optimization of
an expected score criterion to produce a frequentist point estimate of} $%
\boldsymbol{\theta }$, {and the associated predictive that conditions on
this point estimate.}

Define the following predictive measures{\ 
\begin{align}
P_{w}^{n}(\cdot )& =\int_{\Theta }P_{\boldsymbol{\theta }}^{n}(\cdot )\text{d%
}\Pi _{w}(P_{\boldsymbol{\theta }}^{n}|\mathbf{y}_{n}),  \label{P_fore} \\
P_{\ast }^{n}(\cdot )& =\int_{\Theta }P_{\boldsymbol{\theta }}^{n}(\cdot )%
\text{d}\delta _{\boldsymbol{\theta }_{\ast }},  \label{P_true}
\end{align}%
where $\delta _{\boldsymbol{\theta }_{\ast }}$ denotes the Dirac measure at
the point $\boldsymbol{\theta }=\boldsymbol{\theta }_{\ast }$, for $%
\boldsymbol{\theta }_{\ast }$ defined in \eqref{eq:opt}. The }mean{\
predictive $P_{w}^{n}(\cdot )$ defines a distribution for the random variable%
} $y_{n+1}$, conditional {on observed} data $\mathbf{y}_{n}$, and where our
uncertainty about the members of the predictive class, $\mathcal{P}^{n}$, is
integrated out using the {posterior} $\Pi _{w}(\cdot |\mathbf{y}_{n})$, for
some choice of tuning sequence $w_{n}$.\footnote{%
The expression in (\ref{P_fore}) is just a more formal representation of (%
\ref{Eq:pred_density_BHW_general}).} In contrast, the predictive $P_{\ast
}^{n}(\cdot )$ {in (\ref{P_true}) }denotes the optimal predictive obtained
by {maximizing the expected score}. {Clearly, obtaining $P_{\ast }^{n}$ is
infeasible in practice. {Instead,}} the following estimated value of {{$%
\boldsymbol{\theta }_{\ast }$, }}$\widehat{\boldsymbol{\theta }}:=\arg
\min_{P_{\boldsymbol{\theta }}^{n}\in \mathcal{P}^{n}}{S}_{n}(P_{\boldsymbol{%
\theta }}^{n})/n,${\ }is generally used in place of {$\boldsymbol{\theta }%
_{\ast }$. }Under the same regularity conditions as in Lemma \ref{Theorem1},
we can derive the asymptotic behavior of $\widehat{\boldsymbol{\theta }}$.

\begin{lemma}
\label{lem:two}Under Assumptions \ref{ass:post}-\ref{ass:expand} in Appendix %
\ref{app:A}, if $\widehat{\boldsymbol{\theta }}$ is consistent for $%
\boldsymbol{\theta }_{\ast }$, and if $S_{n}(P_{\widehat{\boldsymbol{\theta }%
}_{n}}^{n})\geq S_{n}(P_{{\boldsymbol{\theta }}_{\ast }}^{n})+o_{p}(1/\sqrt{n%
})$, then $\sqrt{n}(\widehat{\boldsymbol{\theta }}-{{\boldsymbol{\theta }%
_{\ast }}})\Rightarrow \mathcal{N}(0,W),$ where $W=H^{-1}VH^{-1}$ and 
\begin{equation*}
V:=\lim_{n\rightarrow \infty }\text{Var}\left[ \sqrt{n}\left\{ \frac{%
\partial }{\partial \boldsymbol{\theta }}{S}_{n}(P_{\boldsymbol{\theta }%
_{\ast }}^{n})-\mathbb{E}\left[ \frac{\partial }{\partial \boldsymbol{\theta 
}}{S}_{n}(P_{\boldsymbol{\theta }_{\ast }}^{n})\right] \right\} \right] ;%
\text{ }H:=\plim_{n\rightarrow \infty }\mathbb{E}\left[ \frac{\partial }{%
\partial \boldsymbol{\theta }\partial \boldsymbol{\theta }^{\prime }}{S}%
_{n}(P_{\boldsymbol{\theta }}^{n})|_{\boldsymbol{\theta }={{\boldsymbol{%
\theta }_{\ast }}}}\right] .
\end{equation*}
\end{lemma}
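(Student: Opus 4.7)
The plan is to treat $\widehat{\boldsymbol{\theta}}$ as an (approximate) M-estimator and obtain its limiting distribution via the standard quadratic-expansion argument used in M-estimation theory (see, e.g., Theorem 5.23 of van der Vaart, 1998), adapted to the time-series dependence inherent in $S_n(P_{\boldsymbol{\theta}}^n) = \sum_{t=0}^{n-1} S(P_{\boldsymbol{\theta}}^t, y_{t+1})$. The argument hinges on three building blocks: a population first-order condition at $\boldsymbol{\theta}_{\ast}$, a CLT for the score evaluated at $\boldsymbol{\theta}_{\ast}$, and a uniform law of large numbers (ULLN) for the Hessian in a neighbourhood of $\boldsymbol{\theta}_{\ast}$.

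First, since $\boldsymbol{\theta}_{\ast}$ is the interior maximizer of $\boldsymbol{\theta} \mapsto \lim_n \mathbb{E}[S_n(P_{\boldsymbol{\theta}}^n)/n]$ by \eqref{eq:opt}, the regularity in Assumptions \ref{ass:post}--\ref{ass:expand} permits an interchange of limit, expectation and differentiation on a neighbourhood of $\boldsymbol{\theta}_{\ast}$, yielding $\lim_n n^{-1}\mathbb{E}[\partial S_n(P_{\boldsymbol{\theta}_{\ast}}^n)/\partial \boldsymbol{\theta}] = 0$. A CLT for the centred score process then delivers
$$\frac{1}{\sqrt{n}} \frac{\partial}{\partial \boldsymbol{\theta}} S_n(P_{\boldsymbol{\theta}_{\ast}}^n) \Rightarrow \mathcal{N}(0, V).$$
Next, a two-term Taylor expansion of $S_n(P_{\boldsymbol{\theta}}^n)$ about $\boldsymbol{\theta}_{\ast}$, evaluated at $\widehat{\boldsymbol{\theta}}$, combined with the near-maximality condition $S_n(P_{\widehat{\boldsymbol{\theta}}}^n) - S_n(P_{\boldsymbol{\theta}_{\ast}}^n) \geq o_p(1/\sqrt{n})$ and the assumed consistency of $\widehat{\boldsymbol{\theta}}$, gives the standard linearization
$$\sqrt{n}(\widehat{\boldsymbol{\theta}} - \boldsymbol{\theta}_{\ast}) = -H^{-1} \cdot \frac{1}{\sqrt{n}} \frac{\partial}{\partial \boldsymbol{\theta}} S_n(P_{\boldsymbol{\theta}_{\ast}}^n) + o_p(1).$$
The Hessian term is replaced by its probability limit $H$ via a ULLN applied to $n^{-1}\partial^2 S_n/\partial \boldsymbol{\theta}\partial \boldsymbol{\theta}'$; this substitution is valid at any intermediate point between $\widehat{\boldsymbol{\theta}}$ and $\boldsymbol{\theta}_{\ast}$ produced by the Taylor expansion, by consistency of $\widehat{\boldsymbol{\theta}}$ and the continuity of the Hessian in $\boldsymbol{\theta}$ implied by the regularity. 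Slutsky's theorem then yields $\sqrt{n}(\widehat{\boldsymbol{\theta}} - \boldsymbol{\theta}_{\ast}) \Rightarrow \mathcal{N}(0, H^{-1} V H^{-1})$, as claimed.

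The main obstacle is establishing the score CLT and the Hessian ULLN in a setting where the summands $S(P_{\boldsymbol{\theta}}^t, y_{t+1})$ are generally non-stationary and serially dependent, as each $P_{\boldsymbol{\theta}}^t$ conditions on the expanding $\sigma$-field $\mathcal{F}_t$. If Assumptions \ref{ass:post}--\ref{ass:expand} encode strong-mixing with appropriate summability together with moment domination of the score and its first two derivatives, then the heterogeneous dependent-data CLTs and ULLNs of Wooldridge--White type apply directly. A cleaner alternative, available whenever $\{\partial S(P_{\boldsymbol{\theta}_{\ast}}^t, y_{t+1})/\partial \boldsymbol{\theta}\}$ can be shown to form a martingale difference sequence with respect to $\{\mathcal{F}_t\}$ — which is natural when $P_{\boldsymbol{\theta}_{\ast}}^t$ is the conditionally optimal element of the predictive class under a proper score — is to invoke a martingale CLT, lending a transparent interpretation to $V$. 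Either route reduces Lemma \ref{lem:two} to the standard M-estimator CLT template, with the sandwich form $W = H^{-1}VH^{-1}$ arising naturally as a consequence of the anticipated misspecification of $\mathcal{P}^n$ relative to the true $G$.
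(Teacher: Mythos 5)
Your proposal follows the same broad template as the paper --- both arguments are explicitly modelled on Theorem 5.23 of van der Vaart (1998) --- but the two proofs are fleshed out quite differently, and the differences matter. The paper does \emph{not} Taylor-expand $S_n$ or invoke a CLT and ULLN for the score and Hessian: those ingredients are packaged directly into the high-level Assumption \ref{ass:expand}, which posits a stochastic quadratic expansion $S_n(\boldsymbol{\theta})-S_n(\boldsymbol{\theta}_*)=v_n(\boldsymbol{\theta}-\boldsymbol{\theta}_*)'\boldsymbol{\Delta}_n(\boldsymbol{\theta}_*)/v_n-\tfrac{1}{2}v_n(\boldsymbol{\theta}-\boldsymbol{\theta}_*)'H_nv_n(\boldsymbol{\theta}-\boldsymbol{\theta}_*)+R_n(\boldsymbol{\theta})$ together with asymptotic normality of $\boldsymbol{\Delta}_n/v_n$, convergence of $H_n$, and uniform control of $R_n$. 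The paper's proof then proceeds in two steps: first it establishes the rate $\|\widehat{\boldsymbol{\theta}}-\boldsymbol{\theta}_*\|=O_p(v_n^{-1})$ by plugging the near-maximality inequality into the expansion and bounding $\boldsymbol{h}_n=v_n(\widehat{\boldsymbol{\theta}}-\boldsymbol{\theta}_*)$; second, it compares the expansion at $\widehat{\boldsymbol{\theta}}$ with the expansion at the oracle point $\boldsymbol{\theta}_*+H^{-1}\boldsymbol{\Delta}_n/v_n^2$, completes the square, and uses negative definiteness of $-H_n$ to conclude $\|\boldsymbol{h}_n-H^{-1}\boldsymbol{\Delta}_n/v_n\|=o_p(1)$. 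Your route instead derives the linearization from a two-term Taylor expansion plus Slutsky, and then discusses how to \emph{verify} the score CLT and Hessian ULLN from mixing or martingale-difference structure. That verification effort is genuinely additional content --- the paper explicitly defers primitive verification of Assumptions \ref{ass:ident}--\ref{ass:expand} for general scoring rules to future work --- and your martingale-difference observation is a nice way to interpret $V$. What your approach gives up is generality: the abstract expansion covers non-smooth criteria and non-$\sqrt{n}$ rates $v_n$, whereas a classical mean-value Taylor argument needs twice differentiability.

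The one place where your sketch is genuinely loose is the passage from near-maximality to the linearization. The hypothesis of the lemma is only $S_n(P_{\widehat{\boldsymbol{\theta}}}^n)\geq S_n(P_{\boldsymbol{\theta}_*}^n)+o_p(1/\sqrt{n})$, i.e.\ the \emph{criterion} at $\widehat{\boldsymbol{\theta}}$ nearly dominates its value at $\boldsymbol{\theta}_*$; this does not by itself give you the first-order condition $\partial S_n(P_{\widehat{\boldsymbol{\theta}}}^n)/\partial\boldsymbol{\theta}=o_p(\sqrt{n})$ that the standard ``expand the score and solve'' argument requires. You would either need to assume $\widehat{\boldsymbol{\theta}}$ is an exact interior maximizer (so the score vanishes), or do what the paper does: establish the $O_p(1/\sqrt{n})$ rate separately from the inequality, and then run the comparison-of-quadratics argument, which uses only near-maximality. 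As written, your proposal silently conflates these two formulations; the conclusion is recoverable, but the step ``combined with the near-maximality condition\dots gives the standard linearization'' needs the intermediate rate argument spelled out.
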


Using the estimator $\widehat{\boldsymbol{\theta }}$, we can define the
following frequentist equivalent to the FBP predictive:%
\begin{equation}
P_{\widehat{\boldsymbol{\theta }}}^{n}(\cdot )=\int_{\Theta }P_{\boldsymbol{%
\ \theta }}^{n}(\cdot )\text{d}\mathcal{N}(\widehat{\boldsymbol{\theta }}%
,W/n),  \label{eq:freq}
\end{equation}%
where $\mathcal{N}(\widehat{\boldsymbol{\theta }},W/n)$ denotes the normal
distribution function with mean $\widehat{\boldsymbol{\theta }}$ and
variance-covariance matrix $W/n$. Using Lemmas \ref{Theorem1} and \ref%
{lem:two}, we can deduce the following relationship between the frequentist
predictive in equation \eqref{eq:freq} and the FBP predictive in (\ref%
{P_fore}).{\setcounter{theorem}{0}}

\begin{theorem}
{\ \label{thm2} Under Assumptions \ref{ass:post}-\ref{ass:expand} in
Appendix \ref{app:A}, for $\lim_{n}w_{n}=C>0$, the predictive distributions $%
P_{w}^{n}(\cdot )$ }and {$P{_{\widehat{\boldsymbol{\theta }}}^{n}}(\cdot )$
satisfy: 
\begin{equation*}
\sup_{B\in \mathcal{F}}|P_{w}^{n}(B)-P_{\widehat{\boldsymbol{\theta }}%
}^{n}(B)|=o_{p}(1).
\end{equation*}%
}
\end{theorem}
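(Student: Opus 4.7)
The natural strategy is to route the comparison through the oracle predictive $P_{\boldsymbol{\theta}_{\ast}}^{n}$ defined in \eqref{P_true} and apply the triangle inequality,
\begin{equation*}
\sup_{B\in\mathcal{F}}\bigl|P_{w}^{n}(B)-P_{\widehat{\boldsymbol{\theta}}}^{n}(B)\bigr|\;\leq\;\sup_{B}\bigl|P_{w}^{n}(B)-P_{\boldsymbol{\theta}_{\ast}}^{n}(B)\bigr|\;+\;\sup_{B}\bigl|P_{\widehat{\boldsymbol{\theta}}}^{n}(B)-P_{\boldsymbol{\theta}_{\ast}}^{n}(B)\bigr|,
\end{equation*}
and show each summand is $o_{p}(1)$. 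Both $\Pi_{w}(\cdot|\mathbf{y}_{n})$ and $\mathcal{N}(\widehat{\boldsymbol{\theta}},W/n)$ are mixing measures on $\Theta$ that, by Lemmas~\ref{Theorem1} and~\ref{lem:two} respectively, concentrate at $\boldsymbol{\theta}_{\ast}$ with rate $1/\sqrt{n}$. The idea is then to convert that $\boldsymbol{\theta}$-space concentration into total-variation closeness on $\mathcal{F}$ via smoothness of the map $\boldsymbol{\theta}\mapsto P_{\boldsymbol{\theta}}^{n}$.

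For either summand, pulling the scalar $P_{\boldsymbol{\theta}_{\ast}}^{n}(B)$ inside the mixing integral gives, uniformly in $B$,
\begin{equation*}
\bigl|P_{w}^{n}(B)-P_{\boldsymbol{\theta}_{\ast}}^{n}(B)\bigr|\;\leq\;\int_{\Theta}\bigl\|P_{\boldsymbol{\theta}}^{n}-P_{\boldsymbol{\theta}_{\ast}}^{n}\bigr\|_{TV}\,\dx\Pi_{w}(\boldsymbol{\theta}|\mathbf{y}_{n}),
\end{equation*}
and analogously with $\mathcal{N}(\widehat{\boldsymbol{\theta}},W/n)$ in place of $\Pi_{w}$. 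Under the regularity imposed in Appendix~\ref{app:A} (in particular, smoothness of the predictive density in $\boldsymbol{\theta}$), a Scheff\'e/mean-value argument yields a local Lipschitz constant $L_{n}=O_{p}(1)$ with $\|P_{\boldsymbol{\theta}}^{n}-P_{\boldsymbol{\theta}_{\ast}}^{n}\|_{TV}\leq L_{n}\|\boldsymbol{\theta}-\boldsymbol{\theta}_{\ast}\|$ on a neighborhood of $\boldsymbol{\theta}_{\ast}$; the prior tail condition ensures the mass that each mixing measure assigns outside that neighborhood is negligible. Combining this with $\int\|\boldsymbol{\theta}-\boldsymbol{\theta}_{\ast}\|\,\dx\Pi_{w}=O_{p}(1/\sqrt{n})$ from Lemma~\ref{Theorem1}, and with $\int\|\boldsymbol{\theta}-\boldsymbol{\theta}_{\ast}\|\,\dx\mathcal{N}(\widehat{\boldsymbol{\theta}},W/n)\leq\|\widehat{\boldsymbol{\theta}}-\boldsymbol{\theta}_{\ast}\|+\operatorname{tr}(W/n)^{1/2}=O_{p}(1/\sqrt{n})$ from Lemma~\ref{lem:two}, each summand is $O_{p}(1/\sqrt{n})=o_{p}(1)$, and the triangle inequality closes the argument.

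The main obstacle is establishing the \emph{uniform-in-$B$} Lipschitz bound on $\boldsymbol{\theta}\mapsto P_{\boldsymbol{\theta}}^{n}$: the $\Theta$-space concentration is cheap once Lemmas~\ref{Theorem1} and~\ref{lem:two} are in hand, but converting it into a total-variation bound on $\mathcal{F}$ is where the regularity of the parametric family does real work. It is worth emphasising that, unlike a classical Bernstein--von Mises statement, the theorem does \emph{not} require $\Pi_{w}(\cdot|\mathbf{y}_{n})$ itself to merge with $\mathcal{N}(\widehat{\boldsymbol{\theta}},W/n)$ as measures on $\Theta$; indeed, for generic $w_{n}\to C$ the FBP posterior variance is of order $[CnH]^{-1}$ rather than $W/n$. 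What matters is only that both mixing measures are $O_{p}(1/\sqrt{n})$-concentrated on the \emph{same} limit point $\boldsymbol{\theta}_{\ast}$, which is enough for the induced predictives on $y_{n+1}$ to merge in total variation.
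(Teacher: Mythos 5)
Your proposal follows the same skeleton as the paper's proof: both route the comparison through the oracle predictive $P_{\ast}^{n}=P_{\boldsymbol{\theta}_{\ast}}^{n}$, exploit the fact that $\Pi_{w}(\cdot|\mathbf{y}_{n})$ and $\mathcal{N}(\widehat{\boldsymbol{\theta}},W/n)$ are mixing measures concentrating on the same point $\boldsymbol{\theta}_{\ast}$, and close with a triangle inequality; your closing observation that no Bernstein--von Mises merging of the two measures on $\Theta$ is needed is exactly the logic the paper uses. The implementation differs, though, in a way worth noting. The paper works with the squared Hellinger distance: by convexity and Jensen's inequality it bounds $\rho_{H}^{2}\{P_{\ast}^{n},P_{w}^{n}\}$ by $\int\rho_{H}^{2}\{P_{\ast}^{n},P_{\boldsymbol{\theta}}^{n}\}\,\text{d}\Pi_{w}$, splits this integral over the set $V_{\epsilon}=\{\boldsymbol{\theta}:\rho_{H}^{2}\{P_{\ast}^{n},P_{\boldsymbol{\theta}}^{n}\}>\epsilon/4\}$ and its complement, kills the $V_{\epsilon}$ term by posterior concentration, and only at the end converts to total variation via $\rho_{TV}^{2}\leq 4\rho_{H}^{2}$ (after a Cauchy--Schwarz/triangle step combining the two Hellinger bounds). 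That route needs only that the map $\boldsymbol{\theta}\mapsto\rho_{H}^{2}\{P_{\ast}^{n},P_{\boldsymbol{\theta}}^{n}\}$ is small on a neighborhood of $\boldsymbol{\theta}_{\ast}$ and that the mixing measures are consistent; no rate enters. Your route instead demands a local Lipschitz bound $\|P_{\boldsymbol{\theta}}^{n}-P_{\boldsymbol{\theta}_{\ast}}^{n}\|_{TV}\leq L_{n}\|\boldsymbol{\theta}-\boldsymbol{\theta}_{\ast}\|$ with $L_{n}=O_{p}(1)$ uniformly in $n$ (nontrivial, since $P_{\boldsymbol{\theta}}^{n}$ conditions on the growing $\mathcal{F}_{n}$), together with the $1/\sqrt{n}$ concentration rate; in exchange you get the stronger conclusion $O_{p}(1/\sqrt{n})$ rather than $o_{p}(1)$. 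Be aware that the Lipschitz condition you invoke is \emph{not} among Assumptions \ref{ass:post}--\ref{ass:expand}, which constrain only the prior and the score criterion $S_{n}$ --- though in fairness the paper's own step ``$\boldsymbol{\theta}_{\ast}\notin V_{\epsilon}$, therefore $\Pi_{w}[V_{\epsilon}|\mathbf{y}_{n}]=o_{p}(1)$'' tacitly assumes $V_{\epsilon}$ excludes a neighborhood of $\boldsymbol{\theta}_{\ast}$, i.e., continuity of the same map, so both arguments lean on unstated regularity of $\boldsymbol{\theta}\mapsto P_{\boldsymbol{\theta}}^{n}$; yours simply leans harder.
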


\begin{remark}
Theorem \ref{thm2} states that, for any sequence $\lim_{n}w_{n}=C$, the FBP
predictive $P_{w}^{n}(\cdot )$\textbf{\ }and the {(feasible) }optimal
frequentist predictive $P_{\widehat{\boldsymbol{\theta }}}^{n}(\cdot )$ will
agree asymptotically. The above result is colloquially referred to as
`merging' (\citealp{blackwell1962}). This result states that, in terms of
the total variation distance, the predictions obtained by FBP and those
obtained by a frequentist making predictions according to an optimal score
estimator $\widehat{\boldsymbol{\theta }}$ will asymptotically agree.
\end{remark}

\section{Simulation Study: Financial Returns{\label{section:simulaexcer}}}

\subsection{Overview of the simulation design\label{overview}}

We first\ illustrate our approach with a simulation exercise that nests the
toy example in Section \ref{toy_eg}. Adopting a simulation approach allows
us to choose predictive classes that misspecify the (known) DGP to varying
degrees, and to thereby measure the relative performance of FBP in different
misspecification settings. We use both numerical summaries and animated
graphics to illustrate the predictive accuracy of FBP, using a range of
scores to define the {update}. With a slight abuse of terminology, in what
follows we refer to FBP-LS solely as `exact Bayes', reserving the
abbreviation FBP for all other instances of the focused method.

We address three questions. \textit{First}, what sample size is required in
practice for the asymptotic results to be on display? That is, how large
does $n$ have to be for FBP based on a particular scoring rule to provide
the best out-of-sample performance according to that same rule? \textit{%
Second}, does the degree of misspecification affect the dominance of FBP
over exact Bayes? \textit{Third}, does misspecification have a differential
impact on FBP implemented via different scoring rules?

With the aim of replicating the stylized features of financial returns data,
we generate a logarithmic return, $y_{t}$, from

{\ 
\begin{align}
h_{t}& =\bar{h}+a(h_{t-1}-\bar{h})+\sigma _{h}\eta _{t}  \label{h} \\
z_{t}& =e^{0.5h_{t}}\varepsilon _{t}  \label{z} \\
y_{t}& =D^{-1}\left( F_{z}\left( z_{t}\right) \right) ,  \label{skew}
\end{align}%
where $\eta _{t}\sim i.i.d.N\left( 0,1\right) $ and $\varepsilon _{t}\sim
i.i.d.N\left( 0,1\right) $ are independent processes, $\{z_{t}\}_{t=1}^{n}$
is a }latent process with stochastic (logarithmic) variance, $h_{t}$, and $%
F_{z}$ is the implied marginal distribution of $z_{t}$ (evaluated via
simulation). The `observed' return, $y_{t}$, is then generated as in (\ref%
{skew}), via the (inverse) distribution function associated with a
standardized skewed-normal distribution, $D$.\footnote{%
For more details of the specific skewed-normal specification that we adopt
see \cite{azzalini1985class}.} This process of inversion imposes on $%
\{y_{t}\}_{t=1}^{n}$ the dynamics of the stochastic volatility model
represented by Equations (\ref{h}) and (\ref{z}) (via $F_{z}$) in addition
to the negative skewness that is characteristic of the empirical
distribution of a financial return.\footnote{%
See \cite{smith2018inversion} for discussion of this type of implied copula\
model.}

We adopt three alternative parametric predictive classes, $\mathcal{P}^{t}:$%
\ i) Gaussian ARCH(1) (reproduced here for convenience and numbered for
future reference): \textbf{\ }{\ 
\begin{equation}
y_{t}=\theta _{1}+\sigma _{t}{\epsilon _{t};}\text{ }{\epsilon _{t}\sim
i.i.d.N\left( 0,1\right) ;}\text{ }{\sigma _{t}^{2}}{=\theta _{2}+\theta
_{3}\left( y_{t-1}-\theta _{1}\right) ^{2};}  \label{y1}
\end{equation}%
}ii) Gaussian GARCH(1,1):{\ 
\begin{equation}
y_{t}=\theta _{1}+\sigma _{t}{\epsilon _{t};}\text{ }{\epsilon _{t}\sim
i.i.d.N\left( 0,1\right) ;}\text{ }{\sigma _{t}^{2}}{=\theta _{2}+\theta
_{3}\left( y_{t-1}-\theta _{1}\right) ^{2}}+\theta _{4}\sigma _{t-1}^{2};
\label{y2}
\end{equation}%
and} iii) a mixture of the predictives of two models: an ARCH(1) model with
a skewed-normal innovation, and a GARCH(1,1) model with a Gaussian
innovation. We represent the elements of class iii) as:{\ 
\begin{equation}
p\left( y_{t+1}|\mathcal{F}_{t},\theta _{1}\right) =\theta _{1}p_{1}\left(
y_{t+1}|\mathcal{F}_{t},{\boldsymbol{\psi }_{1}}\right) +\left( 1-\theta
_{1}\right) p_{2}\left( y_{t+1}|\mathcal{F}_{t},{\boldsymbol{\psi }_{2}}%
\right) .  \label{y3}
\end{equation}%
}In (\ref{y1}) and (\ref{y2}) the respective parameter vectors,\textbf{\ }{$%
\boldsymbol{\theta }=\left( \theta _{1},\theta _{2},\theta _{3}\right)
^{\prime }$ }and {$\boldsymbol{\theta }=\left( \theta _{1},\theta
_{2},\theta _{3},\theta _{4}\right) ^{\prime }$}, characterize the specific
predictive model, with the GARCH(1,1) model being a more flexible (and, in
this sense, `less misspecified') representation of the true DGP than is the
ARCH(1) model. In (\ref{y3}) the {parameter vectors $\boldsymbol{\psi }_{1}$
and $\boldsymbol{\psi }_{2}$ }that characterize the constituent predictives
in the mixture are taken as known, and the scalar weight parameter{\textbf{\ 
}}$\theta _{1}$ is the only unknown. The component $p_{2}\left( y_{t+1}|%
\mathcal{F}_{t},{\boldsymbol{\psi }_{2}}\right) $ is specified according to
the model in (\ref{y2}), whilst\textbf{\ }$p_{1}\left( y_{t+1}|\mathcal{F}%
_{t},{\boldsymbol{\psi }_{1}}\right) $ represents the predictive associated
with the model in (\ref{y1}), but with $\epsilon _{t}$ distributed as a
standardized\textbf{\ }skewed-normal variable with asymmetry parameter $%
\gamma $.\footnote{{The values imposed for }${\boldsymbol{\psi }}_{1}$ {and }%
${\boldsymbol{\psi }_{2}}$ {are the maximum likelihood estimators of these
parameters.}} With the proviso made that the parameters of the constituent
models are taken as given, the {linear pool} is arguably the most flexible
form of predictive considered here, and defines the least misspecified
predictive class in this sense. The prior over each of the three predictive
classes is determined by the prior over the relevant parameter (vector) $%
\boldsymbol{\theta }$, respectively:\textbf{\ }i)\textbf{\ }{$\pi \left( 
\boldsymbol{\theta }\right) \propto \frac{1}{\theta _{2}}\times I\left[
\theta _{2}>0,\theta _{3}\in \lbrack {0,1)}\right] ]$ }(as defined earlier){%
, }ii){\ $\pi \left( \boldsymbol{\theta }\right) \propto \frac{1}{\theta _{2}%
}\times I\left[ \theta _{2}>0,\theta _{3}\in \lbrack 0,1),\theta _{4}\in
\lbrack {0,1)}\right] ]\times I({\theta _{3}+}\theta _{4}<1)$}, and iii)%
\textbf{\ }{$\pi $}$\left( \theta _{1}\right) \propto u$ (with $u$ uniform
on $(0,1)$).

We now implement FBP using the two scoring rules in (\ref{ls_prelim}) and (%
\ref{csr_prelim}), plus the continuously ranked probability score (CRPS),{%
\begin{equation}
S_{\text{CRPS}}(P_{\boldsymbol{\theta }}^{t},y_{t+1})=-\int_{-\infty
}^{\infty }\left[ P\left( y|\mathcal{F}_{t},\boldsymbol{\theta }\right)
-I(y\geq y_{t+1})\right] ^{2}dy,  \label{Eq:CRPSloss}
\end{equation}%
}where $P\left( y|\mathcal{F}_{t},\boldsymbol{\theta }\right) $ denotes the
predictive cumulative distribution function (cdf) associated with $p\left( y|%
\mathcal{F}_{t},\boldsymbol{\theta }\right) $. Proposed by \cite%
{gneiting2007strictly}, CRPS is sensitive to distance,\ rewarding the
assignment of high predictive mass near to the realized value of $y_{t+1}$.
It can be evaluated in closed form for the (conditionally) Gaussian
predictive classes i) and ii), using the third equation {provided in %
\citet[p.~367]{gneiting2007strictly}}.\ For predictive class iii),
evaluation is performed numerically using expression (17) in %
\citet[p.~367]{gneiting2011comparing}. In the case of the CS in (\ref%
{csr_prelim}), all components, including the integral $\int_{A^{c}}p\left( y|%
\mathcal{F}_{t},\boldsymbol{\ \theta }\right) dy$, have closed-form
representations for predictive classes i) and ii). For the third predictive
class, CS is computed as 
\begin{equation*}
S_{\text{CS}}(P_{\boldsymbol{\theta }}^{t},y_{t+1})=S_{\text{CS}}\left( P_{%
\boldsymbol{\psi _{1}}}^{t},y_{t+1}\right) +\log \left\{ \theta _{1}+\left(
1-\theta _{1}\right) \exp \left[ S_{\text{CS}}\left( P_{\boldsymbol{\psi _{2}%
}}^{t},y_{t+1}\right) -S_{\text{CS}}\left( P_{\boldsymbol{\psi _{1}}%
}^{t},y_{t+1}\right) \right] \right\} ,
\end{equation*}%
where\textbf{\ }$S_{\text{CS}}\left( P_{\boldsymbol{\psi _{1}}%
}^{t},y_{t+1}\right) $ and $S_{\text{CS}}\left( P_{\boldsymbol{\psi _{2}}%
}^{t},y_{t+1}\right) $ correspond to the censored scores for the two
constituent models, both having closed-form solutions.

As noted in Section \ref{toy_eg}, when either (\ref{ls_prelim}) or (\ref%
{csr_prelim}) is used in (\ref{post}) a scale of $w_{n}=1$ is adopted. This
is a natural choice, given that use of (\ref{ls_prelim}) defines the
(misspecified) likelihood function induced by the predictive class, and that
use of (\ref{csr_prelim}) is comparable to the specification of the
likelihood function for a censored random variable (%
\citealp{diks2011likelihood}). When (\ref{Eq:CRPSloss}) is used to define
the posterior {update} however, the interpretation of $\exp \left[
w_{n}S_{n}\left( P_{\boldsymbol{\theta }}^{n}\right) \right] $ as the
(unnormalized) probability distribution of a random variable is {lost, and} $%
w_{n}$ must be chosen with reference to some criterion for weighting $\exp %
\left[ w_{n}S_{n}\left( P_{\boldsymbol{\theta }}^{n}\right) \right] $ and $%
\Pi \left( P_{\boldsymbol{\theta }}^{n}\right) $. We choose to target a
value for $w_{n}$ that ensures a rate of posterior {update} - when using
CRPS - that is \textit{similar} to that of the {update} based on LS, by
defining 
\begin{equation}
w_{n}=\frac{\mathbb{E}_{\pi \left( \boldsymbol{\theta }|\mathbf{y}%
_{n}\right) }\left[ \sum_{t=0}^{n-1}S_{\text{LS}}(P_{\boldsymbol{\theta }%
}^{t},y_{t+1})\right] }{\mathbb{E}_{\pi \left( \boldsymbol{\theta }|\mathbf{y%
}_{n}\right) }\left[ \sum_{t=0}^{n-1}S_{\text{CRPS}}(P_{\boldsymbol{\theta }%
}^{t},y_{t+1})\right] }.  \label{Eq:set_w_specific_doc}
\end{equation}%
The subscript $\pi \left( \boldsymbol{\theta }|\mathbf{y}_{n}\right) $
indicates that the expectation is with respect to the exact posterior
distribution for $\boldsymbol{\theta }$. In practice,\textbf{\ }$w_{n}$\ is
estimated as%
\begin{equation}
\widehat{w}_{n}=\frac{\sum_{j=1}^{J}\left[ \sum_{t=0}^{n-1}S_{\text{LS}}(P_{%
\boldsymbol{\theta }^{(j)}}^{t},y_{t+1})\right] }{\sum_{j=1}^{J}\left[
\sum_{t=0}^{n-1}S_{\text{CRPS}}(P_{\boldsymbol{\theta }^{(j)}}^{t},y_{t+1})%
\right] },  \label{w_hat}
\end{equation}%
using $J$ draws of $\boldsymbol{\theta }$ from the $\pi \left( \boldsymbol{%
\theta }|\mathbf{y}_{n}\right) $, $\boldsymbol{\theta }^{(j)}$, $%
j=1,2,...,J. $ The link between specifying $w_{n}$ as in (\ref%
{Eq:set_w_specific_doc}) and achieving a rate of posterior {update} that
approximates that of exact Bayes, is detailed in Appendix~\ref{Appen:w_crps}%
. All details of the Markov chain Monte Carlo (MCMC) scheme used to perform
the posterior sampling are provided in Appendix \ref{comp}.

\subsection{A comment on the role of predictive combinations\label{comb}}

{Before we proceed to document and discuss the simulation results in the
following section, we comment briefly on the role played by predictive
combinations in the simulation exercise and, subsequently, in certain of our
empirical illustrations. }

{As noted} in the Introduction,{\ there is a well-established literature -
invoking both frequentist and Bayesian principles - in which the weights in
weighted combinations of predictives are either optimized (in the first
case) or up-dated (in the second) according to predictive criteria. The
frequentist literature includes work on linear combinations (or linear
pools), in which various measures of predictive accuracy, including scoring
rules, are used to define the criterion that is optimized to estimate the
weights. Relevant references here include {\cite{Hall2007}}, {\cite%
{Kascha2010}}, {\cite{Geweke2011}}, {\cite{Ganics2017}}, {\cite%
{opschoor2017combining}} and {Pauwels \textit{et al.} (2019)}.\footnote{{%
Closely related work appears in {\cite{Jore2010}.}}} Non-linear weighting
schemes (or non-linear transformations of linear schemes) - again estimated
via optimization of prediction-based criteria - are explored in {\cite%
{ranjan2010}}, {\cite{Clements2011}}, {\cite{gneiting2013}} and {\cite%
{Kap2015}}. The Bayesian} literature,{\ having access as it does to
posterior simulation schemes, has entertained more sophisticated (including
time-varying) weighting schemes, in which predictive performance influences
the posterior up-dates of the weights in one way or another. Key work here
includes }{\cite{Billio2013}}, {{\cite{casarin2015jss}, }\cite{casarin2015}, 
{\cite{casarin2016}}, {\cite{Pett2016}}, {\cite{Bassetti2018}, {\cite%
{BASTURK2019}}} and {\cite{casarin2020}}.\footnote{%
Conventional Bayesian model averaging (BMA) as applied to predictive
distributions is effectively driven by a log score criterion, given the
intrinsic connection between the predictive likelihood and the marginal
likelihood that underlies each BMA weight. (See, for example, %
\citealp{geweke2005}, Chapter 2.)}}

\textit{In principle}{, }any of the above combination schemes could be used
to construct the predictive class{\ }$\mathcal{P}^{t}${, with the }chosen{\
measure of predictive accuracy used to define the update in (\ref{post}).} {%
It is certainly the case that }more sophisticated combination schemes for
the constituent weights would require an alternative, and more
computationally intensive, posterior sampling scheme{\ than the
straightforward MCMC algorithm we have adopted for the simple linear pool; {%
nevertheless}, beyond {this} {issue}, the }principles{\ that underpin our
methodology would remain the same.\footnote{%
We conjecture that the melding of the nonparametric Bayesian approach in 
\cite{Bassetti2018} and our generalized Bayesian updating would be a
particularly fruitful{\ avenue for future exploration; as would be a merging
of the Bayesian predictive synthesis of Mike West and co-authors (e.g. %
\citealp{Johnson2018}) and our focused prediction method.}}}

{However, the key point is that} {the goal of yielding a more accurate
representation of the true DGP by }employing more sophisticated weighting
schemes is ancillary to the predictive philosophy underpinning our approach:
we are not concerned with trying to find a model that more accurately
represents the true DGP, but {with} ensuring that our chosen predictive
delivers accuracy in terms of our chosen loss measure. {Hence, we include
predictive combinations in certain numerical illustrations \textit{not} for
the purpose of building a better representation of the DGP \textit{per se},
but: first,} to highlight the fact that such forms of distribution \textit{%
can} be accommodated within our procedure; {and second}, as a way of
illustrating the effect on {the \textit{relative} performance of FBP and
exact Bayes }of using a predictive class that more accurately captures the
features of the true DGP than does any single model.

\subsection{Simulation results\label{sim}}

\subsubsection{Summary results based on mean predictives\label{sim_results}}

We generate{\ $2,500${\ observations} }of $y_{t}$ from the DGP in (\ref{h})-(%
\ref{skew}), using parameter values:{\ $a=0.9$, $\bar{h}=-0.4581$ }and{\ $%
\sigma _{h}=0.4173$, while }$D${\ }defines the standardized skewed-normal
distribution with shape parameter $\gamma =-5$, which produces an
empirically plausible degree of negative skewness. For each predictive
class, and for each score {update}, the exercise begins by using the
relevant computational scheme, as described in Appendix \ref{comp}, to
produce (after thinning) $M=4,000$ posterior draws of{\textbf{\ }}$%
\boldsymbol{\theta },$ $\boldsymbol{\theta }^{(j)}$, $j=1,2,..,M$, and,
hence, $M$ posterior draws of\textbf{\ }$p\left( \cdot |\mathcal{F}_{n},%
\boldsymbol{\theta }\right) $ (at any point in the support of\textbf{\ }$%
y_{n+1})$, which we denote simply by\textbf{\ }$p^{(j)}$, as indexed by the $%
j^{th}$ draw of $\boldsymbol{\theta }.${\ }This first set of posterior draws
is produced using the first{\textbf{\ }$n=500$ }values of {$y_{t}$ }in the
update in (\ref{post}). For each predictive class, six score updates are
employed, corresponding to (\ref{ls_prelim}) and (\ref{Eq:CRPSloss}), plus (%
\ref{csr_prelim}) with the region A defining (approximately) four tails of
the predictive distribution: lower 10\%, lower 20\%, upper 10\% and upper
20\%.\footnote{%
As noted in Section 2.3, the set $A,$ for any required tail probability, is
determined via reference to the \textit{empirical} distribution of $y_{t}$;
hence the use of the word `approximately'.} Hence, for each predictive
class, draws from six different{\textbf{\ }}$\Pi _{w}(P_{\boldsymbol{\theta }%
}^{n}|\mathbf{y}_{n})$ are produced.

Referencing the draws, $p^{(j)}$,\textbf{\ }$j=1,2,..,M$, from any one of
the six distinct posteriors,\textbf{\ }we first estimate the mean predictive
in (\ref{Eq:pred_density_BHW_general}) as: $\widehat{\mathbb{E}}_{w}\left[
P_{\boldsymbol{\theta }}^{n}|\mathbf{y}_{n}\right] =1/M\sum%
\nolimits_{j=1}^{M}p^{(j)},$ and compute the out-of-sample score of this
single predictive, based on the observed value of $y_{n+1}$, for period $%
n+1=501.$ The same six scores used in the in-sample updates are used to
produce these out-of-sample scores. The sample is then extended to $n=501$,
and the same exercise is repeated, with the out-of-sample scores computed
using the observed value of $y_{n+1}$, for time period $n+1=502.$ This
exercise is repeated 2,000 times, with the final set of out-of-sample scores
computed using the observed value of $y_{n+1}$ for time period $n+1=2,500.$
The average of the 2,000 scores is recorded in Table \ref{Table 1}:\ for
each update method, each out-of-sample evaluation method, and each
predictive class.

Expanding on the results reported in Section \ref{toy_eg}, Panels A, B and C
in Table \ref{Table 1} correspond respectively to the three predictive
classes: ARCH(1), GARCH(1,1) and the mixture. The rows in each panel refer
to the six distinct update methods, denoted in turn by: exact Bayes ( $%
\equiv $\ FBP-LS), FBP-CRPS, FBP-CS$_{<10\%}$, FBP-CS$_{<20\%}$, FBP-CS$%
_{>80\%}$ and FBP-CS$_{>90\%}$. The columns refer to the out-of-sample
measure used to compute the average scores: LS, CRPS, CS$_{<10\%}$, CS$%
_{<20\%}$, CS$_{>80\%}$ and CS$_{>90\%}.$ Numerical validation of the
asymptotic results occurs if the largest average scores (bolded) appear in
the diagonal positions in the table; that is, if using FBP with a particular 
\textit{focus} yields the best out-of-sample performance according to that
same measure of predictive accuracy.

\begin{table}[tbp]
\begin{center}
{\normalsize \ 
\scalebox{0.82}{
				\begin{tabular}{lcccccccc}
					\hline\hline
					                          &                  &                  &  &                  &                  &  &                  &                  \\
					                          &                            \multicolumn{8}{c}{\textbf{Panel A: ARCH(1) predictive class}}                             \\
					                          &                  &                  &  &                  &                  &  &                  &                  \\
					                          &                                   \multicolumn{8}{c}{\textbf{Out-of-sample score}}                                    \\ \cline{2-9}
					                          &                  &                  &  &                  &                  &  &                  &                  \\
					                          & \multicolumn{2}{c}{Center Focused}  &  &  \multicolumn{2}{c}{Left Focused}   &  &  \multicolumn{2}{c}{Right Focused}  \\
					                          &        LS        &       CRPS       &  &   CS$_{<10\%}$   &   CS$_{<20\%}$   &  &   CS$_{>80\%}$   &   CS$_{>90\%}$   \\
					\textbf{Updating method} &                  &                  &  &                  &                  &  &                  &                  \\ \cline{1-1}
					                          &                  &                  &  &                  &                  &  &                  &                  \\
					Exact Bayes               & \textbf{-1.3605} &     -0.5299      &  &     -0.4089      &     -0.6687      &  &     -0.4716      &     -0.2745      \\
					FBP-CRPS                  &     -1.3663      & \textbf{-0.5290} &  &     -0.4206      &     -0.6774      &  &     -0.4723      &     -0.2775      \\
					FBP-CS$_{<10\%}$          &     -1.4442      &     -0.5558      &  &     -0.3943      &     -0.6506      &  &     -0.5755      &     -0.3833      \\
					FBP-CS$_{<20\%}$          &     -1.4660      &     -0.5652      &  & \textbf{-0.3933} & \textbf{-0.6484} &  &     -0.6002      &     -0.4072      \\
					FBP-CS$_{>80\%}$          &     -2.0422      &     -0.5902      &  &     -0.9655      &     -1.3470      &  & \textbf{-0.4365} &     -0.2430      \\
					FBP-CS$_{>90\%}$          &     -3.0067      &     -0.6747      &  &     -1.4157      &     -2.0858      &  &     -0.4592      & \textbf{-0.2397} \\ \hline\hline
					                          &                  &                  &  &                  &                  &  &                  &                  \\
					                          &                           \multicolumn{8}{c}{\textbf{Panel B: GARCH(1,1) predictive class}}                           \\
					                          &                  &                  &  &                  &                  &  &                  &                  \\
					                          &                                   \multicolumn{8}{c}{\textbf{Out-of-sample score}}                                    \\ \cline{2-9}
					                          &                  &                  &  &                  &                  &  &                  &                  \\
					                          & \multicolumn{2}{c}{Center Focused}  &  &  \multicolumn{2}{c}{Left Focused}   &  &  \multicolumn{2}{c}{Right Focused}  \\
					                          &        LS        &       CRPS       &  &   CS$_{<10\%}$   &   CS$_{<20\%}$   &  &   CS$_{>80\%}$   &   CS$_{>90\%}$   \\
					\textbf{Updating method} &                  &                  &  &                  &                  &  &                  &                  \\ \cline{1-1}
					                          &                  &                  &  &                  &                  &  &                  &                  \\
					Exact Bayes               & \textbf{-1.3355} &     -0.5259      &  &     -0.3941      &      -0.6500       &  &      -0.4710      &     -0.2747      \\
					FBP-CRPS                  &     -1.3381      & \textbf{-0.5258} &  &     -0.3992      &     -0.6532      &  &     -0.4734      &     -0.2788      \\
					FBP-CS$_{<10\%}$          &     -1.3801      &     -0.5341      &  & \textbf{-0.3838} &     -0.6387      &  &     -0.5282      &     -0.3317      \\
					FBP-CS$_{<20\%}$          &     -1.4126      &      -0.5480      &  &     -0.3840      & \textbf{-0.6375} &  &      -0.5650      &      -0.3710      \\
					FBP-CS$_{>80\%}$          &     -2.0535      &     -0.5918      &  &     -0.9612      &      -1.3530      &  & \textbf{-0.4318} &     -0.2387      \\
					FBP-CS$_{>90\%}$          &     -3.1207      &     -0.6818      &  &     -1.4544      &     -2.1502      &  &     -0.4572      & \textbf{-0.2347} \\ \hline\hline
					                          &                  &                  &  &                  &                  &  &                  &                  \\
					                          &                            \multicolumn{8}{c}{\textbf{Panel C: Mixture predictive class}}                             \\
					                          &                  &                  &  &                  &                  &  &                  &                  \\
					                          &                                   \multicolumn{8}{c}{\textbf{Out-of-sample score}}                                    \\ \cline{2-9}
					                          &                  &                  &  &                  &                  &  &                  &                  \\
					                          & \multicolumn{2}{c}{Center Focused}  &  &  \multicolumn{2}{c}{Left Focused}   &  &  \multicolumn{2}{c}{Right Focused}  \\
					                          &        LS        &       CRPS       &  &   CS$_{<10\%}$   &   CS$_{<20\%}$   &  &   CS$_{>80\%}$   &   CS$_{>90\%}$   \\
					\textbf{Updating method} &                  &                  &  &                  &                  &  &                  &                  \\ \cline{1-1}
					                          &                  &                  &  &                  &                  &  &                  &                  \\
					Exact Bayes               & \textbf{-1.2901} &     -0.5241      &  &     -0.3898      &     -0.6448      &  &     -0.4363      &     -0.2447      \\
					FBP-CRPS                  &     -1.2975      & \textbf{-0.5234} &  & \textbf{-0.3868} & \textbf{-0.6418} &  &     -0.4476      &     -0.2557      \\
					FBP-CS$_{<10\%}$          &     -1.3048      &     -0.5236      &  &     -0.3871      &     -0.6422      &  &     -0.4536      &     -0.2610      \\
					FBP-CS$_{<20\%}$          &     -1.3029      &     -0.5235      &  &     -0.3871      &     -0.6421      &  &     -0.4523      &     -0.2599      \\
					FBP-CS$_{>80\%}$          &     -1.2902      &     -0.5250      &  &     -0.3921      &     -0.6472      &  &     -0.4325      &     -0.2407      \\
					FBP-CS$_{>90\%}$          &     -1.2902      &     -0.5250      &  &     -0.3922      &     -0.6472      &  & \textbf{-0.4324} & \textbf{-0.2406} \\ \hline\hline
				\end{tabular}
		} }
\end{center}
\caption{{\protect\footnotesize Predictive accuracy based on the six
different mean predictives. Panels A to C report the average out-of-sample
scores for the ARCH(1), GARCH(1,1) and mixture predictive class,
respectively. The rows in each panel refer to the update method used. The
columns refer to the out-of-sample measure used to compute the average
scores. The figures in bold are the largest average scores according to a
given out-of-sample measure.}}
\label{Table 1}
\end{table}

The results in Table \ref{Table 1} broadly validate the asymptotic theory.
With minor deviations, the expected appearance of bold figures on the main
diagonal of each panel is in evidence - most notably in Panels A and B.
Hence, with reference to the \textit{first }question outlined at the
beginning of Section \ref{overview}: {an} {initial }sample size exceeding $%
n=500$, {expanded to }$n=2,499$ {in the production of 2,000 one-step-ahead
predictions,} is sufficient for the use of in-sample focusing to reap
benefits out-of-sample.\footnote{%
We reiterate that in this numerical assessment of predictive performance
based on expanding estimation windows, there are two sample sizes that play
a role: i) the size of the estimation period on which the posterior (over
predictives) is based, and from which the mean (one-step-ahead) predictive
and numerical score are extracted; and ii) the size of the out-of-sample
period over which the \textit{average} (one-step-ahead) scores are computed.
With expanding estimation windows, an increase in the out-of-sample period
goes hand-in-hand with a continued increase in the estimation period, i.e.
an increase in $n$.} When there \textit{is} a deviation from the strict
diagonal pattern, such as in the CS$_{<10\%}$ column of Panel A and in the CS%
$_{<10\%}$, CS$_{<20\%}$ and CS$_{>80\%}$ columns of Panel C, the difference
between the relevant (non-diagonal) bold value and the value on the diagonal
is negligible.

With reference to the \textit{second }question, the out-of-sample dominance
of FBP over exact Bayes declines as the predictive class becomes less
misspecified. In particular, the results in Panel C - for the mixture
predictive class - reveal that the average scores computed using a given
out-of-sample measure are very similar for all six updating methods. The
extent of the misspecification of the true DGP {clearly }does matter.%
\footnote{%
For a large enough sample of course, for a predictive class that \textit{%
contains} the true DGP, any updating method based on a proper score should
(under regularity) recover the true predictive mechanism and, hence, should
yield predictive performance out-of-sample (however measured) that matches
that of an update based on an alternative proper score. See \cite%
{gneiting2007strictly} for an early exposition of this sort of point, in the
context of frequentist point estimation using scoring rules.}

With respect to the \textit{third }question: there are two notable results
regarding the \textit{differential} impact of the degree of misspecification
on the different versions of FBP. First, when the degree of misspecification
is most severe (as with the ARCH(1) and GARCH(1,1) predictive classes) use
of the {update} that focuses on the upper tail (FBP-CS$_{>80\%}$ or FBP-CS$%
_{>90\%}$) produces poor out-of-sample accuracy according to the LS and
lower tail measures (CS$_{<10\%}$ and CS$_{<20\%}$). We provide some
graphical insight into this specific phenomenon in Section \ref{animation};
however, the point is that focusing \textit{incorrectly }can\textit{\ }hurt,
in particular when the predictive class is a poor match for the true DGP.
Once misspecification of the predictive class is reduced, the performance of
both FBP-CS$_{>80\%}$ and FBP-CS$_{>90\%}$ - according to \textit{all}
out-of-sample measures - broadly matches that of the other updating methods,
as can be seen in Panel C.

The second differential impact of misspecification pertains to the exact
(misspecified) Bayesian update, relative to all four tail-focused methods
(FBP-CS$_{<10\%}$, FBP-CS$_{<20\%}$, FBP-CS$_{>80\%}$ and FBP-CS$_{>90\%}$).
For example, the values of CS$_{>90\%}$\ for FBP-CS$_{>90\%}$ (the three
bolded figures in the very last column of Table \ref{Table 1}) change very
little over the three panels, as the degree of misspecification lessens. A
similar comment applies to the three values of CS$_{<10\%}$\ for FBP-CS$%
_{<10\%}.$ In contrast, the improvement in performance in the tails (so the
values of CS$_{>90\%}$ and CS$_{<10\%}$) for exact Bayes, as one moves from
the most to the least misspecified predictive class, is more marked; which
makes sense. The focused methods do not aim to get the model correct;
instead, they are deliberately tailored to a particular predictive task
(accurate prediction of extreme values in this case). Hence,
misspecification of the model \textit{per se }matters less. The predictive
performance of exact Bayes, on the other hand, depends entirely on the match
between the model that underpins the method and the truth; there is nothing
else that exact Bayes brings to the table; if the model is wrong, prediction
(however measured) will be adversely affected by that error.

{To gauge the sensitivity of the findings to the size of the out-of-sample
evaluation period, we plot the average one-step-ahead score as a function of
the latter. In the quest for brevity, we perform this task for two
out-of-sample measures only: }\ CS{$_{<10\%}$ }and CS{$_{>90\%}.$ }In Panel
A of Figure \ref{Fig:relativeaccuracy} the cumulative average of CS{$%
_{<10\%} $ (}for 400 to 2,000 out-of-sample periods) is plotted for two
forms of in-sample updates only: exact Bayes (the dashed green line) and {%
FBP-}CS{$_{<10\%}$, (the blue full line). }Each of the three figures (A.1,
A.2, A.3) corresponds respectively to results for each of the three
predictive classes (ARCH(1), GARCH(1,1) and the mixture). In Panel B (B.1,
B.2 and B.3) the corresponding results for the cumulative average of CS{$%
_{>90\%}$ }are presented, based on exact Bayes (the dashed green line) and
FBP{-}CS{$_{>90\%}$, (the blue full line). }In all figures, the \textit{final%
} numerical values plotted correspond to the relevant values reported in
Table \ref{Table 1}.

Beginning with Figure A.1, we see that a sufficiently large of out-of-sample
evaluation period (exceeding approximately 600) is needed for the dominant
performance of FBP over exact Bayes to be in evidence visually; with
in-sample estimation periods exceeding $n=1,100$ contributing to these
average score results. However, beyond this point, the amount by which the
full line exceeds the dashed one stabilizes, reflecting the extent to which {%
FBP-}CS{$_{<10\%}$ }produces more accurate predictions of extreme (lower
tail) returns than does exact Bayes, asymptotically. Tallying with the
interpretation of the numerical results in Table \ref{Table 1}, the extent
to which {FBP-}CS{$_{<10\%}$ }is superior to exact Bayes is successively
less in Figures A.2 and A.3, with the dashed line `moving up' to match the
full line, as the misspecification of the predictive class is reduced. The
size of the evaluation period required to produce a visual distinction
between the out-of-sample performance of exact Bayes and {FBP-}CS{$_{<10\%}$ 
}is larger, the less misspecified is the class.

\begin{figure}[h]
\centering%
\begin{tabular}{cc}
\textbf{Panel A:} & \textbf{Panel B:} \\ 
\textbf{Lower tail accuracy} & \textbf{Upper tail accuracy} \\ 
&  \\ 
\includegraphics[scale= 0.7]{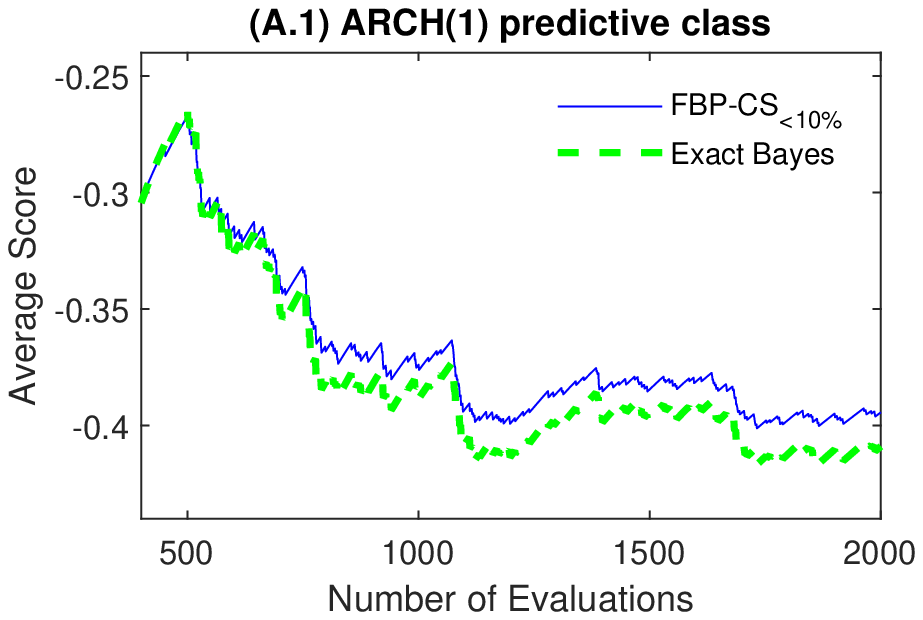} & 
\includegraphics[scale=
0.7]{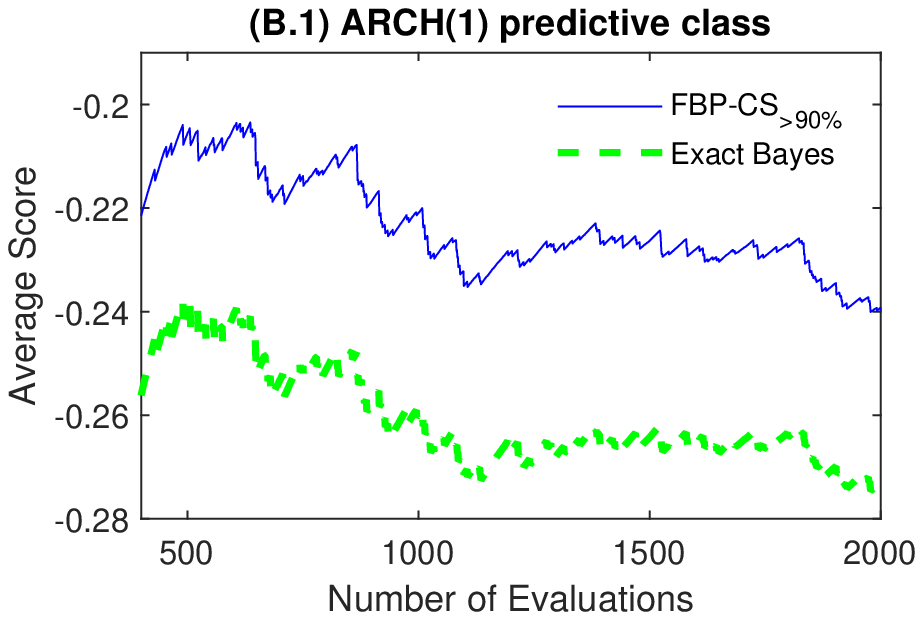} \\ 
&  \\ 
\includegraphics[scale= 0.7]{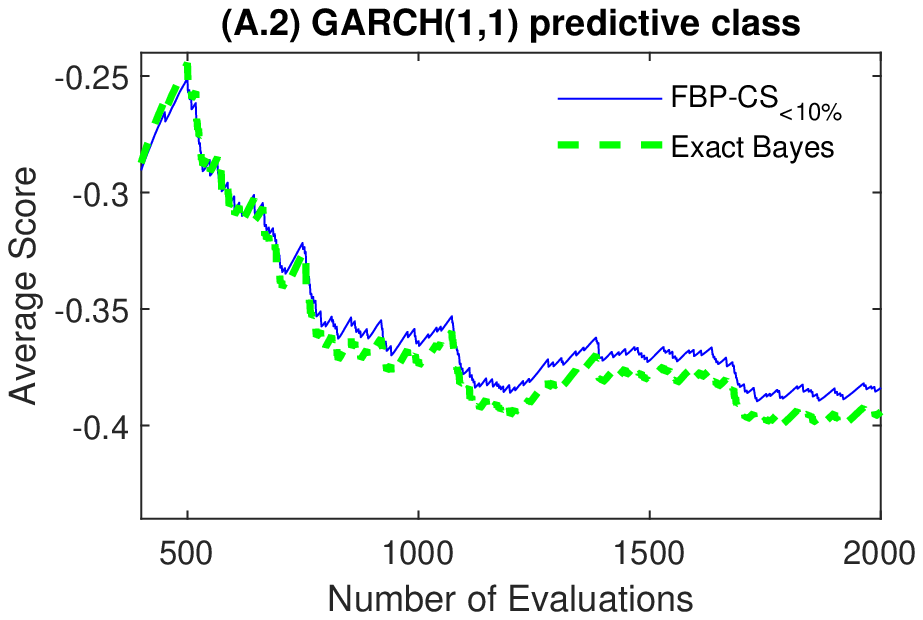} & 
\includegraphics[scale=
0.7]{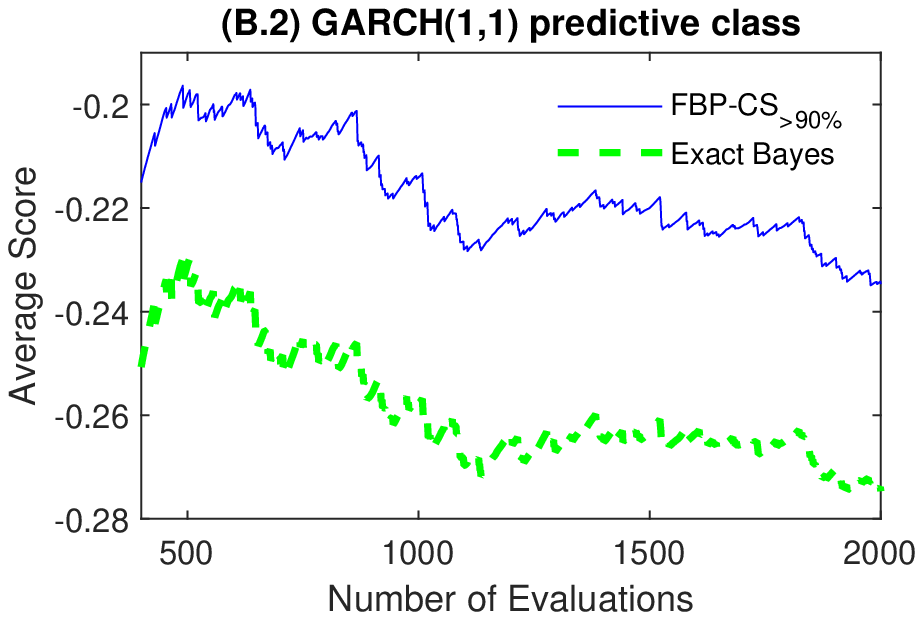} \\ 
&  \\ 
\includegraphics[scale= 0.7]{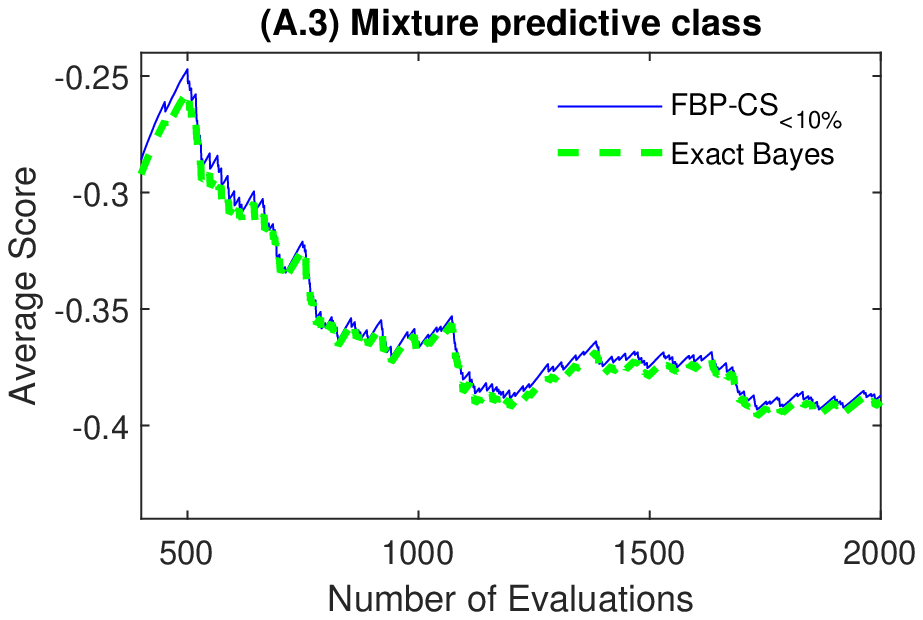} & 
\includegraphics[scale=
0.7]{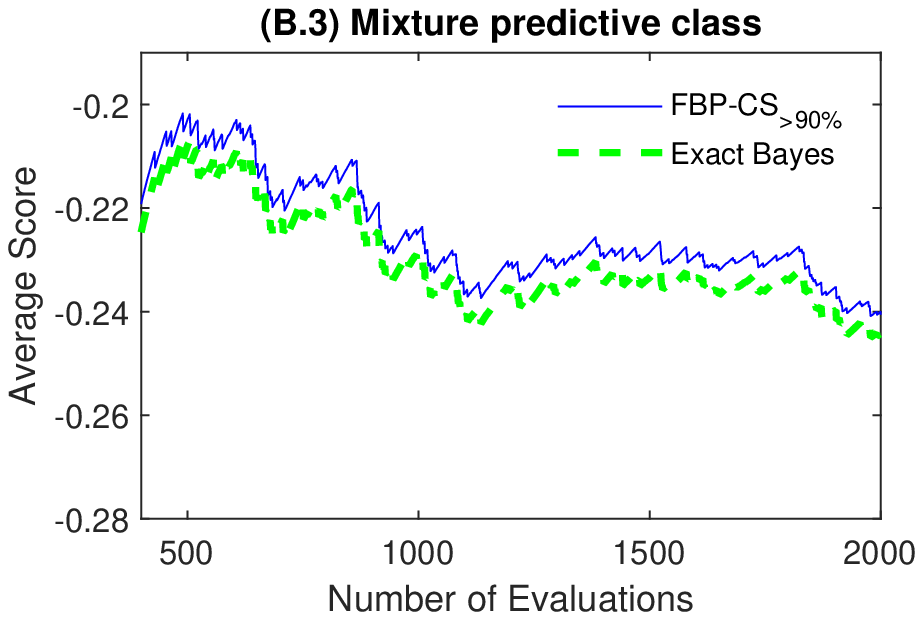}%
\end{tabular}%
\caption{{\protect\footnotesize Out-of-sample performance of exact Bayes and
CS-based FBP: relative predictive accuracy in the lower tail (Panel A) and
the upper tail (Panel B). The three plots in Panel A depict the cumulative
average (over an expanding evaluation period) of CS{$_{<10\%}$} for the
exact Bayes (dashed green line) and {FBP-}CS{$_{<10\%}$ (full blue line)
updates, using the ARCH(1), GARCH (1,1) and mixture predictive classes
respectively. The three plots in Panel B display the cumulative average of }%
CS{$_{>90\%}$ for the }exact Bayes (dashed green line) and {FBP-}CS{$%
_{>90\%} $ (full blue line) updates, using the three different predictive
classes.}}}
\label{Fig:relativeaccuracy}
\end{figure}

In Panel B, the superiority of FBP{-}CS{$_{>90\%}$ }over exact Bayes, in
terms of accurately predicting extremely \textit{large} returns is in stark
evidence. In this case, the relative performance of the two {updating}
methods is less affected by the move from the ARCH(1) to the GARCH(1,1)
predictive class. However, once again, {use of the} more flexible mixture of
predictives to underpin the exact Bayes update brings its performance much
closer to that of the focused {update}, with the accuracy of the latter
being reasonably robust to the choice of predictive class.

In the following section we provide some insight into \textit{why} the
focused {update} in the \textit{upper} tail reaps more benefit out-of-sample
than does the lower-tail {update}, relative to exact Bayes, and the role
that misspecification plays here.

\subsubsection{Animation of the mean predictives\label{animation}}

In Figures \ref{Fig:animate_low} and \ref{Fig:animate_up} we display
animated plots of the one-step-ahead mean predictives produced using
expanding windows of $n=500$ to $n=2,499$, and based solely on the (most
misspecified) Gaussian ARCH(1) predictive class. The mean predictives
produced by both {updating} methods (FBP and exact Bayes) are superimposed
upon the true predictive, produced using simulation from (\ref{h})-(\ref%
{skew}). Figure \ref{Fig:animate_low} presents the results for lower tail
focus ({FBP-}CS{$_{<10\%}$ versus exact Bayes) }and Figure \ref%
{Fig:animate_up} the results for upper tail focus (FBP{-}CS{$_{>90\%}$
versus exact Bayes). The vertical lines in each plot indicate the return
that defines the quantile }$A$ in (\ref{csr_prelim}). 
\begin{figure}[h]
\begin{center}
\animategraphics[controls,loop,scale =0.75]{3}{Animate_out_low}{}{} 
\end{center}
\caption{{\protect\footnotesize Animation over time of mean predictives
based on the exact Bayes (dashed green curve) and {FBP-CS$_{<10\%}$ (full
blue curve) updates. The red curve is the true predictive, produced using
simulation from (\protect\ref{h})-(\protect\ref{skew}). The predictive class
used is Gaussian ARCH(1). The black vertical line denotes the threshold used
in the FBP-CS$_{<10\%}$ update, in the production of the mean predictive for
time period }$n+1.$}}
\label{Fig:animate_low}
\end{figure}
\begin{figure}[h]
\begin{center}
\animategraphics[controls,loop,scale =0.75]{3}{Animate_out_up}{}{} 
\end{center}
\caption{{\protect\footnotesize Animation over time of mean predictives
based on the exact Bayes (dashed green curve) and {FBP-CS$_{>90\%}$ (full
blue curve) updates. The red curve is the true predictive, produced using
simulation from (\protect\ref{h})-(\protect\ref{skew}). The predictive class
used is Gaussian ARCH(1). The black vertical line denotes the threshold used
in the FBP-CS$_{>90\%}$ update, in the production of the mean predictive for
time period }$n+1.$}}
\label{Fig:animate_up}
\end{figure}

Two things are clear from Figure \ref{Fig:animate_low}: one, the lower tails
of both the {FBP-}CS{$_{<10\%}$ }and exact Bayes predictives are quite
similar; two, both tails are - for some time points - quite good at picking
up the shape of the true predictive tail, but with the {FBP-}CS{$_{<10\%}$ }%
predictive tail being a better match most of the time. These plots thus
provide some explanation of the summary results in Panel A (column 3) of
Table \ref{Table 1} and Panel A.1 of Figure 1, in which {FBP-}CS{$_{<10\%}$ }%
dominates exact Bayes, but with the improvement in forecast accuracy being
reasonably small, despite the misspecification of the predictive class.

In Figure \ref{Fig:animate_up} however, the animated display is very
different. The upper tail of the exact Bayes predictive \textit{consistently}
fails to pick up the shape of the true: the misspecified nature of the
Gaussian ARCH(1) model has a marked impact on predictive accuracy in this
region of the support of $y_{n+1}.$ In contrast, FBP{-}CS{$_{>90\%}$ }has
the flexibility to focus only what matters - upper tail predictive accuracy
- and, as such, produces predictives with upper tails that are much closer
in shape to the true, and which are often visually indistinguishable from
the true. These plots thus explain the clear numerical dominance of FBP{-}CS{%
$_{>90\%}$ }over exact Bayes in Panel A (column 6) of Table \ref{Table 1}
and Panel B.1 of Figure \ref{Fig:relativeaccuracy}.

We finish by noting that focus on upper tail accuracy \textit{does} - as
highlighted by the relevant figures in the middle columns of Panel A in
Table \ref{Table 1} - impact quite severely on the ability of FBP{-} CS{$%
_{>90\%}$ to }pick up the lower tail of the true predictive. This outcome
highlights the fact that the \textit{ex-ante }decision as to \textit{what }%
form of accuracy to focus on is critical, and most notably so in the very
misspecified case.

\subsubsection{The differential effect of posterior variation\label%
{Sect:sect433}}

When adopting the conventional Bayesian paradigm for prediction, a single
question needs to be addressed: which model (or set of models) is to be used
to produce the predictive distribution? Once that model (or set of models)
has been chosen, computational methods are used to integrate out the
posterior uncertainty associated with that choice, and a single (marginal)
predictive distribution thereby produced. Posterior parameter (and model)
uncertainty affects the location, shape, and degree of dispersion of the
marginal predictive, and any predictive conclusions drawn from it; however,
it is not the convention to explicitly quantify the impact of posterior
variation on prediction.

Our new proposal introduces an additional choice into the mix: which measure
of predictive accuracy is to drive the production of a predictive
distribution? Each different form of in-sample update serves as a different
`window' through which a choice of predictive model (or mixture of
predictive models) - and all posterior uncertainty associated with that
choice - impinges on predictive outcomes. For example, one choice of update
may yield a posterior distribution over a given predictive class that is
very diffuse; another choice may lead to a very concentrated posterior.
Hence, finite sample posterior variation itself has import, since it is not
unique, even given a particular choice of predictive class.

We illustrate this point using $M=4,000$ posterior draws from (\ref{post})
for both FBP-CS$_{<10\%}$\ and FBP-CS$_{>90\%}$, using the Gaussian ARCH(1)
predictive class, and for selected values of $n+1$ between $501$ and $2,500$%
. From the draws from (\ref{post}) based on the FBP-CS$_{<10\%}$ update, we
produce the corresponding $4,000$ values of the {ES }for period $n+1$\ for a
portfolio that is `long' in the asset,\footnote{%
See, for example, {\cite{embrechts1997}.}} 
\begin{equation}
\text{ES}_{0.1}\left( P_{\boldsymbol{\theta }}^{n}\right) =\bm{-}%
\int_{y<A_{0.1}}yp\left( y|\mathcal{F}_{n},\boldsymbol{\theta }\right) dy%
\text{.}  \label{ES_long}
\end{equation}%
The integral bound $A_{0.1}$ in (\ref{ES_long}) denotes the $10\%$\
quantile, and ES$_{0.1}\left( P_{\boldsymbol{\theta }}^{n}\right) $ denotes
the (negative of the) mean of the random variable $y_{n+1}$\ conditional on
the future return falling into the \textit{lower} 10\% tail of $p\left(
\cdot |\mathcal{F}_{n},\boldsymbol{\theta }\right) $; that is, the
(negative) expected return in a worst case scenario. For a `short'
portfolio, in which extremely large returns are harmful, the relevant
function is:%
\begin{equation}
\text{ES}_{0.9}\left( P_{\boldsymbol{\theta }}^{n}\right)
=\int_{y>A_{0.9}}yp\left( y|\mathcal{F}_{n},\boldsymbol{\theta }\right) dy,
\label{ES_short}
\end{equation}%
where $A_{0.9}$ in (\ref{ES_short}) denotes the $90\%$\ quantile, and ES$%
_{0.9}\left( P_{\boldsymbol{\theta }}^{n}\right) $ denotes the mean of $%
y_{n+1}$\ conditional on the future return falling into the \textit{upper}
10\% tail of $p\left( \cdot |\mathcal{F}_{n},\boldsymbol{\theta }\right) .$
In this case we produce $4,000$ values of ES$_{0.9}\left( P_{\boldsymbol{%
\theta }}^{n}\right) $ using the draws from (\ref{post}) based on the FBP-CS$%
_{>90\%}$ update. For the Gaussian ARCH(1) predictive class, ES$_{0.1}\left(
P_{\boldsymbol{\theta }}^{n}\right) $\ and ES$_{0.9}\left( P_{\boldsymbol{%
\theta }}^{n}\right) $ have closed-form solutions, for any given value of $%
\boldsymbol{\theta }.$

In each case we use kernel density estimation to produce an estimate of the
marginal posterior of the scalar ES, based on the $4,000$ draws from (\ref%
{post}).\textbf{\ }As a comparator in each case, we perform the same
exercise but using the exact (likelihood-based) Bayes update to define (\ref%
{post}).\ The `true' values of ES$_{0.1}$ and ES$_{0.9}$, computed using
simulation from (\ref{h})-(\ref{skew}), are reproduced on the respective
plots as red vertical lines. All computations are performed for the selected
values of $n+1$, and animated graphics are used to illustrate how all
quantities change over time.

\begin{figure}[tbp]
\begin{center}
\animategraphics[controls,loop,scale =0.75]{1}{Animate_out_long}{}{} %
\animategraphics[controls,loop,scale =0.75]{1}{Animate_out_short}{}{}
\end{center}
\caption{{\protect\footnotesize Animation over time of the posterior
densities for: ES$_{0.1}\left( \boldsymbol{\protect\theta }\right) $ (Panel
A), and ES$_{0.9}\left( \boldsymbol{\protect\theta }\right) $ (Panel B). In
Panel A, the posterior densities are based on the exact Bayes (dashed green
curve) and {FBP-CS$_{<10\%}$ (full blue curve) updates. In Panel B, the
posterior densities are based on the exact Bayes (dashed green curve) and
FBP-CS$_{>90\%}$ (full blue curve) updates. The red vertical line in each
panel is the true expected shortfall (ES$_{0.1}\boldsymbol{\ }$in Panel A
and ES$_{0.9}\boldsymbol{\ }$in Panel B), produced using simulation from (%
\protect\ref{h})-(\protect\ref{skew}). Note: both on the horizontal axis and
in the label for each panel we use the superscript `n+1' to remind the
reader that the quantities plotted relate to prediction of ES for time point
n+1. In the animation, n+1 in the panel label changes to reflect the actual
time point.}}}
\label{Fig:ES}
\end{figure}

From Panel A in\textbf{\ }Figure \ref{Fig:ES}\textbf{\ }we see results that
broadly confirm the lack of dominance of the FBP-CS$_{<10\%}$ update over
exact Bayes, in terms of accurately reproducing the lower tail of the true
predictive. Neither the FBP-based posterior of ES$_{0.1}\left( P_{%
\boldsymbol{\theta }}^{n}\right) $ (the full blue curve), nor the exact
Bayes posterior (the dashed green curve) is located uniformly closer to the
true vertical (red) value than the other, across time. That said, the
posterior variation in the exact Bayes posterior is \textit{always} less
than that of the FBP posterior, for the selected time points considered.

In contrast, in Panel B in Figure \ref{Fig:ES} there is a much more marked
tendency for the FBP posterior of ES$_{0.9}\left( P_{\boldsymbol{\theta }%
}^{n}\right) $ to be located closer to the true value of ES$_{0.9}$ than is
the exact Bayes posterior, in addition to being much more concentrated.
Hence, there are several instances over time in which the FBP posterior is
extremely concentrated around - or very near to - the true ES: providing a
further illustration of the benefits reaped by focusing on upper tail
accuracy in the update.

\section{Empirical Illustrations\label{empapp}}

We now illustrate the potential of the focused approach in empirical
settings. In Section \ref{fin} we produce predictive results for two
empirical return series, using the same predictive classes adopted in the
simulation experiments above. We document the accuracy of the FBP posterior
mean predictive in a similar manner to the documentation of the results in
Table \ref{Table 1}. {Furthermore, we illustrate the practical benefits of
FBP using two empirically relevant loss measures: one, empirical exceedances
for predictive VaRs; two, }out-of-sample accuracy for ES, as measured by a 
\textit{consistent} class of scoring functions (\citealp{ziegel2020robust}).{%
\ In Section \ref{m4} we provide a quite different - and ambitious -
empirical illustration, by using FBP to predict the 23,000 annual time
series from the 2018 `M4' forecasting competition. {In all illustrations we
estimate the mean predictive in (\ref{Eq:pred_density_BHW_general}) (for the
relevant predictive class and updating rule) using (after thinning) 4,000
MCMC posterior draws of }$\boldsymbol{\theta }$},{{\ and base all
out-of-sample assessments on this mean predictive. The MCMC scheme remains
as described in Appendix \ref{comp}, }}apart from certain minor
modifications required for the `M4' example, as detailed in Section \ref{m4}.

\subsection{Financial returns\label{fin}}

\subsubsection{Preliminary diagnostics}

The two series used in the first empirical illustration are: i) $4,000$
observations of daily returns on the U.S. dollar currency index (DXY), from
3 Jan 2000\textbf{\ }to 3 Nov\textbf{\ }2015; and ii) $4,000$ observations
of {daily }returns on the S\&P500 index, from 3 Jan 1996 to 3 Feb 2012. {Both%
} series {are} supplied by the Securities Industries Research Centre of Asia
Pacific (SIRCA) on behalf of Reuters. All returns are continuously
compounded. To match the simulation exercise, the last 2,000 observations in
each series are used to perform all out-of-sample assessments, with the
one-step-ahead mean predictives produced in the same manner as described in
Section \ref{sim_results} (using expanding estimation samples),\textbf{\ }%
apart from the fact that the first estimation sample for both empirical
series is of length $n=2,000$ (rather than $n=500$). {We also adopt the same
three predictive classes, defined by the Gaussian ARCH(1), Gaussian
GARCH(1,1) and mixture models.\footnote{%
As in the simulation exercise, the parameters in the constituent models in
the mixture are {estimated via maximum likelihood.}}}

As tallies with the typical features exhibited by financial {returns}, the
descriptive statistics reported in Table \ref{tab:summaries} provide
evidence of time-varying volatility (significant serial correlation in
squared returns) and marginal non-Gaussianity (significant non-normality in
the level of returns) in both series.{\ Hence, we can conclude that the
simple }Gaussian{\ ARCH(1) and GARCH(1,1) predictive classes are likely to
be misspecified, and more so than the more flexible mixture class. }As such,
we would anticipate accuracy gains - by using FBP rather than exact Bayes -
to be most in evidence for the ARCH(1) predictive class, with decreasing
relative gains expected as the predictive class becomes less misspecified.

\begin{table}[tbp]
\centering
\scalebox{0.7}{
	\begin{tabular}{lrrrrrrrrrr}
		\hline\hline
		        &         &        &        &        &        &        &          &          &         &         \\
		        &     Min &    Max &   Mean & Median & St.Dev &  Range & Skewness & Kurtosis & JB stat & LB stat \\
		        &         &        &        &        &        &        &          &          &         &         \\
		DXY     &  -2.913 &  1.645 & -0.010 & -0.007 &  0.316 &  4.558 &   -0.303 &    4.517 &    3461 &     271 \\
		S\&P500 & -21.070 & 19.510 &  0.035 &  0.164 &  2.590 & 40.581 &   -0.260 &    5.762 &    5579 &    2783 \\ \hline\hline
	\end{tabular}}
\caption{{\protect\footnotesize Summary statistics. `JB stat' is the test
statistic for the Jarque-Bera test of normality, with a critical value of
5.99. `LB stat' is the test statistic for the Ljung-Box test of serial
correlation in the squared returns; the critical value based on a lag length
of 3 is 7.82. `Skewness' is the Pearson measure of sample skewness, and
`Kurtosis' a sample measure of excess kurtosis. The labels `Min' and `Max'
refer to the smallest and largest value, respectively, while `{Range}' is
the difference between these two. The remaining statistics have the obvious
interpretations. }}
\label{tab:summaries}
\end{table}

\begin{table}[tbp]
\begin{center}
\scalebox{0.7}{
			\begin{tabular}{lcccccccc}
				\hline\hline
				                             &         &                          &  &                  &                  &  &                  &                  \\
				                             &                            \multicolumn{8}{c}{\textbf{Panel A: ARCH(1) predictive class}}                            \\
				                             &         &                          &  &                  &                  &  &                  &                  \\
				                             &                                   \multicolumn{8}{c}{\textbf{Out-of-sample score}}                                   \\ \cline{2-9}
				                             &         &                          &  &                  &                  &  &                  &                  \\
				                             & \multicolumn{2}{c}{Center Focused} &  &  \multicolumn{2}{c}{Left Focused}   &  &  \multicolumn{2}{c}{Right Focused}  \\
				                             &   LS    &           CRPS           &  &     FSR10\%      &     FSR20\%      &  &     FSR80\%      &     FSR90\%      \\
				\textbf{Updating method}    &         &                          &  &                  &                  &  &                  &                  \\ \cline{1-1}
				                             &         &                          &  &                  &                  &  &                  &                  \\
				\underline{\textit{DXY}}     &         &                          &  &                  &                  &  &                  &                  \\
				Exact Bayes                  & -0.2528 &         -0.1701          &  &     -0.2810      &     -0.4113      &  &     -0.3953      &     -0.2729      \\
				FBP                          & -0.2528 &     \textbf{-0.1689}     &  & \textbf{-0.2676} & \textbf{-0.3921} &  & \textbf{-0.3901} & \textbf{-0.2692} \\
				\underline{\textit{S\&P500}} &         &                          &  &                  &                  &  &                  &                  \\
				Exact Bayes                  & -2.2984 &         -1.3002          &  &     -0.4975      &     -0.8182      &  &     -0.7209      &     -0.4191      \\
				FBP                          & -2.2984 &     \textbf{-1.2922}     &  & \textbf{-0.4661} & \textbf{-0.7859} &  & \textbf{-0.7100} & \textbf{-0.4090} \\ \hline\hline
				                             &         &                          &  &                  &                  &  &                  &                  \\
				                             &                          \multicolumn{8}{c}{\textbf{Panel B: GARCH(1,1) predictive class}}                           \\
				                             &         &                          &  &                  &                  &  &                  &                  \\
				                             &                                   \multicolumn{8}{c}{\textbf{Out-of-sample score}}                                   \\ \cline{2-9}
				                             &         &                          &  &                  &                  &  &                  &                  \\
				                             & \multicolumn{2}{c}{Center Focused} &  &  \multicolumn{2}{c}{Left Focused}   &  &  \multicolumn{2}{c}{Right Focused}  \\
				                             &   LS    &           CRPS           &  &     FSR10\%      &     FSR20\%      &  &     FSR80\%      &     FSR90\%      \\
				\textbf{Updating method}    &         &                          &  &                  &                  &  &                  &                  \\ \cline{1-1}
				                             &         &                          &  &                  &                  &  &                  &                  \\
				\underline{\textit{DXY}}     &         &                          &  &                  &                  &  &                  &                  \\
				Exact Bayes                  & -0.1798 &         -0.1664          &  &     -0.2484      &     -0.3712      &  &     -0.3832      &     -0.2660      \\
				FBP                          & -0.1798 &     \textbf{-0.1659}     &  & \textbf{-0.2445} & \textbf{-0.3656} &  & \textbf{-0.3771} & \textbf{-0.2582} \\
				\underline{\textit{S\&P500}} &         &                          &  &                  &                  &  &                  &                  \\
				Exact Bayes                  & -2.1257 &         -1.2433          &  &     -0.4251      &     -0.7410      &  &     -0.6531      &     -0.3573      \\
				FBP                          & -2.1257 &     \textbf{-1.2422}     &  & \textbf{-0.4213} & \textbf{-0.7350} &  & \textbf{-0.6510} & \textbf{-0.3569} \\ \hline\hline
				                             &         &                          &  &                  &                  &  &                  &                  \\
				                             &                            \multicolumn{8}{c}{\textbf{Panel C: Mixture predictive class}}                            \\
				                             &         &                          &  &                  &                  &  &                  &                  \\
				                             &                                   \multicolumn{8}{c}{\textbf{Out-of-sample score}}                                   \\ \cline{2-9}
				                             &         &                          &  &                  &                  &  &                  &                  \\
				                             & \multicolumn{2}{c}{Center Focused} &  &  \multicolumn{2}{c}{Left Focused}   &  &  \multicolumn{2}{c}{Right Focused}  \\
				                             &   LS    &           CRPS           &  &     FSR10\%      &     FSR20\%      &  &     FSR80\%      &     FSR90\%      \\
				\textbf{Updating method}    &         &                          &  &                  &                  &  &                  &                  \\ \cline{1-1}
				                             &         &                          &  &                  &                  &  &                  &                  \\
				\underline{\textit{DXY}}     &         &                          &  &                  &                  &  &                  &                  \\
				Exact Bayes                  & -0.1859 &         -0.1664          &  & \textbf{-0.2559} & \textbf{-0.3796} &  &     -0.3795      &     -0.2616      \\
				FBP                          & -0.1859 &         -0.1664          &  &     -0.2566      &     -0.3801      &  & \textbf{-0.3791} & \textbf{-0.2606} \\
				\underline{\textit{S\&P500}} &         &                          &  &                  &                  &  &                  &                  \\
				Exact Bayes                  & -2.1261 &         -1.2437          &  & \textbf{-0.4239} & \textbf{-0.7397} &  &     -0.6543      &     -0.3586      \\
				FBP                          & -2.1261 &     \textbf{-1.2428}     &  &     -0.4246      &     -0.7401      &  & \textbf{-0.6528} & \textbf{-0.3575} \\ \hline\hline
			\end{tabular}   
		}
\end{center}
\caption{{\protect\footnotesize Predictive accuracy based on FBP and exact
Bayes mean predictives. Panels A to C report the average out-of-sample
scores for the ARCH(1), GARCH(1,1) and mixture predictive class,
respectively. The rows in each panel refer to the update method used. The
columns refer to the out-of-sample measure used to compute the average
scores. {Each FBP figure is based on an }update{\ that matches the given
out-of-sample accuracy measure. }The bold figures are the largest average
scores according to a given out-of-sample measure. The two time series are
labelled as \textit{DXY} and \textit{S\&P500}.}}
\label{Tab:empresults}
\end{table}

\begin{table}[tbp]
\centering
\scalebox{0.7}{
	\begin{tabular}{lcccccccccccccc}
		\hline\hline
		                   &                 &                 &                &                &  &                &                                    &                &                &  &                &                &                 &                 \\
		                   &       \multicolumn{4}{c}{\textbf{Panel A: Simulated dataset}}       &  &                       \multicolumn{4}{c}{\textbf{Panel B: DXY}}                       &  &            \multicolumn{4}{c}{\textbf{Panel C: S\&P500}}            \\
		                   &                 &                 &                &                &  &                &                                    &                &                &  &                &                &                 &                 \\
		                   &       \multicolumn{4}{c}{\textbf{Out-of-sample exceedances}}        &  &                \multicolumn{4}{c}{\textbf{Out-of-sample exceedances}}                 &  &       \multicolumn{4}{c}{\textbf{Out-of-sample exceedances}}        \\ \cline{2-5}\cline{7-10}\cline{12-15}
		                   &                 &                 &                &                &  &                &                                    &                &                &  &                &                &                 &                 \\
		                   &   VaR$_{0.1}$   &   VaR$_{0.2}$   &  VaR$_{0.8}$   &  VaR$_{0.9}$   &  &  VaR$_{0.1}$   &            VaR$_{0.2}$             &  VaR$_{0.8}$   &  VaR$_{0.9}$   &  &  VaR$_{0.1}$   &  VaR$_{0.2}$   &   VaR$_{0.8}$   &   VaR$_{0.9}$   \\
		\textbf{Updating} &                 &                 &                &                &  &                &                                    &                &                &  &                &                &                 &                 \\
		\textbf{method}    &                 &                 &                &                &  &                &                                    &                &                &  &                &                &                 &                 \\
		                   &                 &                 &                &                &  &                &                                    &                &                &  &                &                &                 &                 \\
		Exact Bayes        &      0.117      &      0.196      &     0.217      &     0.062*     &  &     0.073*     &               0.142*               &     0.157*     &     0.085      &  &     0.081*     &     0.152*     &     0.131*      &     0.060*      \\
		FBP-CS$_{<10\%}$   & \textbf{0.102 } & \textbf{0.198 } &     0.05*      &     0.002*     &  & \textbf{0.084} &               0.237*               &     0.027*     &     0.011*     &  & \textbf{0.096} &     0.225*     &     0.023*      &     0.008*      \\
		FBP-CS$_{<20\%}$   &      0.103      &      0.203      &     0.023*     &     0.000*     &  &     0.077*     & \multicolumn{1}{r}{\textbf{0.180}} &     0.059*     &     0.022*     &  &     0.083      & \textbf{0.192} &     0.039*      &     0.017*      \\
		FBP-CS$_{>80\%}$   &     0.338*      &     0.413*      & \textbf{0.197} & \textbf{0.104} &  &     0.031*     &               0.060*               & \textbf{0.204} &     0.094      &  &     0.050*     &     0.102*     &     0.164*      &     0.072*      \\
		FBP-CS$_{>90\%}$   &     0.471*      &     0.551*      &     0.148*     &     0.085      &  &     0.015*     &               0.037*               &     0.247*     & \textbf{0.100} &  &     0.042*     &     0.074*     & \textbf{0.182*} & \textbf{0.078*} \\ \hline\hline
	\end{tabular}}
\caption{{\protect\footnotesize Predictive Value at Risk assessment. The
figures recorded are the proportion of times that the observed out-of-sample
values `exceed' the predictive VaR$_{\protect\alpha }$\ indicated by the
column heading. Panels A to C, respectively, record results for the data
simulated from (\protect\ref{h})-(\protect\ref{skew}),{\ the DXY empirical
data, and the S\&P500} empirical data. The bold value in each column
indicates the empirical coverage that is closest to the nominal tail
probability, whilst an asterisk indicates rejection of the null hypothesis
of independence and correct coverage at the 1\% level of significance, using
the Christoffersen test.}}
\label{Tab:VaRresults}
\end{table}

{In the following sections we assess the relative performance of FBP and
exact Bayes for these two series in terms of, respectively: average
out-of-sample scores, VaR exceedances, }and the average scoring function for
ES proposed in \cite{ziegel2020robust}. For the third exercise, we assess
the \textit{dominance} of FBP over exact Bayes - in terms of predicting ES -
by visually comparing the results using the Murphy diagrams proposed in \cite%
{ehm2016quantiles}.

\subsubsection{Results: Average out-of-sample scores}

In Table \ref{Tab:empresults},\ we {reproduce results for both the
likelihood-based }update{\ (exact Bayes) and the FBP }update {that matches
the out-of-sample accuracy measure used} (as captured by the relevant
scoring rule).{\footnote{%
The choice of $w_{n}$ for each FBP method remains as described in Section %
\ref{overview}.}} That is, for each of the two empirical series, and for
each predictive class, there is a single row of accuracy results labelled
`FBP', with the update underlying the FBP figure in any particular column 
\textit{matching} the accuracy measure in the column label.

The results confirm our expectations. In Panel A, the figures based on the
ARCH(1) predictive class tell a clear story: using an update that focuses on
the measure that is assessed out-of-sample reaps accuracy gains. In \textit{%
all} cases, and for \textit{both} series, the exact (misspecified) Bayesian
predictives are out-performed by FBP. In Panel B, FBP based on the
GARCH(1,1) class continues to dominate exact Bayes uniformly; however the
degree of dominance is less marked than in Panel A. The dominance of FBP
over exact Bayes is no longer \textit{uniform} in Panel C, for the case of
the (least misspecified) mixture predictive class. Nevertheless, in all four
instances, FBP still out-performs exact Bayes in the upper tail.

Hence, the empirical results \textit{mimic} the patterns observed in the
simulation setting, and continue to send the clear signal: when model
misspecification is marked, FBP is beneficial, and particularly in terms of
upper tail accuracy in the case of negatively skewed data. In the following
sections we assess the \textit{practical significance} of that benefit by
conducting an out-of-sample assessment of the $\alpha \times 100\%$
predictive VaR for period\textbf{\ }$n+1$ - or predictive quantile (denoted
by VaR$_{\alpha }$) - {and the $\alpha \times 100\%$ }ES (denoted by{\textbf{%
\ }}ES$_{\alpha }$),{\ both} computed from the relevant mean predictive, and
for the most misspecified class only: Gaussian ARCH(1). For comparative
purposes, we also perform the exercise for the simulated data analyzed in
Section \ref{sim}.\textbf{\ }Note that, for ease of notation we do not make
explicit the dependence of both VaR$_{\alpha }$ and ES$_{\alpha }$ on a
particular predictive distribution, unless this is necessary.

\subsubsection{Results: VaR exceedances\label{var}}

{To assess predictive accuracy of the VaR, f}or all updating methods and all
return series (both empirical and simulated), we first compute the
probability of `exceedance', $\hat{\alpha}$, as the proportion of realized
out-of-sample values that are \textit{less} than the predictive VaR$_{\alpha
}$ for $\alpha =0.1$ and $\alpha =0.2$, as is relevant for a long portfolio.
We then compute $1-\hat{\alpha}$, as the proportion of realized values that
are \textit{greater} than the predictive VaR$_{\alpha }$ for $\alpha =0.8$
and $\alpha =0.9$, as pertains to a short portfolio. Table \ref%
{Tab:VaRresults} reports the values of $\hat{\alpha}$ (or $1-\hat{\alpha}$),
for the exact Bayes update and four different versions of FBP that focus
explicitly on tail accuracy: FBP-CS$_{<10\%}$, FBP-CS$_{<20\%}$, FBP-CS$%
_{>80\%}$ and FBP-CS$_{>90\%}.$ The bold figure in each column indicates the
empirical exceedance that is closest to the nominal tail probability. The
asterisk then indicates whether the null hypothesis of $\hat{\alpha}=\alpha $
(or $1-\hat{\alpha}=1-\alpha $) \textit{and} independence in the exceedances
is rejected at the $1\%$ significance level by the \cite%
{christoffersen1998evaluating} test. In this particular illustration there
is, of course, no exact match between update and out-of-sample measure;
however we would anticipate that the FBP methods that focus on accuracy in
the lower tail would yield better predictions of VaR$_{0.1}$ and VaR$_{0.2}$%
\ (and, hence, better associated performance statistics) than would both
exact Bayes (with no lower tail focus) and the FBP methods that focus on
accuracy in the upper tail; with the corresponding conclusions expected for
upper tail focus, and accurate prediction of VaR$_{0.8}$ and VaR$_{0.9}$.\
Hence the usefulness of reproducing all five sets of results for each
scenario.

Looking first at the bold figures in Table \ref{Tab:VaRresults}, a simple
conclusion can be drawn: for all three series, an `appropriate' form of FBP
(i.e. with an update that rewards accuracy in the relevant tail) produces
empirical exceedances that are closer to the nominal values than {do both}
exact Bayes and the `inappropriate' FBP (i.e. with an update that rewards
accuracy in the opposite tail). By this measure, focussing \textit{correctly}
reaps VaR accuracy benefits for all three series. For the DXY series ({Panel
B}) we can hone this conclusion further: the updates that focus on accuracy
in a particular \textit{marginal} tail (remembering that the threshold used
in the CS score is based on an estimate of a marginal quantile) also yield
the best empirical exceedances for the conditional VaR$_{\alpha }$ with an
equivalent probability. All four bold diagonal exceedances in {Panel B} are
also insignificantly different from the nominal value, and are not
associated with rejection of the null of independent violations. Indeed,
with the exception of the FBP-CS$_{>80\%}$ exceedance for FBP-CS$_{>90\%}$,
these four diagonal figures are the \textit{only }ones associated with a
failure to reject the joint null of correct coverage and independent
violations.

For the data simulated from (\ref{h})-(\ref{skew}) (Panel A), with one
exception, all FBP methods that focus on the tail that is relevant for VaR$%
_{\alpha }$\ prediction - so the lower tail for VaR$_{0.1}$ and VaR$_{0.2}$,
and the upper tail for VaR$_{0.8}$ and VaR$_{0.9}$\ - yield statistics that
fail to reject the joint null. Moreover, as is somewhat consistent with the
particular dominance of FBP over exact Bayes in the \textit{upper} tail
illustrated in Section \ref{sim_results}, there is a slight\ tendency for
FBP to also be \textit{more} superior to exact Bayes in terms of prediction
of the \textit{upper} tail VaR$_{\alpha }$'s$.$\ For the S\&P500 data, the
FBP methods that focus on lower tail accuracy (FBP-CS$_{<10\%}$ {and} FBP-CS$%
_{<20\%}$) are the only ones that do not formally reject the joint null -
when used to predict VaR$_{0.1}$ and VaR$_{0.2}.$ However, again, in terms
of raw exceedances for VaR$_{\alpha }$'s in a particular tail, the FBP
methods that reward accuracy in that same tail outperform {everything} else.

\subsubsection{Results: Murphy diagrams for ES predictions}

Lastly, we compare the performance of FBP and exact Bayes in terms of the
accuracy with which we predict {ES}. Reiterating that these results are
based on the \textit{mean} predictives (for both FBP and exact Bayes), for
the Gaussian ARCH(1) class, this means that in the definition $\text{ES}%
_{\alpha }(P_{\boldsymbol{\theta }}^{n})$ in equations \eqref{ES_long}-%
\eqref{ES_short}, we replace $P_{\boldsymbol{\theta }}^{n}$ with the mean
predictive defined in (\ref{Eq:pred_density_BHW_general}), and for both FBP
and exact Bayes updating.{\footnote{%
The lack of a closed-form solution for the mean predictive requires that ES
must be estimated at each time point. Herein, ES is estimated via Monte
Carlo integration based on a large number of draws from the mean predictive.}%
} We then follow \cite{ziegel2020robust} and measure the out-of-sample
accuracy of both forms of ES predictions using a class of scoring functions
that are indexed by some known parameter $\eta \in \mathbb{R}$, and which
are \textit{consistent} for the joint functional of VaR and ES. The specific
details of the scoring function, as well as the definition of consistency
are delayed until Appendix \ref{app:ES}.

{With reference to a }positively-oriented{\ class} of consistent scoring
functions, \cite{ziegel2020robust} {say that} predictive Method A, e.g. the
mean predictive obtained under FBP, \textit{dominates} predictive method B,
e.g. the mean predictive obtained under exact Bayes, if on average
predictions of ES obtained under Method A yield {a score that is greater
than or equal to the score that is obtained under Method B,} uniformly over
all scoring functions in the class (i.e., all $\eta \in \mathbb{R}$).%
\footnote{%
A more formal definition of dominance is provided in Appendix \ref{app:ES}.}
Sampling variability aside, dominance can then be visualized using the
Murphy diagram proposed in \cite{ehm2016quantiles}, which plots the average
difference between the scores under Methods A and B, over the
out-of-sample-evaluation period and across a range of values for $\eta $.
For the class of consistent scoring functions defined in Appendix \ref%
{app:ES}, and across a grid of values for $\eta $, it is then feasible to
evaluate if predictions made using FBP dominate those made under exact Bayes
(when both approaches are based on the same {predictive} class) in terms of
their accuracy {in predicting} {ES}.

{To this end, we produce and discuss Murphy diagrams for both the DXY and
S\&P500 return series. }Panels {A.1 and A.2 of Figure}~\ref%
{Fig:relativeaccuracy_ES_DXY} present {the Murphy diagrams} for ES$_{0.1}$
and ES$_{0.2}$, respectively, for the DXY {series}. In each panel,
predictions {based on} the `appropriate' form of FBP are compared to exact
Bayes. {The black solid line displays the }average difference (over the
out-of-sample period){\ }between the scores calculated under FBP and exact
Bayes, across a grid of values for $\eta $.{\textbf{\ }}The evaluation
period used to calculate the average difference is the same as that used to
compute the VaR exceedances in Section \ref{var}. The shaded region
corresponds to a 95\% bootstrap confidence interval for the average
difference, calculated at each $\eta $, based on a block bootstrap (%
\citealp{kunsch1989jackknife}) with block length of size $b=10$ and $1000$
replications, while the red dashed line is the horizontal line at $y=0$.%
\footnote{%
We remind the reader that only sampling variability characterizes these
out-of-sample computations; hence\textbf{\ }the production of a
(bootstrap-based) \textit{confidence} interval at each value of $\eta $.}
From these two plots the black line indicates that, for virtually all values
of $\eta $, FBP outperforms exact Bayes, since the average score difference
is generally positive and the confidence intervals usually do not encompass
zero. This result supports the claim that FBP helps improve predictive
performance in the lower tail of DXY returns. On the other hand, Panels B.1
and B.2 indicate that no such dominance is observed in terms of\textbf{\ }%
the upper-tail ES predictions, although in Panel B.1 the bulk of the
confidence intervals do cover positive values, and the FBP forecasts are not
dominated by exact Bayes in either panel.

{Figure~\ref{Fig:relativeaccuracy_ES_SP500} presents the corresponding
results for the S\&P500 returns}. Panels\ A.1 and B.1 show that there is no\
dominance {of FBP over exact Bayes for }ES$_{0.1}$\textbf{\ }and ES$_{0.9}${%
. }On the other hand, Panels A.2 and B.2 show more evidence of FBP having
better predictive performance than exact Bayes. In particular, in Panel B.2,
for most\textbf{\ }values of $\eta $ FBP outperforms exact Bayes, with the
average score difference being\textbf{\ }generally positive and most of%
\textbf{\ }the confidence intervals not encompassing zero.\ For completeness
we also include the results for the data simulated from (\ref{h})-(\ref{skew}%
). In this case we have illustrated, in Section \ref{Sect:sect433}, the
particular accuracy of the FBP-based ES results in the upper 10\% tail
(relative to exact Bayes). This finding is supported by the stark result in
Panel B.1, in which virtually complete dominance of FBP over exact Bayes is
on display. 
\begin{figure}[h]
\centering 
\begin{tabular}{cc}
\hspace{1.5cm}\textbf{Panel A:} & \hspace{1cm}\textbf{Panel B:} \\ 
\hspace{1.5cm}\textbf{Lower tail expected shortfall} & \hspace{1cm} \textbf{%
Upper tail expected shortfall} \\ 
&  \\ 
\multicolumn{2}{c}{\includegraphics[scale= 1]{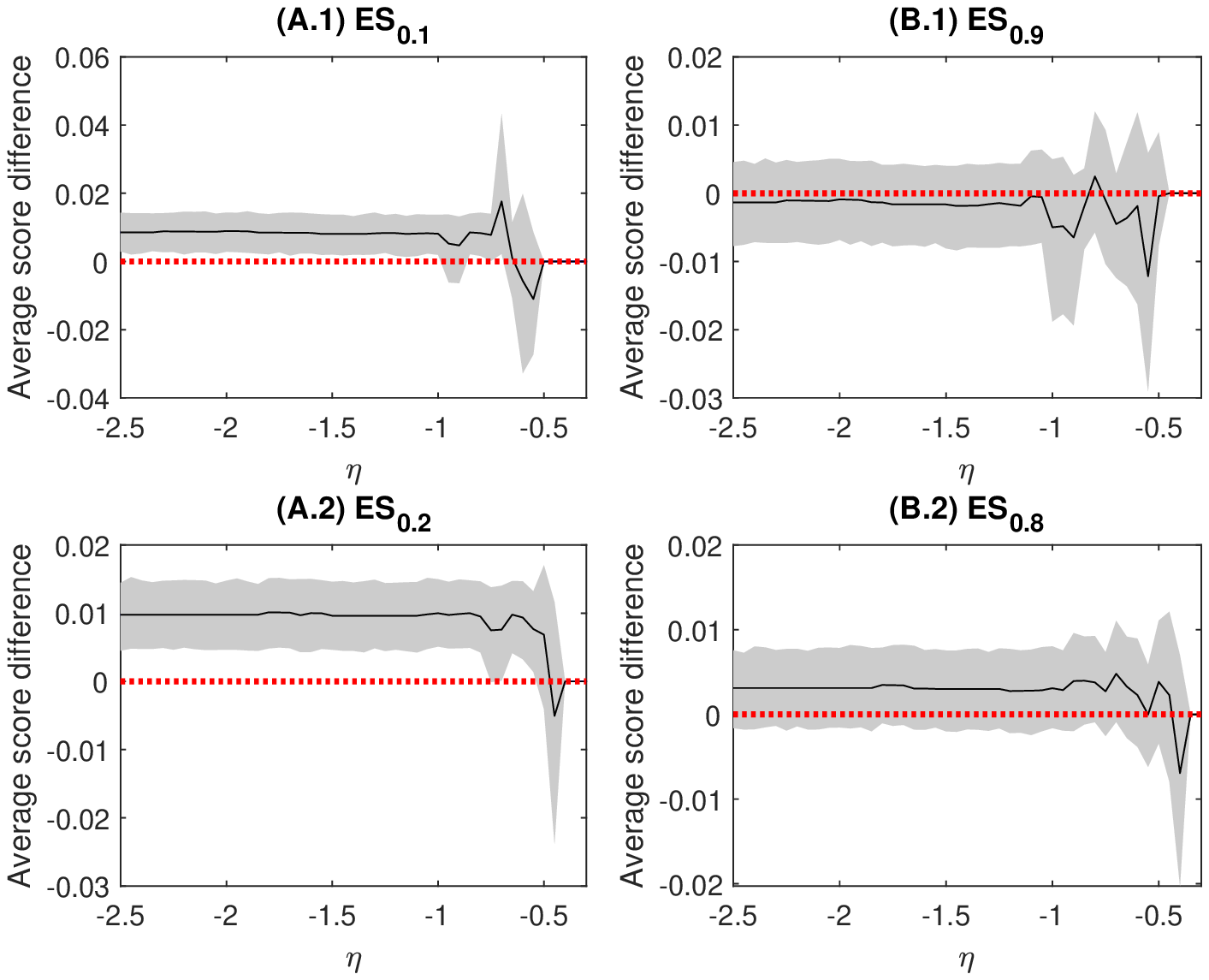}} \\ 
& 
\end{tabular}%
\caption{{\protect\footnotesize ES predictive performance for the DXY
series, measured by the average difference between the FBP and exact Bayes
out-of-sample scores. The scoring rule is as defined in (\protect\ref%
{Eq:ESscore}). Panels (A.1) and (A.2) correspond to ES}$_{0.1}$%
{\protect\footnotesize \ and ES}$_{0.2}${\protect\footnotesize ,
respectively; Panels (B.1) and (B.2) correspond to ES}$_{0.9}$%
{\protect\footnotesize \ and ES}$_{0.8}${\protect\footnotesize ,
respectively. The solid black line indicates the average difference; the
shaded area denotes the 95\% bootstrap confidence interval for each }$%
\protect\eta ${\protect\footnotesize ; and the red dashed line is a
horizontal line at zero. Panels (A.1), (A.2), (B.1) and (B.2) are
constructed using the FBP-CS}$_{\text{$<$10\%}}${\protect\footnotesize ,
FBP-CS}$_{\text{$<$20\%}}${\protect\footnotesize , FBP-CS}$_{\text{$>$90\%}}$%
{\protect\footnotesize \ and FBP-CS}$_{\text{$>$80\%}}$%
{\protect\footnotesize \ methods, respectively, as the relevant FBP method.}}
\label{Fig:relativeaccuracy_ES_DXY}
\end{figure}

\begin{figure}[h]
\centering 
\begin{tabular}{cc}
\hspace{1.5cm}\textbf{Panel A:} & \hspace{1cm}\textbf{Panel B:} \\ 
\hspace{1.5cm}\textbf{Lower tail expected shortfall} & \hspace{1cm} \textbf{%
Upper tail expected shortfall} \\ 
&  \\ 
\multicolumn{2}{c}{\includegraphics[scale= 1]{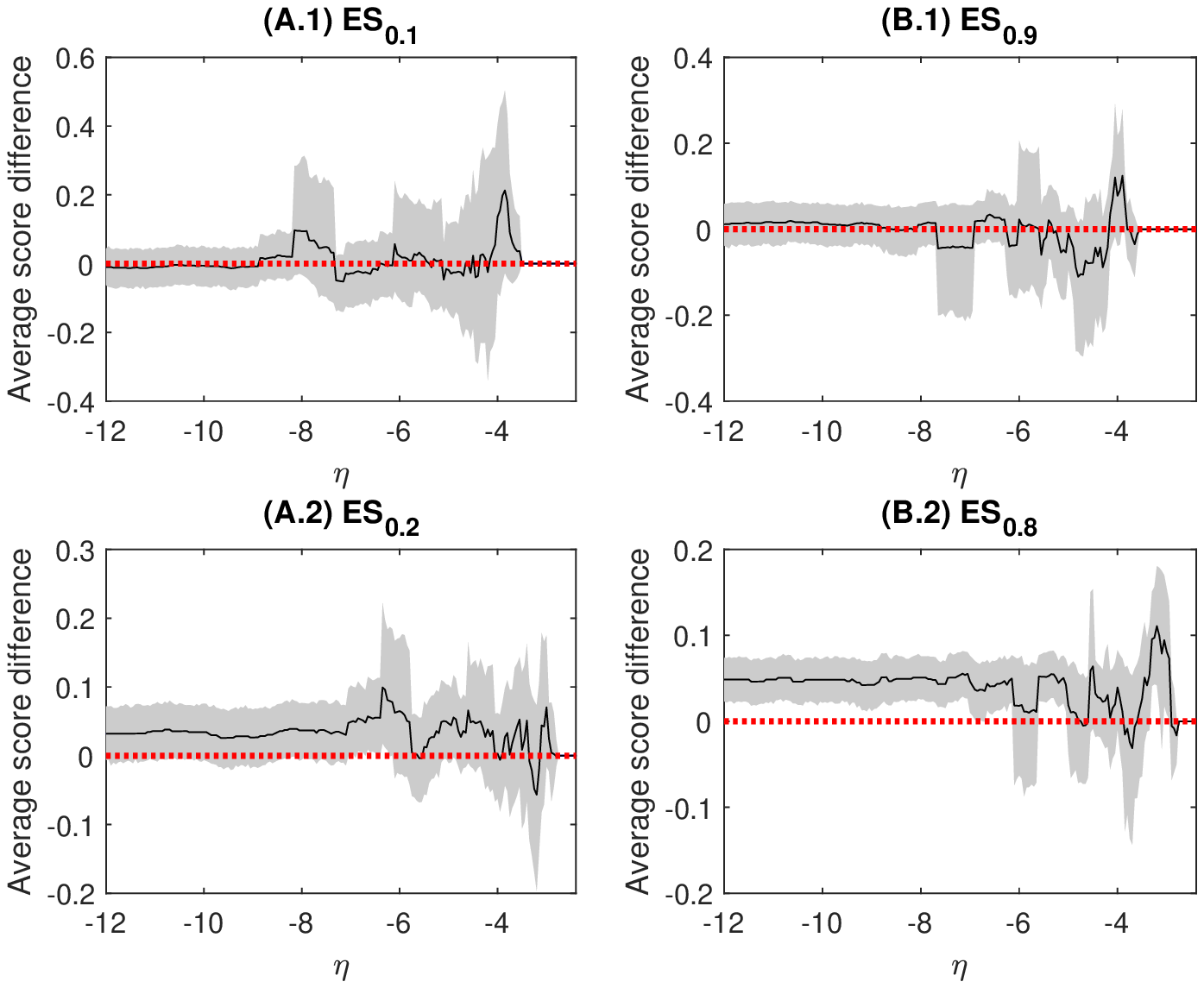}} \\ 
& 
\end{tabular}%
\caption{{\protect\footnotesize ES predictive performance for the S\&P500
series, measured by the average difference between the FBP and exact Bayes
out-of-sample scores. The scoring rule is as defined in (\protect\ref%
{Eq:ESscore}). Panels (A.1) and (A.2) correspond to ES}$_{0.1}$%
{\protect\footnotesize \ and ES}$_{0.2}${\protect\footnotesize ,
respectively; Panels (B.1) and (B.2) correspond to ES}$_{0.9}$%
{\protect\footnotesize \ and ES}$_{0.8}${\protect\footnotesize ,
respectively. The solid black line indicates the average difference; the
shaded area denotes the 95\% bootstrap confidence interval for each }$%
\protect\eta ${\protect\footnotesize ; and the red dashed line is a
horizontal line at zero. Panels (A.1), (A.2), (B.1) and (B.2) are
constructed using the FBP-CS}$_{\text{$<$10\%}}${\protect\footnotesize ,
FBP-CS}$_{\text{$<$20\%}}${\protect\footnotesize , FBP-CS}$_{\text{$>$90\%}}$%
{\protect\footnotesize \ and FBP-CS}$_{\text{$>$80\%}}$%
{\protect\footnotesize \ methods, respectively, as the relevant FBP method.}}
\label{Fig:relativeaccuracy_ES_SP500}
\end{figure}

\begin{figure}[h]
\centering 
\begin{tabular}{cc}
\hspace{1.5cm}\textbf{Panel A:} & \hspace{1cm}\textbf{Panel B:} \\ 
\hspace{1.5cm}\textbf{Lower tail expected shortfall} & \hspace{1cm} \textbf{%
Upper tail expected shortfall} \\ 
&  \\ 
\multicolumn{2}{c}{\includegraphics[scale= 1]{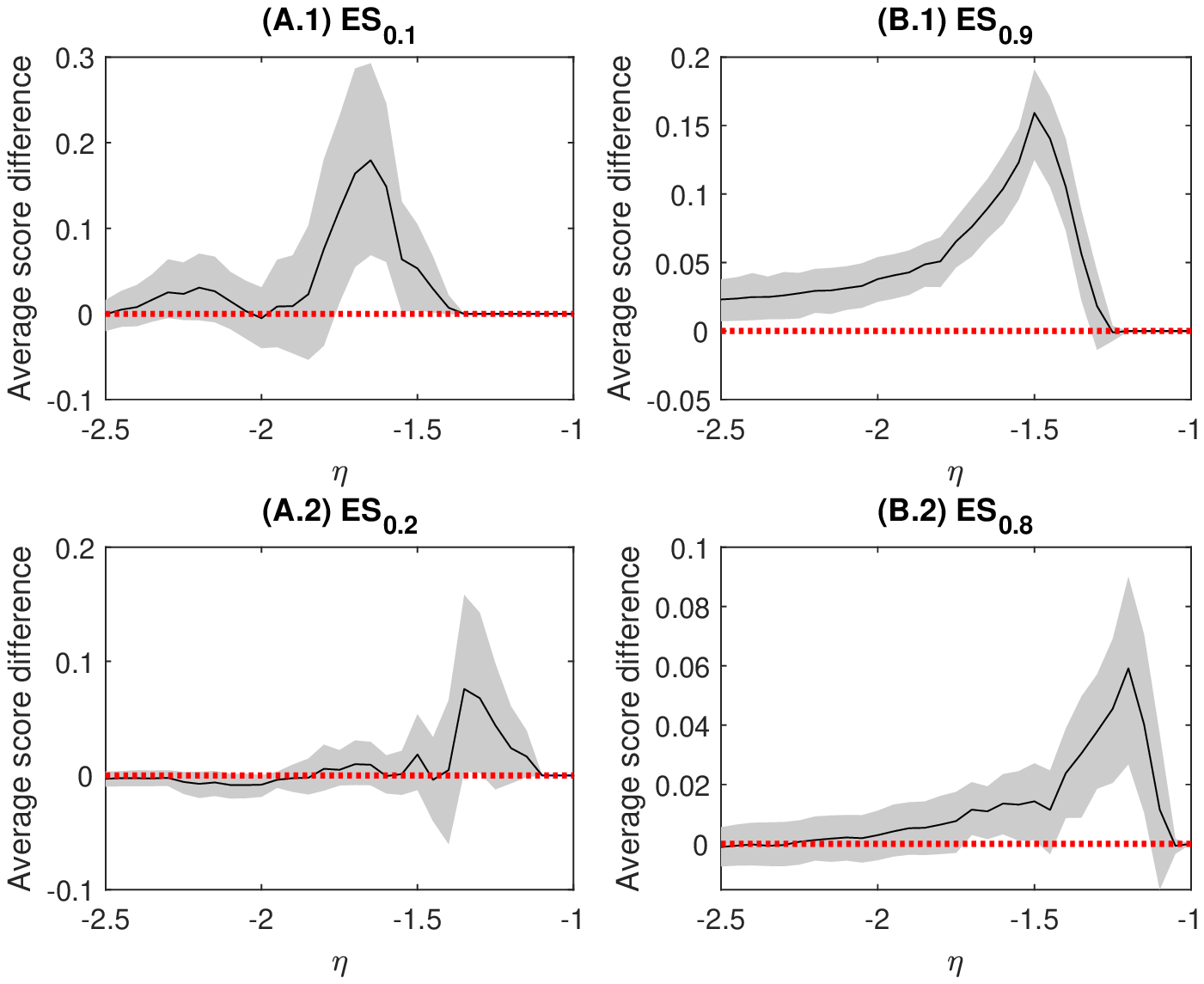}} \\ 
& 
\end{tabular}%
\caption{{\protect\footnotesize ES predictive performance for\ the data
simulated from (\protect\ref{h})-(\protect\ref{skew}), measured by the
average difference between the FBP and exact Bayes out-of-sample scores. The
scoring rule is as defined in (\protect\ref{Eq:ESscore}). Panels (A.1) and
(A.2) correspond to ES}$_{0.1}${\protect\footnotesize \ and ES}$_{0.2}$%
{\protect\footnotesize , respectively; Panels (B.1) and (B.2) correspond to
ES}$_{0.9}${\protect\footnotesize \ and ES}$_{0.8}${\protect\footnotesize ,
respectively. The solid black line indicates the average difference; the
shaded area denotes the 95\% bootstrap confidence interval for each }$%
\protect\eta ${\protect\footnotesize ; and the red dashed line is a
horizontal line at zero. Panels (A.1), (A.2), (B.1) and (B.2) are
constructed using the FBP-CS}$_{\text{$<$10\%}}${\protect\footnotesize ,
FBP-CS}$_{\text{$<$20\%}}${\protect\footnotesize , FBP-CS}$_{\text{$>$90\%}}$%
{\protect\footnotesize \ and FBP-CS}$_{\text{$>$80\%}}$%
{\protect\footnotesize \ methods, respectively, as the relevant FBP method.}}
\label{Fig:relativeaccuracy_ES_simul}
\end{figure}

\subsection{M4 forecasting competition\label{m4}}

\label{Sect:M4}

{The M4 competition was an exploration of forecast performance organized by
the University of Nicosia and the New York University Tandon School of
Engineering in 2018. A total of 100,000 time series - of differing
frequencies and lengths - were made available to the public. Each
forecasting expert (or expert team) was then to submit a vector of }$h${%
-step ahead forecasts, }$h=1,2,...,H,${\ for each of the series. The winner
of the competition in a particular category was the expert (team) who
achieved the best average out-of-sample predictive accuracy according to the
measure of accuracy that defined that category, over all horizons and all
series.\footnote{%
Details of all aspects of the competition can be found via the following
link: \href{https://www.m4.unic.ac.cy/wp-content/uploads/2018/03/M4-Competitors-Guide.pdf}%
{M4 competition details.}} }

{One category was concerned with predictive interval accuracy, as measured
by the {mean scaled interval score} (MSIS) proposed in \cite%
{gneiting2007strictly}. The MSIS formula used in the competition, defined
over the $100\left( 1-\alpha \right) $\% prediction interval, is given by 
\begin{equation*}
\text{MSIS}=\frac{1}{H}\frac{\sum_{h=1}^{H}\left( u_{t+h}-l_{t+h}+\frac{2}{%
\alpha }\left( l_{t+h}-y_{t+h}\right) \boldsymbol{1}\{y_{t+h}<l_{t+h}\}+%
\frac{2}{\alpha }\left( y_{t+h}-u_{t+h}\right) \boldsymbol{1}%
\{y_{t+h}>u_{t+h}\}\right) }{\frac{1}{n-m}\sum_{t=m+1}^{n}|y_{t}-y_{t-m}|},
\end{equation*}%
where $H$ denotes the longest predictive horizon considered, $l_{t+h}$ and $%
u_{t+h}$ denote the $100\left( \frac{\alpha }{2}\right) $\% and $100\left( 1-%
\frac{\alpha }{2}\right) $\% predictive quantile, respectively, $y_{t+h}$ is
the realized value at time $t+h$, }$h=1,2,...,H$, and $m$ denotes the
frequency of the data. {The overall predictive accuracy according to this
score was measured by the mean MSIS over the 100,000 series. }

The fourth{\ best performance in this particular forecasting category was
achieved by }the `\textit{M4 team'}, {who produced prediction intervals
using a model with an exponential smoothing structure for the level, trend
and seasonal components, and a Gaussian distributional assumption, referred
to as the ETS model hereafter \citep{hyndman2002state}. A}ll three smoothing
components of this model can be either additive or multiplicative in
structure, while the trend component can also be specified to be damped;
that is, the trend component can be forced to taper off over longer
predictive horizons. A variety of ETS {specifications} were fit to each
series using maximum likelihood estimation (MLE), and the best specification
then selected via Akaike's Information Criterion (AIC). {The ETS model used
to define the predictive interval was then characterized by }at most{\ four
parameters: $\theta _{1}$, the smoothing parameter in the level; $\theta
_{2} $, the smoothing parameter in the trend; $\theta _{3}$, the smoothing
parameter in the seasonality; }and $\theta ${$_{4}$ the damping parameter in
the trend; with $\boldsymbol{z}$} denoting{\ the vector of{\ }}initial{\
states.\footnote{%
It is worth noting that for the ETS model there is no variance parameter;
instead the variance of the residuals is employed to construct the
predictive densities.} }The\ MLEs of the parameters and {$\boldsymbol{z}$},
for the selected model, were then `plugged into' the assumed Gaussian
predictive.

We now use the ETS model as the predictive class to which FBP is applied,
with the MSIS score used to define the {update}. To keep the exercise
computationally feasible we adopt the same (MLE/AIC-based) strategy as the 
\textit{M4 team} for choosing the specification of the\ ETS model for each
series.{\footnote{%
We employ the ETS function of the forecast package in R to perform this
exercise.}} We also perform the exercise only\textit{\ }for the 23,000 
\textit{annual} time series, and for $H=6.$\ For each of the $i$ annual
series, $i=1,2,...,23,000$, we define: the full sample vector of
observations of length $n_{i}$ as {$\mathbf{y}_{n_{i}}$; }the vector of
observations up to time $t$ as $\mathbf{y}_{i,t}$; {the observed value at
any particular time point $t+h$ as $y_{i,t+h}$; the h-step-ahead }$i^{th}$ {%
ETS predictive }distribution as $P_{\boldsymbol{\theta }_{i}}^{t}${\ (}with $%
\boldsymbol{\theta }_{i}${\ denoting the unknown parameters for the }%
selected model for the $i^{th}$ series); {and the $100\left( \frac{\alpha }{2%
}\right) $\% and $100\left( 1-\frac{\alpha }{2}\right) $\% quantiles of the }%
$i^{th}$ {ETS predictive density }as {$l_{i,t+h}$ and $u_{i,t+h}$
respectively. Using this notation (and acknowledging a slight abuse of the
notation used earlier for the \textit{one-step-ahead} predictives and
associated scores), we define the (positively-oriented) MSIS-based sample
criterion as }$S_{n_{i}}(P_{\boldsymbol{\theta }_{i}}^{n_{i}})=\sum%
\nolimits_{t=0}^{n_{i}-1}S_{MSIS}(P_{\boldsymbol{\theta }_{i}}^{t},\bm{y}%
_{i,t,H}),${\ where}\ 
\begin{align}
S_{MSIS}(P_{\boldsymbol{\theta }_{i}}^{t},\bm{y}_{i,t,H})=-\frac{1}{%
{\scriptsize \text{min$\left( H,n_{i}-t\right) $}}}\sum_{h=1}^{\text{min}%
\left( H,n_{i}-t\right) }\left( {}\right. & \left. u_{i,t+h}-l_{i,t+h}+\frac{%
2}{\alpha }\left( l_{i,t+h}-y_{i,t+h}\right) \boldsymbol{1}%
\{y_{i,t+h}<l_{i,t+h}\}\right.  \label{MSIS} \\
+& \left. \frac{2}{\alpha }\left( y_{i,t+h}-u_{i,t+h}\right) \boldsymbol{1}%
\{y_{i,t+h}>u_{i,t+h}\}\right)  \notag
\end{align}%
and $\bm{y}_{i,t,H}=\left( y_{i,t+1},\dots ,y_{i,t+\text{min}\left(
H,n_{i}-t\right) }\right) ^{\prime }$. {Because there are 23,000 time
series, and associated posteriors of the form of (\ref{post}), the scale
factor for each series }$i$, call it $w_{i,n_{i}}$, {needs to be selected
via a computationally efficient and automated method. We set each }$%
w_{i,n_{i}}$ as $w_{i,n_{i}}=n_{i}d_{\theta _{i}}/2S_{n_{i}}(P_{\hat{%
\boldsymbol{\theta }_{i}}}^{n_{i}}),$ {where }$\widehat{\boldsymbol{\theta }}%
_{i}$ denotes the {MLE }of $\boldsymbol{\theta }_{i}$, and $d_{\theta _{i}}$
is the dimension of $\boldsymbol{\theta }_{i}$. This choice of $w_{i,n_{i}}$
ensures that the scale and convergence of the FBP posterior variance are
anchored to those of a well understood benchmark posterior. More details are
provided in Appendix \ref{msis}. Posterior draws of the ETS predictives for
the $i^{th}$ series are defined by the draws of the underlying $\boldsymbol{%
\theta }_{i}$, which are, in turn, produced using the same MCMC scheme
described in Appendix \ref{comp} for the GARCH(1,1) model, with appropriate
adjustments made for the bounds on the ETS parameters.\footnote{%
The parameter restrictions for the ETS model are: $0<\theta _{1},\theta
_{2},\theta _{3}<1$ and $0.8<\theta _{4}<0.98$.}

{Table~\ref{Tab:M4results} documents the accuracy with which five
alternative methods forecast the 23,000 annual series, with accuracy
measured {exclusively} by the (positively-oriented) MSIS rule. The methods
are the top four performers in the full M4 competition (in which all 100,000
series are forecast, and accuracy is measured by MSIS) - denoted by }M4-1$^{%
\text{st}}$, M4-2$^{\text{nd}}$, M4-3$^{\text{rd}}$\ and M4-4$^{\text{th}}${%
\ respectively - and the focused Bayesian method with }update {based on the
MSIS rule, denoted simply by FBP. The four first place-getters are, in brief,%
} the hybrid method of exponential smoothing and recurrent neural networks
in \cite{smyl2019hybrid} (M4-1$^{\text{st}}$), the feature-based forecast
combination method by \cite{montero2019fforma} (M4-2$^{\text{nd}}$), the
Card forecasting method in \cite{doornik2019card} (M4-3$^{\text{rd}}$),{\
with }M4-4$^{\text{th}}$,{\ as noted earlier, being the ETS/MLE-based method 
}of the\ {\textit{M4 team. }We use the} forecasts {provided by the
competitors (available in the M4 package in R) to} assess the predictive
accuracy of{\ }M4-1$^{\text{st}}$, M4-2$^{\text{nd}}$, M4-3$^{\text{rd}}$\
and M4-4$^{\text{th}}${\ for the 23,000 annual series.}

\begin{table}[tbp]
\begin{center}
{\normalsize \ 
\scalebox{0.72}{
					\begin{tabular}{lccccc}
						\hline\hline
						                     &                    &                    &                    &                    &                 \\
						                     &                 \multicolumn{5}{c}{\textbf{Panel A: Out-of-sample MSIS summaries}}                  \\
						                     &                    &                    &                    &                    &                 \\
						                     &                           \multicolumn{5}{c}{\textbf{Predictive method}}                            \\ \cline{2-6}
						                     &                    &                    &                    &                    &                 \\
						                     & M4-1$^{\text{st}}$ & M4-2$^{\text{nd}}$ & M4-3$^{\text{rd}}$ & M4-4$^{\text{th}}$ &       FBP       \\
						\textbf{Summary}     &                    &                    &                    &                    &                 \\
						                     &                    &                    &                    &                    &                 \\
						Mean                 &  \textbf{-23.90}   &       -27.48       &       -30.20       &       -34.90       &     -34.04      \\
						Median               &       -16.18       &       -16.09       &       -18.47       &       -15.49       & \textbf{-14.70} \\
						SD                   &   \textbf{48.17}   &       65.37        &       65.30        &       76.84        &      75.90      \\
						Quant 1\%            &  \textbf{-183.17}  &      -297.98       &      -214.84       &      -348.52       &     -340.00     \\
						Quant 10\%           &       -32.23       &  \textbf{-28.12}   &       -52.33       &       -66.71       &     -65.12      \\
						Quant 20\%           &       -24.14       &  \textbf{-19.52}   &       -35.07       &       -34.50       &     -33.14      \\
						Quant 30\%           &       -20.45       &  \textbf{-17.72}   &       -27.19       &       -24.43       &     -23.28      \\
						Quant 40\%           &       -18.02       &  \textbf{-16.77}   &       -22.20       &       -19.02       &     -17.95      \\
						Quant 50\%           &       -16.18       &       -16.09       &       -18.47       &       -15.49       & \textbf{-14.70} \\
						Quant 60\%           &       -14.62       &       -15.54       &       -15.35       &       -12.71       & \textbf{-12.21} \\
						Quant 70\%           &       -13.29       &       -15.02       &       -12.65       &       -10.45       & \textbf{-10.36} \\
						Quant 80\%           &       -12.01       &       -14.41       &       -10.43       &   \textbf{-8.95}   &      -8.96      \\
						Quant 90\%           &       -10.28       &       -13.41       &       -8.35        &       -7.16        & \textbf{-7.03}  \\
						Quant 99\%           &       -5.29        &       -7.25        &       -4.17        &   \textbf{-3.30}   &      -3.47      \\ \hline\hline
						                     &                    &                    &                    &                    &                 \\
						                     &               \multicolumn{5}{c}{\textbf{Panel B: Grouped predictive performance}  }                \\
						                     &                    &                    &                    &                    &                 \\
						                     &                           \multicolumn{5}{c}{\textbf{Predictive method}}                            \\ \cline{2-6}
						                     &                    &                    &                    &                    &                 \\
						                     & M4-1$^{\text{st}}$ & M4-2$^{\text{nd}}$ & M4-3$^{\text{rd}}$ & M4-4$^{\text{th}}$ &       FBP       \\
						\textbf{Time series} &                    &                    &                    &                    &                 \\
						\textbf{group}       &                    &                    &                    &                    &                 \\
						                     &                    &                    &                    &                    &                 \\
						Group 1              &        1359        &   \textbf{1732}    &        766         &        304         &       439       \\
						Group 2              &        814         &   \textbf{2136}    &        405         &        555         &       690       \\
						Group 3              &        989         &        558         &        557         &        1118        &  \textbf{1378}  \\
						Group 4              &        522         &         58         &        813         &        1534        & \textbf{ 1673}  \\
						Group 5              &        338         &         48         &        1006        &        1565        &  \textbf{1643}  \\
						Total                &        4022        &        4532        &        3547        &        5076        &  \textbf{5823}  \\ \hline\hline
					\end{tabular}
			} }
\end{center}
\caption{{\protect\footnotesize Predictive accuracy of five competing
methods for the 23,000 annual time series from the M4 competition. The
labels M4-1$^{\text{st}}$, M4-2$^{\text{nd}}$, M4-3$^{\text{rd}}$\ and M4-4$%
^{\text{th}}$\ denote the first, second, third and fourth best methods
(overall) in the MSIS category of M4. The label FBP refers to the focused
Bayesian prediction method applied to the ETS predictive class and using the
MSIS rule as the update. Panel A records various summaries of the 23,000
(positively-oriented) MSIS values; Panel B reports the number of series for
which a particular method performs best in each of the five groups described
in the text. {The bold font is used to indicate the best performing method
according to a given performance measure.}}}
\label{Tab:M4results}
\end{table}
Columns 1 to 4{\ of Panel A in }Table~\ref{Tab:M4results}{\ report various
summaries of the 23,000 MSIS values {- mean, median, standard deviation, and
multiple quantiles - }for each of the four competition methods. Based on the
mean results, the ranking of the four methods for the 23,000 annual series
is seen to equal their ranking in the full competition (as indicated by the
column labels).} Column 5 presents the summary statistics for FBP. First, we
observe that both the mean and the median of the{\ }MSIS values for FBP are 
\textit{larger} than those for M4-4$^{\text{th}}${\ (ETS/MLE)}, which
indicates that focusing \textit{does }enhance predictive performance over
and above simple use of the maximizer of the\textbf{\ }likelihood function
of the (inevitably misspecified) ETS model. Second, we observe that in terms
of median MSIS, FBP produces the \textit{best} results out of all five
methods considered. By looking at the quantiles of the distribution of the
23,000 scores for all methods we can glean why this is so. In particular,
the values of MSIS at the 1$^{st}$ quantile for M4-4$^{\text{th}}$ and FBP
are -348.52 and -340.00 respectively, whilst for M4-1$^{\text{st}}$, M4-2$^{%
\text{nd}}$ and M4-3$^{\text{rd}}$ the corresponding values are -183.17,
-297.98 and -214.84, respectively. In other words both ETS-based methods
forecast 1\% of the series quite poorly, which impacts on their mean MSIS.
In contrast, the ETS-based methods - and particularly FBP - forecast quite
accurately those series that produce MSIS values that are in the middle of
the distribution, or in the upper tail.

To provide further insight here, we divide the 23,000 series into five
different groups of 4,600 series. Group 1 has the 4,600 series with the
smallest average MSIS,\textbf{\textit{\ }}computed across the five methods
(i.e. the series with the worst average results according to this measure),
while Group 5 comprises the 4,600 series with the largest average MSIS
values. The remaining groups span the middle. For each of these groups we
count how many series are best predicted by each of the five methods. The
results are presented in Panel B of Table~\ref{Tab:M4results}. First, we
observe that for\textit{\ all }groups FBP has a larger number of favourable
results than does the likelihood-based ETS method (M4-4$^{\text{th}}$).
Second, for the last three groups FBP outperforms \textit{all four} other
methods. Last, observe that the method that is best at predicting the highest%
\textbf{\ }number of series overall is FBP, with a total of 5,823 series
(out of 23,000) for which it is the best performer. Focusing on predictive
interval accuracy in the update has clearly yielded benefits out-of-sample.

\section{Discussion{\label{section:discuss}}}

We have proposed a new paradigm for Bayesian prediction that can deliver
accurate predictions in whatever metric is most meaningful for the problem
at hand. By replacing the conventional likelihood function in the Bayesian
update with an appropriate function of a proper scoring rule, the resultant
posterior - by construction - gives high probability mass to predictive
distributions that yield high scores. Under regularity, the posterior
asymptotically concentrates onto the predictive that maximizes the expected
scoring rule and, {in this sense}, yields the most accurate predictions in
the given class. Movement away from the conventional Bayesian approach does
involve the choice of a scale factor, which determines the relative weight
of the prior and data-based components of the posterior. However, this
factor can be chosen via common sense criteria in finite samples.

The choice of predictive class used is not pre-determined, and can be any
plausible class of predictives, or combination of predictives, that {captures%
} the features of the data {that} matter for prediction. In particular, the
class can be deliberately selected to be computationally simple; {the aim of 
}the {approach }\textit{not}{\ being to} perfectly capture \textit{all}
aspects of the true DGP, but to achieve a particular type of\textbf{\ }%
predictive accuracy via a suitable Bayesian update. Whilst the emphasis in
the paper has been on classes of parametric predictives, or finite
combinations of predictives, the paradigm, in principle, applies to
nonparametric predictive settings also. Similarly, the focus on time series
data, and `forecasting' future values of random variables indexed by time,
has also been a choice on our part, and not something that is intrinsic to
the methodology.

{The scope and value of our }new approach to prediction{\ are demonstrated
using illustrations with simulated and empirical data. The benefits reaped
by `focusing' on the type of predictive accuracy that matters are stark,
with }virtually \textit{all }numerical results showing gains over and above
conventional (misspecified) likelihood-based prediction. Whilst a reduction
in the\textbf{\ }degree of misspecification of the predictive class does
lead to more homogeneous predictions across methods, for a large enough
sample the superiority of FBP is still almost always in evidence, even when
the predictive class is a reasonable representation of the true DGP.

Some obvious avenues of future research remain open. First, the ability to
generate accurate predictions via a simplistic representation of the truth
means that models with, for example, intractable likelihood functions, can
now be treated using exact simulation methods like MCMC, rather than via
approximate methods like approximate Bayesian computation (%
\citealp{sisson2018overview}) or Bayesian synthetic likelihood (%
\citealp{price2018bayesian}). Comparison between the results yielded by FBP
and those yielded by predictives based on such approximate methods of
inference (e.g. \citealp{frazier2019approximate}) would be of interest.
Second, in cases where the set of unknowns is very large, and an MCMC
simulation method is likely to be inefficient, variational methods could be
explored. In particular {- and in the spirit of the }\textit{generalized
variational inference }{proposed by \cite{GVI2019} }- the choice of
variational family, {and/or the }distance measure to be minimized, could be
driven by the specific prediction focus of any problem. Finally, all results
in this paper have been based on proper scoring rules only. This is by no
means essential, but has simply been a decision made to render the scope of
the paper manageable. In practice, any loss function in which predictive
accuracy plays a role (e.g. financial loss functions associated with optimal
portfolios of predicted returns; asymmetric loss functions associated with
predicted energy demand) can drive the Bayesian update, and thereby give
high weight to predictive distributions that yield small loss.

\baselineskip14pt

{\normalsize 
\bibliographystyle{Myapalike}
\bibliography{Project_bib_Gael}
}

\appendix

\baselineskip14pt

\section{Assumptions and Proofs of Main Results}

\label{app:A}

{\normalsize Consider a stochastic process $\{y_{t}:\Omega \rightarrow 
\mathbb{R},t\in \mathbb{N}\}$ defined on the complete probability space $%
(\Omega ,\mathcal{F},P)$. Let $\mathcal{F}_{t}:=\sigma (y_{1},\dots ,y_{t})$
denote the natural sigma-field. The model predictive class is given by $%
\mathcal{P}^{t}:=\left\{ P_{\boldsymbol{\theta }}^{t}:\boldsymbol{\theta }%
\in \Theta \right\} $, where the parameter space is a (possibly non-compact)
subset of the Euclidean space $\Theta \subseteq \mathbb{R}^{d_{\theta }}$, }%
with $d_{\theta }$ the dimension of $\boldsymbol{\theta }$, {\normalsize and
where for any $A\subset \mathcal{F}$, $P_{\boldsymbol{\theta }}^{t}(A):=P(A|%
\mathcal{F}_{t},\boldsymbol{\theta })$.}

{\normalsize Define the $\delta $-neighborhood of $\Theta $, around the
point $\boldsymbol{\theta }_{\ast }$, as $\mathcal{N}_{\delta }(\boldsymbol{%
\theta }_{\ast }):=\{\boldsymbol{\ \theta }:\Vert \boldsymbol{\theta }-%
\boldsymbol{\theta }_{\ast }\Vert \leq \delta \},$ where $\Vert \cdot \Vert $
denotes the Euclidean norm. We say two sequences $x_{n}$ and $z_{n}$ satisfy 
$x_{n}\lesssim z_{n}$ if there is a constant $C>0$ and an $n^{\prime }$ such
that for all $n>n^{\prime }$, $x_{n^{}}\leq Cz_{n}$. Throughout the
remainder, we let $C$ denote a generic constant that can change from line to
line. }

With a slight change in notation, {\normalsize recall the empirical scoring
rule calculated from $\mathbf{y}_{n}$: 
\begin{equation*}
S_{n}(\boldsymbol{\theta }):=\sum_{t=0}^{n-1}S(P_{\boldsymbol{\theta }%
}^{t},y_{t+1}),
\end{equation*}%
}and recall the limit optimizer,{\normalsize \ 
\begin{equation*}
\boldsymbol{\theta }_{\ast }:=\arg \max_{P_{\boldsymbol{\theta }}^{t}\in 
\mathcal{P}}\left[ \plim_{n\rightarrow \infty }\sum_{t=0}^{n-1}S(P_{%
\boldsymbol{\theta }}^{t},y_{t+1})/v_{n}^{2}\right] ,
\end{equation*}%
}for $v_{n}$ a positive sequence diverging to $\infty $ as $n\rightarrow
\infty ${\normalsize \textbf{.} The value $\boldsymbol{\theta }_{\ast }$
corresponds to the predictive $P_{\boldsymbol{\theta }}^{t}\in \mathcal{P}%
^{t}$ such that, in the limit, we minimize the expected }(i.e., asymptotic) 
{\normalsize loss, in terms of $S(\cdot ,y)$, between the assumed predictive
class, $\mathcal{P}$, and the true DGP }{that underlies the observed values
of }$y_{t+1}.$

{\normalsize We consider the following regularity conditions on the function 
$S_{n}$ and the prior $\pi (\cdot )$.}

\begin{assumption}
{\normalsize \label{ass:post} The prior density $\pi(\boldsymbol{\theta})$
is continuous on $\|\boldsymbol{\theta}-\boldsymbol{\theta}_*\|\leq\delta$,
for some $\delta>0$, and positive for all $\boldsymbol{\theta}\in\Theta$. In
addition, for any $t>0$, and any $\kappa$, $0<\kappa<\infty$, there exists
some $C>0$, and $p> \kappa$ such that 
\begin{equation*}
\Pi\left(\|\boldsymbol{\theta}-\boldsymbol{\theta}_*\|> t\right)\leq C
t^{-p}.
\end{equation*}
}
\end{assumption}

\begin{assumption}
{\normalsize \label{ass:ident} There exists a sequence $v_n$ diverging to $%
\infty$ such that, for any $\delta>0$ there exists some $\epsilon>0$, 
\begin{equation*}
\lim\inf _{n \rightarrow \infty} \text{\emph{Pr}}\left\{\sup _{\boldsymbol{\
\theta}\in\mathcal{N}^c_\delta(\boldsymbol{\theta}_*)} \frac{1}{v_n^2}\left[
S_{n}(\boldsymbol{\theta})-S_{n}(\boldsymbol{\theta}_{*})\right]
\leq-\epsilon \right\}=1.
\end{equation*}
}
\end{assumption}

\begin{assumption}
{\normalsize \label{ass:expand} For some $\delta>0$, the following are
satisfied uniformly for $\boldsymbol{\theta}\in\mathcal{N}_\delta( 
\boldsymbol{\theta}_*)$: There exists a sequence $v_n$ diverging to $\infty$%
, a vector function $\boldsymbol{\Delta}_n(\boldsymbol{\theta})$, matrices $%
H_n$ and $V_n$, such that }

\begin{enumerate}
\item {\normalsize $S_n(\boldsymbol{\theta})-S_n(\boldsymbol{\theta}
_*)=v_n\left(\boldsymbol{\theta}-\boldsymbol{\theta}_{*}\right)^{\prime} 
\boldsymbol{\Delta}_{n}\left(\boldsymbol{\theta}_{*}\right)/v_n-\frac{1}{2}
v_n\left(\boldsymbol{\theta}-\boldsymbol{\theta}_{*}\right)^{\prime}
H_{n}v_n\left(\boldsymbol{\theta}- \boldsymbol{\theta}_{*}\right)+R_{n}(%
\boldsymbol{\theta})$ }

\item {\normalsize $V_{n}^{-1/2}\boldsymbol{\Delta} _{n}\left(\boldsymbol{%
\theta}_{*}\right)/v_n\Rightarrow N(\boldsymbol{0},I)$ under $P$. }

\item {\normalsize For $n$ large enough, $H_n^{}$ is positive-definite and,
for some matrix $H$, $H_n\rightarrow_p H$. }

\item {\normalsize For any $\epsilon>0$, there exists $M\rightarrow\infty $
and $\delta=o(1)$ such that 
\begin{align*}
\lim\sup_{n \rightarrow \infty}& \text{\emph{Pr}}\left[\sup _{\frac{M}{v_n}
\le \|\boldsymbol{\theta}-\boldsymbol{\theta}_*\|\le\delta}\frac{|R_n( 
\boldsymbol{\theta})|}{v_n^2\|\boldsymbol{\theta}-\boldsymbol{\theta}_*\|^2}
>\epsilon \right]<\epsilon\text{ and } \lim\sup_{n \rightarrow \infty} & 
\text{\emph{Pr}}\left[\sup _{ \|\boldsymbol{\theta}-\boldsymbol{\theta}
_*\|\le \frac{M}{v_n}}\frac{|R_n(\boldsymbol{\theta})|}{v_n^2\|\boldsymbol{\
\theta}-\boldsymbol{\theta}_*\|^2}>\epsilon \right]=0
\end{align*}
}
\end{enumerate}
\end{assumption}

\begin{remark}
The use of $v_{n}$ in the assumptions allows us to capture cases where the
scoring rules may converge at rates other than the canonical $\sqrt{n}$.
Such instances include situations where the data are trended or otherwise
non-stationary.
\end{remark}

\begin{remark}
Assumption \ref{ass:post} places regularity conditions on the prior used
within the analysis and is standard in the literature on Bayesian inference.
Assumptions \ref{ass:ident} and \ref{ass:expand} are sufficient conditions
that can be used to deduce $v_{n}$-consistency and asymptotic normality of
the M-estimator based on minimizing $S_{n}(\boldsymbol{\theta })$, with
these assumptions being similar to those used in \cite{CH03}. These
assumptions are reasonable in any setting where the resulting M-estimator is
expected to display regular asymptotic behavior. In the case of log-score,
these assumptions are akin to the usual sufficient conditions used to
demonstrate consistency and asymptotic normality of the quasi-maximum
likelihood estimator. Indeed, in the context of the log-score, Assumptions %
\ref{ass:ident} and \ref{ass:expand} can be directly verified, for example,
for the ARCH(1) and GARCH(1,1) models used in Sections 4 and 5.1 using the
results of \cite{lee1994asymptotic}. However, we remark that the validation
of Assumptions \ref{ass:ident} and \ref{ass:expand} in the context of other
models, and other scoring rules, such as the censored score, CRPS or MSIS,%
\textbf{\ }entails substantive technical challenges that are best analyzed
in future work.
\end{remark}

\subsection{\protect\normalsize Posterior concentration}

{Consider}{\normalsize \ the FBP posterior }pdf 
\begin{equation}
\pi _{w}\left( \boldsymbol{\theta }|\mathbf{y}_{n}\right) =\frac{\exp \left[
w_{n}S_{n}({\boldsymbol{\theta }})\right] \pi (\boldsymbol{\theta })}{%
\int_{\Theta }\exp \left[ w_{n}S_{n}({\boldsymbol{\theta }})\right] \pi (%
\boldsymbol{\theta })\text{d}\boldsymbol{\theta }}.  \label{post_pdf}
\end{equation}%
{\normalsize Under Assumptions \ref{ass:post}-\ref{ass:expand}, we prove a
generalization of the result in Lemma \ref{lem:two}, which demonstrates that
the scaled posterior ${\tilde{\pi}}_{w}\left( {\boldsymbol{\eta }}|\mathbf{y}%
_{n}\right) :=\pi _{w}(\boldsymbol{\theta }|\mathbf{y}_{n})/v_{n}$, {where }}%
$\pi _{w}(\boldsymbol{\theta }|\mathbf{y}_{n})$ {is defined in (\ref%
{post_pdf})}, {\normalsize converges to a Gaussian version in the total
variation of moments norm. The result of Lemma \ref{lem:one} then follows by
taking $v_{n}=\sqrt{n}$ and considering the standard definition of $S_{n}(%
\boldsymbol{\theta })$. }

\begin{proposition}
{\normalsize \label{prop:bvm} For 
\begin{equation*}
\boldsymbol{\eta}:=v_n\left(\boldsymbol{\theta}-\boldsymbol{\theta}
_{*}\right)-H_n^{-1}\boldsymbol{\Delta} _{n}\left(\boldsymbol{\theta}%
_{*}\right)/v_n,
\end{equation*}
and 
\begin{equation*}
\phi\left(\boldsymbol{\eta}|H_n\right):= Q_n\exp\left[-\frac{C}{2} 
\boldsymbol{\eta}^{\prime }H_n\boldsymbol{\eta}\right]
\end{equation*}
with $Q_n:=\sqrt{C\text{det}\left[H_n\right] (2\pi)^{-d_\theta}}$, if
Assumptions \ref{ass:post}-\ref{ass:expand} are satisfied and $\lim_n w_n=C$%
, $0<C<\infty$, then 
\begin{equation*}
\int\left[1+\|\boldsymbol{\eta}\|^\kappa\right]\left|\left.\tilde{\pi}_w( 
\boldsymbol{\eta}|\mathbf{y}_n)-\phi\left(\boldsymbol{\eta}
|H_n\right)\right.\right| \text{d}\boldsymbol{\eta}=o_{p}(1).
\end{equation*}
}
\end{proposition}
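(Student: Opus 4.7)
The plan is to adapt the standard Laplace-approximation route to a Bernstein–von Mises type theorem, as in \cite{CH03}, and upgrade ordinary $L^1$ convergence to convergence in the moments norm via the polynomial prior tail control in Assumption \ref{ass:post}. First, I would perform a localization step: split $\Theta$ into $\mathcal{N}_\delta(\boldsymbol{\theta}_*)$ and its complement. Using Assumption \ref{ass:ident}, outside $\mathcal{N}_\delta(\boldsymbol{\theta}_*)$ the ratio $\exp[w_n(S_n(\boldsymbol{\theta}) - S_n(\boldsymbol{\theta}_*))]$ is bounded above (with probability approaching one) by $\exp(-\epsilon C v_n^2)$ for some $\epsilon>0$. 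Combining this with the polynomial prior tail in Assumption \ref{ass:post}, the contribution of $\mathcal{N}_\delta^c(\boldsymbol{\theta}_*)$ to $\int[1+\|\boldsymbol{\eta}\|^\kappa]\tilde{\pi}_w(\boldsymbol{\eta}|\mathbf{y}_n)\text{d}\boldsymbol{\eta}$ is $o_p(1)$, so the analysis reduces to the neighborhood.

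Second, on $\mathcal{N}_\delta(\boldsymbol{\theta}_*)$, I would apply the expansion of Assumption \ref{ass:expand}, writing
\[
w_n\bigl[S_n(\boldsymbol{\theta})-S_n(\boldsymbol{\theta}_*)\bigr]
= -\tfrac{w_n}{2}\bigl(\boldsymbol{\eta}+r_n\bigr)^{\prime}H_n\bigl(\boldsymbol{\eta}+r_n\bigr) + \tfrac{w_n}{2}r_n^{\prime}H_n r_n + w_n R_n(\boldsymbol{\theta}),
\]
after completing the square and substituting $\boldsymbol{\eta}=v_n(\boldsymbol{\theta}-\boldsymbol{\theta}_*)-H_n^{-1}\boldsymbol{\Delta}_n(\boldsymbol{\theta}_*)/v_n$, where $r_n := 0$ under this centering. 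The first term supplies the target Gaussian kernel once $w_n\to C$ is used. Condition 4 of Assumption \ref{ass:expand} then ensures $\sup_{\|\boldsymbol{\theta}-\boldsymbol{\theta}_*\|\le \delta}|w_n R_n(\boldsymbol{\theta})|$ is $o_p(1)$ on sets where $\|\boldsymbol{\eta}\|\le M$, and is dominated by a fraction of the quadratic form on the intermediate shell $M/v_n \le \|\boldsymbol{\theta}-\boldsymbol{\theta}_*\|\le\delta$, so the exponential of the expansion concentrates pointwise on the Gaussian $\phi(\boldsymbol{\eta}|H_n)$ up to the normalizing constant.

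Third, I would handle the prior and the normalizing constant. Continuity of $\pi$ at $\boldsymbol{\theta}_*$ (Assumption \ref{ass:post}) gives $\pi(\boldsymbol{\theta})/\pi(\boldsymbol{\theta}_*)\to 1$ uniformly on the shrinking neighborhood, so the prior factor cancels between numerator and denominator of $\tilde{\pi}_w(\boldsymbol{\eta}|\mathbf{y}_n)$. The Jacobian from $\boldsymbol{\theta}\mapsto\boldsymbol{\eta}$ supplies the factor $v_n^{-d_\theta}$, which combines with $\text{det}[H_n]^{1/2}$ and $C$ (the limit of $w_n$) to yield the normalizing constant $Q_n$. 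Ordinary dominated convergence then delivers $\int|\tilde{\pi}_w(\boldsymbol{\eta}|\mathbf{y}_n)-\phi(\boldsymbol{\eta}|H_n)|\text{d}\boldsymbol{\eta}=o_p(1)$.

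The main obstacle is the final upgrade from plain total variation to the moments-weighted norm $\int[1+\|\boldsymbol{\eta}\|^\kappa]|\cdot|\text{d}\boldsymbol{\eta}$. The standard argument requires that the $\kappa$-th moment of $\tilde{\pi}_w$ be bounded in probability uniformly in $n$. I would get this by splitting again at a radius $r_n\to\infty$: inside a large ball, the local Gaussian-like behavior already gives bounded moments; outside, I would use the exponential decay from Assumption \ref{ass:ident} together with the tail decay of $\pi$ guaranteed by Assumption \ref{ass:post} with $p>\kappa$. Picking $p$ large enough relative to $\kappa$ ensures that $\int\|\boldsymbol{\eta}\|^\kappa \tilde{\pi}_w(\boldsymbol{\eta}|\mathbf{y}_n)\text{d}\boldsymbol{\eta}$ remains stochastically bounded, so a uniform-integrability argument promotes the $L^1$ convergence obtained above to convergence in the moments norm. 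The delicate bookkeeping on the intermediate shell where the quadratic approximation starts to degrade, and matching the polynomial order $p$ in Assumption \ref{ass:post} to the moment exponent $\kappa$, is where most of the technical care will be required.
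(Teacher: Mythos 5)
Your proposal is correct and follows essentially the same route as the paper's proof: a change of variables centred at $\boldsymbol{\theta}_*+H_n^{-1}\boldsymbol{\Delta}_n(\boldsymbol{\theta}_*)/v_n^2$, a three-region decomposition (inner ball $\|\boldsymbol{\eta}\|\le M$, intermediate shell $M\le\|\boldsymbol{\eta}\|\le\delta v_n$ controlled by Assumption \ref{ass:expand}.4, and the outer region killed by the exponential separation in Assumption \ref{ass:ident} against the polynomial prior tail with $p>\kappa$ from Assumption \ref{ass:post}), with the normalizing constant recovered from the $\kappa=0$ case. The only cosmetic difference is that you frame the moment-weighted convergence as a uniform-integrability upgrade of the $L^1$ result, whereas the paper integrates $\|\boldsymbol{\eta}\|^\kappa$ directly in each region; the ingredients are the same.
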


{\normalsize 
\begin{proof}
Define 
	$$\boldsymbol{L}_n:=\boldsymbol{\theta}_*+H_n^{-1}\boldsymbol{\Delta}_n(\boldsymbol{\theta}^*)/v_n^2
	$$ and note that by a change of variables $\boldsymbol{\theta}\mapsto\boldsymbol{\eta}$, we obtain the posterior
	$$\tilde{\pi}[\boldsymbol{\eta}|\mathbf { y } _ { n }]:=\frac{\pi\left(\boldsymbol{\eta}/v_n+\boldsymbol{L}_n\right)\exp\left[w_{n}S_n(\boldsymbol{\eta}/v_n+\boldsymbol{L}_n)\right]}{\int\pi\left(\boldsymbol{u}/v_n+\boldsymbol{L}_n\right)\exp\left[w_{n}S_n(\boldsymbol{u}/v_n+\boldsymbol{L}_n)\right]\dx \boldsymbol{u}}.$$ 
	Define 
	$$\gamma(\boldsymbol{\eta}):=S_n\left(
	\boldsymbol{\eta}/v_n+\boldsymbol{L}_n\right)-S_n(\boldsymbol{\theta}_*)-\frac{1}{2v_n^2}\Delta_n(\boldsymbol{\theta}_*)'H^{-1}_n\Delta_n(\boldsymbol{\theta}_*)$$ and note that 
	$$
	\tilde{\pi}[\boldsymbol{\eta}|\mathbf { y } _ { n }]:=\frac{\pi\left(\boldsymbol{\eta}/v_n+\boldsymbol{L}_n\right)\exp\left(w_{n} \gamma(\boldsymbol{\eta})\right)}{C_n},
	$$for $$C_n=\int \pi\left(\boldsymbol{u}/v_n+\boldsymbol{L}_ { n }\right)\exp(w_{n}\cdot\gamma(\boldsymbol{u}))\dx \boldsymbol{u}.$$

{\normalsize Note that we can write 
\begin{align*}
\int \|\boldsymbol{\eta}\|^\kappa\left|\tilde{\pi}\right(\boldsymbol{\eta}| 
\mathbf{\ y } _ { n }\left)-\phi\left(\boldsymbol{\eta}|H_n\right)\right|\dx 
\boldsymbol{\eta}&= \frac{1}{C_n} \int\|\boldsymbol{\eta}\|^\kappa {%
| }e^{w_n\gamma_{}(\boldsymbol{\eta})} \pi\left(\boldsymbol{\eta%
}/v_n+ \boldsymbol{L}_n\right)-\phi\left(\boldsymbol{\eta}|H_n\right)C_n {|}%
\dx \boldsymbol{\eta} \\
&={J_{n}}/{C_n},
\end{align*}
where 
\begin{align}
J_{n}:&=\int\|\boldsymbol{\eta}\|^\kappa\left|e^{w_n\gamma(\boldsymbol{\eta}
)} \pi_{}\left(\boldsymbol{\eta}/v_n+\boldsymbol{L}_n\right)-\phi\left( 
\boldsymbol{\eta}|H_n\right)C_n\right| \dx\boldsymbol{\eta}.
\end{align}
Now, bound $J_{n}$ as follows
\begin{align}
J_{n}&\le\int\|\boldsymbol{\eta}\|^\kappa\left\{\left|e^{w_n\gamma( 
	\boldsymbol{\eta})} \pi_{}\left(\boldsymbol{\eta}/v_n+\boldsymbol{L}_n\right)-\pi\left(\boldsymbol{\theta}_*\right)\phi\left(\boldsymbol{\eta}
|H_n\right)\textcolor{black}{\text{$Q_n^{-1}$}}\right|+\left|\left[\pi\left(\boldsymbol{\theta}_*\right)\textcolor{black}{\text{$Q_n^{-1}$}}-{%
	C }_{n}^{}\right]\phi\left(\boldsymbol{\eta}|H_n\right)\right|\right\} \dx 
\boldsymbol{\eta},  \notag\\&\leq J_{1n}+J_{2n},\notag
\end{align}
where 
\begin{align}
J_{1n}&:= \int\|\boldsymbol{\eta}\|^\kappa\left|e^{w_n\gamma(\boldsymbol{%
\eta })} \pi_{}\left(\boldsymbol{\eta}/v_n+\boldsymbol{L}_n\right)-
\pi_{}\left(\boldsymbol{\theta}_*\right)\phi\left(\boldsymbol{\eta}
|H_n\right)\textcolor{black}{\text{$Q_n^{-1}$}}\right| \dx\boldsymbol{\eta}  \label{eq:post_norm1} \\
J_{2n}&:=\left|\pi(\boldsymbol{\theta}_*)\textcolor{black}{\text{$Q_n^{-1}$}}-C_n\right|\int \| 
\boldsymbol{\eta}\|^\kappa\phi\left(\boldsymbol{\eta}|H_n\right) \dx 
\boldsymbol{\eta}  \label{eq:post_norm2}
\end{align}
}

{\normalsize The result follows if we can prove that $%
J_{1n}=o_{p}(1)$ since, taking $\kappa=0$, $J_{1n}=o_{p}(1)$ implies that 
\begin{align*}
\left|C_n-\pi(\boldsymbol{\theta}_*)\textcolor{black}{\text{$Q_n^{-1}$}} \right|&=\left|\int
e^{w_n\gamma(\boldsymbol{\eta})} \pi_{}\left(\boldsymbol{\eta}/v_n+ 
\boldsymbol{L}_n\right) \dx\boldsymbol{\eta}-\pi(\boldsymbol{\theta}_*)\textcolor{black}{\text{$Q_n^{-1}$}}\int
\phi\left(\boldsymbol{\eta}|H_n\right) \dx\boldsymbol{\eta}\right| \\
&=o_{p}(1),
\end{align*}
which implies that $J_{2n}=o_{p}(1)$ and therefore $J_{n}=o_{p}(1)$. 
}

{\normalsize Using Assumption \ref{ass:expand}.1, the
fact that $(\boldsymbol{\theta}-\boldsymbol{\theta}_{*})=\left( 
\boldsymbol{\eta}-\boldsymbol{\eta}_n^{*}\right)/v_n$ with $\boldsymbol{\eta}
_n^{*}:=-H_n^{-1}\boldsymbol{\Delta}_n/v_n$, and the definitions
of $\boldsymbol{L}_n$ and $\gamma(\boldsymbol{\eta})$, we re-express $\gamma( 
\boldsymbol{\eta})$ as 
\begin{align*}
\gamma(\boldsymbol{\eta})&=S_n\left(\boldsymbol{\theta}\right)- S_n( 
\boldsymbol{\theta}_*)-\frac{1}{2v_n^2}\boldsymbol{\Delta}_n(\boldsymbol{\
\theta}_*)^{\prime}\textcolor{black}{\text{$H_n^{-1}$}}\boldsymbol{\Delta}_n( 
\boldsymbol{\theta}_*) \\
&= \frac{\left(\boldsymbol{\eta}-\boldsymbol{\eta}_n^{*}\right)}{v_n}
^{\prime }\boldsymbol{\Delta}_n(\boldsymbol{\theta}_*)-\frac{v^2_n}{2}\frac{
\left(\boldsymbol{\eta}-\boldsymbol{\eta}_n^{*}\right)}{v_n}^{\prime }H_n 
\frac{\left(\boldsymbol{\eta}-\boldsymbol{\eta}_n^{*}\right)}{v_n}-\frac{1}{
2v_n^2}\boldsymbol{\Delta}_n(\boldsymbol{\theta}_*)^{\prime }\textcolor{black}{\text{$H_n^{-1}$}}\boldsymbol{%
\ \Delta}_n(\boldsymbol{\theta}_*)+R_n\left(\boldsymbol{\eta}/v_n+ 
\boldsymbol{L}_n\right) \\
&=-\frac{1}{2}{\boldsymbol{\eta}^{\prime }}H_n{\boldsymbol{\eta}}+R_n\left( 
\boldsymbol{\eta}/v_n+\boldsymbol{L}_n\right).
\end{align*}
}

{\normalsize To demonstrate that $J_{1n}=o_p(1)$, we split the integral for $J_{1n}$ into three regions: for $%
0<M<\infty$, and some $\delta>0$, }

\begin{enumerate}
\item {\normalsize $\|\boldsymbol{\eta}\|\leq M$ }

\item {\normalsize $M<\|\boldsymbol{\eta}\|\leq \delta v_n$ }

\item {\normalsize $\|\boldsymbol{\eta}\|\geq \delta v_n$. }
\end{enumerate}

{\normalsize \noindent\textbf{\textbf{Area 1}:} Over $\|\boldsymbol{\eta}
\|\leq M$, 
\begin{align*}
\sup_{\|\boldsymbol{\eta}\|\leq M}&\left|\pi\left(\boldsymbol{
\eta}/v_n+\boldsymbol{L}_n\right)-\pi(\boldsymbol{\theta}_*)\right|=o_{p}(1), \\
\sup_{\|\boldsymbol{\eta}\|\leq M}&\left|R_n\left(\boldsymbol{
\eta}/v_n+\boldsymbol{L}_n\right)\right|=o_{p}(1).
\end{align*}
The first result follows from continuity of $\pi(\cdot)$ in Assumption \ref%
{ass:post}, the convergence of $\boldsymbol{\Delta}_n(\boldsymbol{\theta}_*)$
in Assumption \ref{ass:expand}.2, and the definition of $\boldsymbol{L}_n$. The
second result follows from the second part of Assumption \ref{ass:expand}.4.
The dominated convergence theorem then allows us to deduce that $%
J_{1n}=o_{p}(1)$ over $\|\boldsymbol{\eta}\|\leq M$. \bigskip \newline
\noindent\textbf{\textbf{Area 2}:} Over $M\leq \|\boldsymbol{\eta}\| \leq
\delta v_{n}$, the second term in the integral can be {made arbitrarily}
small by taking $M$ large enough and $\delta=o(1)$. It therefore suffices to
show that, for $M$ large enough and $\delta$ small enough, 
\begin{equation*}
\tilde{J}_{1n}:=\int_{M\leq \|\boldsymbol{\eta}\|\le \delta v_n}\| 
\boldsymbol{\eta}\|^{\kappa}\exp\left[w_n\gamma(\boldsymbol{\eta})\right]
\pi\left(\boldsymbol{\eta}/v_n+\boldsymbol{L}_n\right)\dx\boldsymbol{\eta}
=o_p(1).
\end{equation*}
From the definition of $\gamma(\boldsymbol{\eta})$, it follows that 
\begin{equation*}
\exp\left[w_n\gamma(\boldsymbol{\eta})\right]\leq \exp\left[-\frac{w_n}{2} 
\boldsymbol{\eta}^{\prime }H_n \boldsymbol{\eta}+w_n\left|R_n\left(\boldsymbol{\eta}/v_n+\boldsymbol{L}_n\right)\right|\right].
\end{equation*}
By the first part of Assumption \ref{ass:expand}.4, on the
set $\{\boldsymbol{\eta}: M\leq\|\boldsymbol{\eta}\| \leq \delta v_n\}$, 
\begin{align*}
\left|R_{n}\left(\boldsymbol{\eta}/v_n+\boldsymbol{L}
_n\right)\right|&=o_p\left(v_n^2\|v_n^{-1}\left[\boldsymbol{\eta}- 
\boldsymbol{\eta}_n^*\right]\|^{2}\right)=o_p\left(\|\boldsymbol{\eta}+\frac{1}{v_n}H_n^{-1}\boldsymbol{\Delta}
_n\|^{2}\right).
\end{align*}
}

{\normalsize Since $\|\boldsymbol{\eta}_n^*\|^2=O_p(1)$, we conclude that,
for some $C>0$, on the set $A_{M,\delta} = \{\boldsymbol{\eta}: M\leq\| 
\boldsymbol{\eta}\| \leq \delta v_n\}$, 
\begin{equation*}
\exp\left[w_n\gamma(\boldsymbol{\eta})\right]\leq C\exp\left[-\frac{w_n}{2} 
\boldsymbol{\eta}^{\prime }H_n \boldsymbol{\eta}+o_p\left(w_n\|\boldsymbol{\
\eta}\|^2\right)\right].
\end{equation*}
We then have that, for some $M\rightarrow\infty$, $\tilde{J}_{1n}\leq
K_{1n}+K_{2n}+K_{3n}$, where 
\begin{align*}
K_{1n}:=&\int_{A_{M,\delta}}\|\boldsymbol{\eta}\|^\kappa\exp\left(-\frac{w_n}{
2}\boldsymbol{\eta}^{\prime }H_n\boldsymbol{\eta}\right)\sup _{\|\boldsymbol{%
\ \eta}\| \leq M}\left|\exp\left[o_p\left(w_n\|\boldsymbol{\eta}
\|^2\right)\right]\pi_{}\left(\boldsymbol{\eta}/v_n+ 
\boldsymbol{L}_n\right)-\pi_{}\left(\boldsymbol{\theta}_*\right)\right|\dx 
\boldsymbol{\eta} \\
K_{2n}:=&\int_{A_{M,\delta}}\|\boldsymbol{\eta}\|^\kappa\exp\left[-\frac{w_n}{
2}\boldsymbol{\eta}^{\prime }H_n \boldsymbol{\eta}+o_p\left(w_n\|\boldsymbol{%
\ \eta}\|^2\right)\right] \pi_{}\left(\boldsymbol{%
\eta}/v_n +\boldsymbol{L}_n\right)\dx\boldsymbol{\eta} \\
K_{3n}:=&\pi_{}\left(\boldsymbol{\theta}_*\right) \int_{A_{M,\delta}}\| 
\boldsymbol{\eta}\|^\kappa\exp\left[-\frac{w_n}{2}\boldsymbol{\eta}^{\prime
}H_n \boldsymbol{\eta}+o_p\left(w_n\|\boldsymbol{\eta}\|^2\right) %
\right]\dx\boldsymbol{\eta}
\end{align*}
For any fixed $M$, $K_{1n}=o_{p}(1)$, hence, for some sequence $%
M\rightarrow\infty$, by the dominated convergence theorem, it follows that $%
K_{1n}=o_{p}(1)$. For the second and third terms, we note the following: }

{\normalsize \noindent (i) For any $0<\kappa<\infty$, on the set $\{ 
\boldsymbol{\eta}:\|\boldsymbol{\eta}\|\ge M\}$, there exists some $%
M^{\prime }$ large enough such that for all $M>M^{\prime }$: 
\begin{equation*}
\|\boldsymbol{\eta}\|^\kappa\exp\left(-\boldsymbol{\eta}^{\intercal}H_n 
\boldsymbol{\eta}\right)=O(1/M).
\end{equation*}
}

{\normalsize \noindent (ii) From the definition of $\boldsymbol{L}_n$, 
$\boldsymbol{L}_n\rightarrow_{p}\boldsymbol{\theta}_*$ and by continuity of $%
\pi(\cdot)$, on the set $\{\boldsymbol{\eta}:M\le\|\boldsymbol{\eta}
\|\le \delta v_n\}$, for $\delta=o(1)$, we conclude that 
\begin{equation*}
\pi(\boldsymbol{L}_n+\boldsymbol{\eta}/v_n)=\pi(\boldsymbol{\theta}_*)+o_p(1).
\end{equation*}
}

{\normalsize Applying (i) yields $K_{3n}=o_{p}(1)$. Applying (i) and (ii)
together yields $K_{2n}=o_{p}(1)$. }

{\normalsize \medskip }

{\normalsize \noindent\textbf{\textbf{Area 3}:} Over $\|\boldsymbol{\eta}
\|\geq \delta v_n$. For $\delta v_n$ large 
\begin{equation*}
Q_n^{-1}\int_{\|\boldsymbol{\eta}\|\ge \delta v_n}\|\boldsymbol{\eta}\|^\kappa\phi( 
\boldsymbol{\eta}|H_n)\pi(\boldsymbol{\theta}_*)\dx\boldsymbol{\eta}=o(1).
\end{equation*}
Now, focus on 
\begin{equation*}
\tilde{J}_{1n}:=\int_{\|\boldsymbol{\eta}\|\ge \delta v_n}\|\boldsymbol{\eta}
\|^\kappa e^{w_n\gamma(\boldsymbol{\eta})} \pi\left(\boldsymbol{%
\ \eta}/v_n+L_{n}\right)\dx\boldsymbol{\eta}.
\end{equation*}
The change of variables $\boldsymbol{\theta}=\boldsymbol{\eta}/v_n+ 
\boldsymbol{L}_n$, then yields 
\begin{align*}
\tilde{J}_{1n}=v_{n}^{d_\theta+\kappa}\int_{\|\boldsymbol{\theta}-\boldsymbol{L}_n\|\ge \delta }\|\boldsymbol{\theta}-\boldsymbol{L}_n\|^{\kappa}e^{w_n\left[%
S_n( \boldsymbol{\theta})-S_n(\boldsymbol{\theta}_*)-\frac{1}{2v_n^2}%
\boldsymbol{\ \Delta}_n(\boldsymbol{\theta}_*)^{\prime }H_{n}^{-1}\boldsymbol{\Delta}_n(\boldsymbol{\theta}_*)\right]}
\pi\left(\boldsymbol{\theta}\right)\dx \boldsymbol{\theta}.
\end{align*}
Use the fact that $\boldsymbol{L}_n=\boldsymbol{\theta}_*+o_p(1)$, and bound $\tilde{J}_{1n}$ by
\begin{equation*}
Ce^{-\frac{w_n}{2v_n^2}\boldsymbol{\Delta}_n(\boldsymbol{\theta}_*)^{\prime
}H_{n}^{-1}\boldsymbol{\Delta}_n(\boldsymbol{\theta}
_*)}v_{n}^{d_\theta+\kappa}\int_{\|\boldsymbol{\theta}-\boldsymbol{\theta}
_*\|\ge \delta }\|\boldsymbol{\theta}-\boldsymbol{\theta}_*\|^{\kappa}e^{w_n %
\left[S_n(\boldsymbol{\theta})-S_n(\boldsymbol{\theta}_*)\right]} \pi\left( 
\boldsymbol{\theta}\right)\dx \boldsymbol{\theta},
\end{equation*}
}

{\normalsize From Assumption \ref{ass:expand}.2, $\frac{1}{2v_n^2} 
\boldsymbol{\Delta}_n(\boldsymbol{\theta}_*)^{\prime }H_{n}^{-1}\boldsymbol{\Delta}_n(\boldsymbol{\theta}_*)=O_{p}(1)$ so that $%
e^{-\frac{w_n}{2v_n^2}\boldsymbol{\Delta}_n(\boldsymbol{\theta}_*)^{\prime
}H_{n}^{-1}(\boldsymbol{\theta}_*)\boldsymbol{\Delta}_n(\boldsymbol{\theta}
_*)}=O_{p}(1)$. By Assumption \ref{ass:ident}, for any $\delta>0$, there
exists some $\epsilon>0$ such that 
\begin{equation*}
\lim_{n\rightarrow\infty}\text{Pr}\left\{\sup_{\|\boldsymbol{\theta}- 
\boldsymbol{\theta}_*\|\geq \delta}\left[S_n(\boldsymbol{\theta})-S_n( 
\boldsymbol{\theta}_*)\right]\leq -v_n^2\epsilon\right\}=1.
\end{equation*}
Applying the above conclusion to $\tilde{J}_{1n}$ then yields, for $w_n$ a
positive convergence sequence, for $n$ large enough, 
\begin{align*}
\tilde{J}_{1n}\lesssim e^{-\epsilon w_nv_n^2}v_{n}^{d_\theta+\kappa}\int_{\| 
\boldsymbol{\theta}-\boldsymbol{\theta}_*\|\ge \delta }\|\boldsymbol{\theta}%
- \boldsymbol{\theta}_*\|^{\kappa}\pi\left(\boldsymbol{\theta}\right)\dx 
\boldsymbol{\theta}.
\end{align*}
Apply the above bound, Assumption \ref{ass:post}, and the dominated
convergence theorem to deduce 
\begin{align*}
\tilde{J}_{1n}\lesssim e^{-\epsilon w_nv_n^2}v_{n}^{d_\theta+\kappa}\int_{\| 
\boldsymbol{\theta}-\boldsymbol{\theta}_*\|\ge \delta }\|\boldsymbol{\theta}%
- \boldsymbol{\theta}_*\|^{\kappa}\pi\left(\boldsymbol{\theta}\right)\dx 
\boldsymbol{\theta}&\lesssim e^{-\epsilon
w_nv_n^2}v_{n}^{d_\theta+\kappa}\int_{\|\boldsymbol{\theta}-\boldsymbol{\
\theta}_*\|\ge \delta }\|\boldsymbol{\theta}-\boldsymbol{\theta}
_*\|^{\kappa}\pi\left(\boldsymbol{\theta}\right)\dx \boldsymbol{\theta}
+o_{p}(1) \\
&\leq e^{-\epsilon w_nv_n^2}v_{n}^{d_\theta +\kappa}\int_{ }\|\boldsymbol{\
\theta}-\boldsymbol{\theta}_*\|^{\kappa}\pi\left(\boldsymbol{\theta}\right)\dx 
\boldsymbol{\theta}+o_{p}(1) \\
&\lesssim e^{-\epsilon w_nv_n^2}v_{n}^{d_\theta+\kappa}+o_{p}(1).
\end{align*}
}
\end{proof}}

\subsection{\protect\normalsize Bayesian and frequentist agreement}

{\normalsize We now prove the result of Lemma \ref{lem:two} and,
subsequently, Theorem \ref{thm2}. 
\begin{proof}[Proof of Lemma \ref{lem:two}]
We first prove that the sequence $\|\widehat{\boldsymbol{\theta}}-\boldsymbol{\theta}_*\|=O_{p}(v_n^{-1})$. By assumption,
$$
o_p(v_n^{-1})\leq S_n(\hat{\boldsymbol{\theta}})-S_n(\boldsymbol{\theta}_*).
$$Applying
 \ref{ass:expand}.1 to the above we obtain
\begin{equation}\label{eq:expands}
o_p(v_n^{-1})\leq v_{n}\left(\widehat{\boldsymbol{\theta}}-\boldsymbol{\theta}_{\cdot}\right)^{\prime} \boldsymbol{\Delta}_{n}\left(\boldsymbol{\theta}_{*}\right) / v_{n}-\frac{1}{2} v_{n}\left(\widehat{\boldsymbol{\theta}}-\boldsymbol{\theta}_{*}\right)^{\prime}H_{n}v_{n}\left(\widehat{\boldsymbol{\theta}}-\boldsymbol{\theta}_{*}\right)+R_{n}(\widehat{\boldsymbol{\theta}}).
\end{equation}
From the consistency of $\widehat{\boldsymbol{\theta}}$, for any $\epsilon>0$, the following holds with probability at least $1-\epsilon$: there exists some sequence $\delta_{\epsilon,n}\rightarrow0$, such that $\|\widehat{\boldsymbol{\theta}}-\boldsymbol{\theta}_*\|< \delta_{\epsilon,n}$; by Assumption \ref{ass:expand}.2, there exists some $K_{\epsilon,n}$ such that $\|\boldsymbol{\Delta}_n(\boldsymbol{\theta}_*)/v_n\|<K_{\epsilon,n}$; for $\delta_{\epsilon,n}$ as before and for $H$ as in the statement of the theorem, $\|H_n-H\|<\delta_{\epsilon,n}$. Define $\boldsymbol{h}_n:=v_n(\widehat{\boldsymbol{\theta}}-\boldsymbol{\theta}_*)$. There then exists a sequence $K_{\epsilon,n}^{*}$ such that if we apply the above inequalities to equation \eqref{eq:expands} we obtain
\begin{flalign*}
o_p(v_n^{-1})&\leq -\|H^{1/2}\boldsymbol{h}_n\|^2/2+K^*_{\epsilon,n}\left(\|\boldsymbol{h}_n\|+o_{p}(\|\boldsymbol{h}_n\|^2)\right),
\end{flalign*}where the last term follows by applying the
second part of Assumption \ref{ass:expand}.4 and {the consistency} of $\widehat{\boldsymbol{\theta}}$.
We can rearrange this equation to obtain
	\begin{flalign*}
o_p(v_n^{-1})+\|H^{1/2}\boldsymbol{h}_n\|^2/2-K^*_{\epsilon,n}\left(\|\boldsymbol{h}_n\|+o_{p}(\|\boldsymbol{h}_n\|^2)\right)\leq 0.
	\end{flalign*}The above implies that, for some $K^{**}_{n\epsilon}=O_p(1)$, $\|\boldsymbol{h}_n\|<K^{**}_{n\epsilon}$, which yields the result.

The result now follows along lines similar to Theorem 5.23 in \cite{van1998}. For $\boldsymbol{h}_n:=v_n(\hat{\boldsymbol{\theta}}-\boldsymbol{\theta}_*)$, apply Assumption \ref{ass:expand}.3 to obtain
\begin{flalign*}
S(\boldsymbol{\theta}_*+\boldsymbol{h}_n/v_n)-S_n(\boldsymbol{\theta}_*)&= \boldsymbol{h}_n^{\prime} \boldsymbol{\Delta}_{n}\left(\boldsymbol{\theta}_{*}\right) / v_{n}+\boldsymbol{h}_n^{\prime}\left[-H_{n}\right] \boldsymbol{h}_n/2+o_p(1),\\
S(\boldsymbol{\theta}_*+H_{}^{-1}\boldsymbol{\Delta}_n/v_n^2)-S_n(\boldsymbol{\theta}_*)&=- \frac{1}{2}\frac{1}{v_n^2}\boldsymbol{\Delta}_n'\left[-H_{n}^{-1}\right]\boldsymbol{\Delta}_{n}\left(\boldsymbol{\theta}_{*}\right) +o_p(1),
\end{flalign*}
where the last term in the first equation follows from the $v_n$-consistency of $\widehat{\boldsymbol{\theta}}$ and in the second instance by Assumption \ref{ass:expand}.2. By the definition of $\widehat{\boldsymbol{\theta}}$ the LHS of the first equation is {greater than}, up to an $o_p(1)$ term, the LHS of the second equation. Therefore, subtracting the second equation from the first, and completing the square we have 
$$
\left(\boldsymbol{h}_n+H^{-1}\boldsymbol{\Delta}_n/v_n\right)^{\prime}\left[-H_n\right]\left(\boldsymbol{h}_n+H^{-1}\boldsymbol{\Delta}_n/v_n\right) +o_p(1)\ge0.
$$Because $\left[-H_{n}^{}\right]$ converges in probability to a negative definite matrix, it must follow that $\|\boldsymbol{h}_n-H_{}^{-1}\boldsymbol{\Delta}_n/v_n\|=o_p(1)$. The result then follows from the asymptotic normality in Assumption \ref{ass:expand}.2.

\end{proof}}

{\normalsize 
\begin{proof}[Proof of Theorem \ref{thm2}] 
Let $\rho_{H}$ denote the Hellinger metric: for absolutely
continuous probability measures $P$ and $G$,
\[
\rho_{H}\{P,G\}=\left\{  \frac{1}{2}\int\left[  \sqrt{{\dx P}}-\sqrt{{\dx G}%
}\right]  ^{2}\dx\mu\right\}  ^{1/2},\quad0\leq\rho_{H}\{P,G\}\leq1,
\]
for $\mu$ the Lebesgue measure, and define $\rho_{TV}$ to be the total
variation metric,
\[
\rho_{TV}\{P_{{}},G_{{}}\}=\sup_{B\in\mathcal{F}}|P(B)-G(B)|,\quad0\leq
\rho_{TV}\{P,G\}\leq2.
\]
Recall that, according to the definition of merging in \cite{blackwell1962}, two predictive measures $P$ and $G$ are said
to merge if
\[
\rho_{TV}\{P,G\}=o_{p}(1).
\]

Recall the definitions of $P^n_w$, $P^n_*$ in equations \eqref{P_fore} and \eqref{P_true} in the main text. Likewise, consider the following frequentist version of the FBP predictive: Let $\Pi[\cdot|\y_n]:=\mathcal{N}(\hat{\boldsymbol{\theta}},H^{-1}V_{}H^{-1}/v_n^{2})$ denote a normal measure with mean $\hat{\boldsymbol{\theta}}$ and variance $H^{-1}V_{}H^{-1}/v_n^{2}$ and consider the frequentist equivalent of the FBP predictive:
\begin{equation}\label{eq:other}
P^n_{\hat{\boldsymbol{\theta}}}=\int_\Theta P^n_{\boldsymbol{\theta}}\dx\Pi[\boldsymbol{\theta}|\y_n].
\end{equation}

The result follows similar arguments to those given in the proof of Theorem 1 in \cite{frazier2019approximate}. Fix $\epsilon>0$ and define the set $V_{\epsilon}:=\{\boldsymbol{\theta}\in
\Theta:\rho^{2}_{H}\{P^n_*,P^n_{\boldsymbol{\theta}}\}>\epsilon/4\}$. By convexity of $\rho^{2}_{H}\{P^n_*,\cdot\}$,
and Jensen's inequality, \begin{flalign*}
\rho^2_{H}\{P^n_*,P^n_w\}&\leq \int_{\Theta}\rho^2_{H}\{P^n_*,P^n_{\boldsymbol{\theta}}\}\dx\Pi_w[\boldsymbol{\theta}|\y_n]\\&= \int_{V_{\epsilon}}\rho^2_{H}\{P^n_*,P^n_{\boldsymbol{\theta}}\}\dx\Pi_w[\boldsymbol{\theta}|\y_n]+\int_{V_{\epsilon}^{c}}\rho^2_{H}\{P^n_*,P^n_{\boldsymbol{\theta}}\}\dx\Pi_w[\boldsymbol{\theta}|\y_n]\\&= \Pi_w[V_{\epsilon}|\y_n]+\frac{\epsilon}{4}\Pi_w[V_{\epsilon}^{c}|\y_n].
\end{flalign*}By definition, $\boldsymbol{\theta}_{*}\notin V_{\epsilon}$ and
therefore, by the posterior concentration of $\Pi_w[\cdot|\y_n]$ in Lemma \ref{lem:two}, $\Pi_w\lbrack V_{\epsilon
}|\mathbf{y}_n]=o_{p}(1)$. Hence, we can conclude:
\begin{equation}
\rho^{2}_{H}\{P^n_*,P^n_w\}\leq o_{p}
(1)+\frac{\epsilon}{4}. \label{hell1}%
\end{equation}
Likewise, for $\Pi[\cdot|\y_n]$ defined in equation \eqref{eq:other}, a similar argument yields
\begin{flalign}
\rho^2_{H}\{P^n_*,P^n_{\hat{\boldsymbol{\theta}}}\}&\leq \int_{\Theta}\rho^2_{H}\{P^n_*,P^n_{\boldsymbol{\theta}}\}\dx\Pi[\boldsymbol{\theta}|\y_n]\nonumber\\&= \int_{V_{\epsilon}}\rho^2_{H}\{P^n_*,P^n_{\boldsymbol{\theta}}\}\dx\Pi[\boldsymbol{\theta}|\y_n]+\int_{V_{\epsilon}^{c}}\rho^2_{H}\{P^n_*,P^n_{\boldsymbol{\theta}}\}\dx\Pi[\boldsymbol{\theta}|\y_n]\nonumber\\&= \Pi[V_{\epsilon}|\y_n]+\frac{\epsilon}{4}\Pi[V_{\epsilon}^{c}|\y_n]\nonumber\\&= o_{p}(1)+\frac
{\epsilon}{4}, \label{hell2}
\end{flalign}where the concentration in \eqref{hell2} follows as a result of Lemma \ref{lem:two} and the definition of $\Pi[\cdot|\y_n]$. Now, note that \begin{flalign*}
\frac{1}{2} \left[\rho^2_{H}\{P^n_*,P^n_w\}+ \rho^2_{H}\{P^n_*,P^n_{\hat{{\boldsymbol{\theta}}}}\}\right]&\geq \frac{1}{4}\left[\rho_{H}\{P^n_*,P^n_w\}+\rho_{H}\{P^n_*,P^n_{\hat{{\boldsymbol{\theta}}}}\}\right]^2\\&\geq \frac{1}{4}\left[\rho_H\{P^n_w,P^n_{\hat{{\boldsymbol{\theta}}}}\}\right]^2,
\end{flalign*}where the first line follows from the Cauchy-Schwartz inequality
and the second line from the triangle inequality. Applying equations
\eqref{hell1} and \eqref{hell2}, we then obtain \begin{flalign*}
\rho^2_{H}\{P^n_w,P^n_{\hat{{\boldsymbol{\theta}}}}\}&\leq {\epsilon}+o_{p}(1).
\end{flalign*}
Recall that, for probability distributions $P,G$,
\[
0\leq\rho^{2}_{TV}\{P,G\}\leq4\cdot\rho^{2}_{H}\{P,G\}.
\]
Applying this relationship between $\rho^{2}_{H}$ and $\rho^{2}_{TV}$, yields
the stated result.
	
\end{proof}}

\section{Computational Details}

\subsection{Setting $w_{n}$ for the CRPS rule\label{Appen:w_crps}}

{\normalsize Let $\Sigma _{w,n}$ denote the posterior covariance matrix of $%
\bm{\theta}$ calculated under the FBP posterior density $\pi _{w}\left( 
\boldsymbol{\theta }|\mathbf{y}_{n}\right) $, and which is associated with
an arbitrary choice of $w_{n}$. As discussed in the main text, the role of
the tuning sequence $w_{n}$ is to control the relative weight given to the
sample score criterion and {the prior} within the update. The choice of }$%
w_{n}$ {\normalsize is subjective in nature, in particular when the score
criterion cannot be interpreted as a likelihood function, as is the case for
the CRPS scoring rule. In this case then, we set $w_{n}$ }so that the rate%
{\normalsize \ of posterior {update} of CRPS-based }FBP is \textit{%
comparable\ }to that of exact (likelihood-based) Bayes.

To this end, let $\psi (n)$ denote the trace of the covariance matrix of $%
\boldsymbol{\theta }$ calculated under the exact Bayesian posterior in %
\eqref{exact_Bayes}. We then propose finding a sequence $w_{n}^{\ast }$ such
that the trace of $\Sigma _{w,n}$, when calculated under this sequence,
satisfies tr$\left[ \Sigma _{w^{\ast },n}\right] =\psi \left( n\right) .$
However, directly computing such a $w_{n}^{\ast }$ would entail solving the%
{\normalsize \ computationally intensive optimization problem, 
\begin{equation*}
w_{n}^{\ast }=\underset{w_{n}\in W}{\text{argmin}}\left( \left\{ \text{tr}%
\left[ \Sigma _{w,n}\right] -\psi \left( n\right) \right\} \right) ^{2}.
\end{equation*}%
{Hence,} {rather than pursuing} $w_{n}^{\ast }$ {directly, }we {seek} a
computationally efficient {approximation }}$\widetilde{w}_{n}^{\ast }$%
{\normalsize , defined as 
\begin{equation}
\widetilde{w}_{n}^{\ast }=\frac{\mathbb{E}_{\pi \left( \bm\theta |\mathbf{y}%
_{n}\right) }\left[ \log p\left( \mathbf{y}_{n}|\bm{\theta}\right) \right] }{%
\mathbb{E}_{\pi \left( \bm\theta |\mathbf{y}_{n}\right) }\left[ S_{n}(P_{%
\boldsymbol{\ \theta }}^{n})\right] },  \label{Eq:settingw1}
\end{equation}%
where $\mathbb{E}_{\pi \left( \bm\theta |\mathbf{y}_{n}\right) }$ indicates
an expectation computed under {(\ref{exact_Bayes})}},{\normalsize \textbf{\ }%
}$\log p\left( \mathbf{y}_{n}|\bm{\theta}\right) =\sum_{t=0}^{n-1}S_{\text{LS%
}}(P_{\boldsymbol{\theta }}^{t},y_{t+1})$ and $S_{n}(P_{\boldsymbol{\ \theta 
}}^{n})=\sum_{t=0}^{n-1}S_{\text{CRPS}}(P_{\boldsymbol{\theta }%
}^{t},y_{t+1}) $.{\normalsize \ To see how this sub-optimal choice of $w_{n}$
resembles $w_{n}^{\ast }$, consider the optimally-scaled scoring function }$%
S_{n}^{\ast }(P_{\boldsymbol{\theta }}^{n})=$ {\normalsize $w_{n}^{\ast }$}$%
S_{n}(P_{\boldsymbol{\theta }}^{n})${\normalsize . {Substituting} }$S_{n}(P_{%
\boldsymbol{\theta }}^{n})=\frac{S_{n}^{\ast }(P_{\boldsymbol{\theta }}^{n})%
}{{\normalsize w_{n}^{\ast }}}${\normalsize \ {into (\ref{Eq:settingw1})
yields} }

{\normalsize \ 
\begin{equation}
\widetilde{w}_{n}^{\ast }=w_{n}^{\ast }\frac{\mathbb{E}_{\pi \left( \bm%
\theta |\mathbf{y}_{n}\right) }\left[ \log p\left( \mathbf{y}_{n}|\bm{\theta}%
\right) \right] }{\mathbb{E}_{\pi \left( \bm\theta |\mathbf{y}_{n}\right) }%
\left[ S_{n}^{\ast }(P_{\boldsymbol{\theta }}^{n})\right] }.
\label{Eq:settingw2}
\end{equation}%
{From (\ref{Eq:settingw2})} we see that as long as the posterior means of
the optimally-scaled score }$S_{n}^{\ast }(P_{\boldsymbol{\theta }}^{n})$%
{\normalsize \ and $\log p\left( \mathbf{y}_{n}|\bm{\theta}\right) $ are
reasonably similar then, }$\widetilde{w}_{n}^{\ast }\approx $ {\normalsize $%
w_{n}^{\ast }$}, {and }{\normalsize $\text{tr}\left[ \Sigma _{\widetilde{w}%
_{n}^{\ast },n}\right] \approx \text{tr}\left[ \Sigma _{w^{\ast },n}\right]
=\psi \left( n\right) $ {as a consequence}. This assumption is not
unrealistic, since {\ - by construction - }}$S_{n}^{\ast }(P_{\boldsymbol{%
\theta }}^{n})${\normalsize \ and $\log p\left( \mathbf{y}_{n}|\bm{\theta}%
\right) $ are {score} functions {for which the respective sums of posterior
variances are equal}. Additionally, this sequence }$\widetilde{w}_{n}^{\ast
} ${\normalsize \ converges, }as{\normalsize \textbf{\ }}$n\rightarrow
\infty $,{\normalsize \ to a constant $C,$ and Theorem \ref{thm2} }thus%
{\normalsize \textbf{\ }still applies}. Of course, in practice, $\widetilde{w%
}_{n}^{\ast } $ itself needs to be estimated via draws from (\ref%
{exact_Bayes}). However, as this computation needs to be performed only
once, a sufficiently accurate estimate of $\widetilde{w}_{n}^{\ast }$ can be
produced via a large enough number of posterior draws. To keep the notation
streamlined in the main text, in Section \ref{overview} we simply record the
formula in (\ref{Eq:settingw1}) as the setting for $w_{n}$, and refer to the
simulation-based estimate{\normalsize \ used in the computations as }$%
\widehat{{\normalsize w}}${\normalsize $_{n}.$}

\subsection{Computational scheme\label{comp}}

For the predictive classes i) and ii) all posterior {updates} are performed
numerically, using a Metropolis-Hasting (MH) scheme to select draws of the
underlying $\boldsymbol{\theta }$. The MH acceptance ratio, at iteration $j$%
, is of the form 
\begin{equation}
\alpha =\text{min}\left\{ 0,\frac{\exp \left[ \widehat{w}_{n}S_{n}(P_{%
\boldsymbol{\theta }^{(c)}}^{n})\right] \pi \left( \boldsymbol{\theta }%
^{(c)}\right) }{\exp \left[ \widehat{w}_{n}S_{n}(P_{\boldsymbol{\theta }%
^{(j-1)}}^{n})\right] \pi \left( \boldsymbol{\theta }^{(j-1)}\right) }\times 
\frac{q\left( \boldsymbol{\theta }^{(j-1)}|\boldsymbol{\theta }^{(c)}\right) 
}{q\left( \boldsymbol{\theta }^{(c)}|\boldsymbol{\theta }^{(j-1)}\right) }%
\right\} ,  \label{mh_ratio}
\end{equation}%
where $q(.)$ denotes the candidate density function, $\boldsymbol{\theta }%
^{(c)}$ the draw from $q(.)$, $\boldsymbol{\theta }^{(j-1)}$ the previous
draw in the chain, {and }$\widehat{w}_{n}${\ the scale factor defined in (%
\ref{w_hat})}.

For the Gaussian ARCH(1) predictive class, the candidate density\ is a
truncated normal, $q\left( \boldsymbol{\theta }^{(c)}|\boldsymbol{\theta }%
^{(j-1)}\right) \propto \phi \left( \boldsymbol{\theta }^{(c)};\boldsymbol{\
\theta }^{(j-1)},\sigma _{q}^{2}I_{3}\right) I\left\{ \theta _{3}^{(c)}\in
\lbrack 0,1),\theta _{2}^{(c)}>0\right\} ,$ where $\phi $ denotes the normal
density function, $I_{3}$ is the three-dimensional identity matrix, and $%
\sigma _{q}$ controls the step size of the proposal, initiated at $\sigma
_{q}=0.05$ and then set adaptively to target acceptance rates between $30\%$
and $70\%$.

For the Gaussian GARCH(1,1) predictive class, the four elements of $%
\boldsymbol{\theta }$ are not sampled in one block; instead, they are
randomly assigned to two pairs at the beginning of each iteration. The pairs
are then sampled, one pair conditional on the other, with the
two-dimensional candidate density in each case equal to the product of two
independent scalar normals, truncated where appropriate to reflect the
parameter restrictions for the GARCH(1,1) model: $\theta _{2}>0$, $\theta
_{3}\in \lbrack 0,1)$,\textbf{\ }$\theta _{3}+\theta _{4}<1,$ and $\theta
_{4}\in \lbrack 0,1)$. The random assignment into pairs, plus the use of
independent normal candidates for each individual parameter, means that the
step size - and hence, targeted acceptance rates \ - for each parameter can
be controlled separately. The same form of adaptive scheme as described
above is used for each parameter. Because the sampling is conducted via two
MH steps, the formula for the acceptance ratio in (\ref{mh_ratio}) is
modified accordingly.\footnote{%
See \cite{roberts2009examples} for details on adaptive MCMC, and \cite%
{smith2015copula} for an illustration of its use with random allocation.}

Finally, for predictive class iii) the posterior {update} for the scalar
parameter $\theta _{1}$ is approximated numerically over a fine grid for the
scalar parameter $\theta _{1}$, on the unit interval. Draws of $\theta _{1}$
from the numerically evaluated posterior then define draws of predictives
from the posterior defined over the mixture predictive class.

\subsection{Setting $w_{n_{i}}$ for the MSIS rule\label{msis}}

For the {M4 competition }example in Section \ref{Sect:M4}, {we require the
setting of }$w_{n_{i}}$ for $S_{n_{i}}(P_{\boldsymbol{\theta }_{i}}^{n_{i}})$
{defined in terms of the MSIS scoring rule in (\ref{MSIS}), for all }$%
i=1,2,...,23,000.${\ In principle, the same approach could be adopted as
described in Section \ref{overview} for the case of the CRPS scoring rule.
However, that would entail {23,000 preliminary runs }of MCMC to estimate
each }$\widehat{{\normalsize w}}${\normalsize $_{n_{i}}$}. Hence, we propose
a more computationally efficient way of setting each $w_{n_{i}}$ for this
example. The basic idea is to control for the scale of $S_{n_{i}}(P_{%
\boldsymbol{\theta }_{i}}^{n_{i}})$ directly by imposing\ $%
w_{n_{i}}S_{n_{i}}\left( P_{\boldsymbol{\theta ^{+}}}^{n_{i}}\right) =\gamma
\left( n_{i}\right) $, where $\gamma \left( n_{i}\right) $ is a user-defined
deterministic function that serves as a benchmark scale for the term
entering the exponential function in the expression for the FBP posterior,
and $\boldsymbol{\theta }_{i}^{+}$ is a representative parameter value,
taken as the MLE of $\boldsymbol{\theta }_{i}$ in the empirical example. Via
a series of arguments that amount to adopting an approximating Gaussian
model, we set $\gamma \left( n_{i}\right) =-\frac{1}{2}n_{i}d_{\theta _{i}}$%
, where $d_{\theta _{i}}$ is the dimension of $\boldsymbol{\theta }_{i}$.
With this choice of $\gamma (n_{i})$, the scaling constant $w_{n_{i}}$ is
thus set to $w_{n_{i}}=-n_{i}d_{\theta _{i}}/2S_{n_{i}}\left( P_{\boldsymbol{%
\theta _{i}^{+}}}^{n_{i}}\right) $. This choice of $w_{n_{i}}$ converges
asymptotically to a constant $C$ and, as such, Theorem \ref{thm2} still
applies.

\section{Predictive assessment of expected shortfall}

\label{app:ES}Using generic notation for the time being,\textbf{\ }let $%
y_{n+1}$ be a random variable with support {$\mathcal{Y}$ and} with true
predictive\textbf{\ }distribution $F$, conditional on time\textbf{\ }$n$
information. Let $\mathcal{P}^{n}$ denote a class of predictive
distributions for\textbf{\ }$y_{n+1}$, with arbitrary element $P^{n}\in 
\mathcal{P}^{n}$, and let $T_{\alpha }(P^{n})=\left( \text{VaR}_{\alpha
}(P^{n}),\text{ES}_{\alpha }(P^{n})\right) ^{\prime }$, denote a functional
of VaR and ES constructed from\textbf{\ }$P^{n}$, where:%
\begin{equation*}
\text{VaR}_{\alpha }(P^{n}):=\inf \{y\in \mathcal{Y}:P^{n}(y)\geq \alpha \}
\end{equation*}%
and 
\begin{equation*}
\text{ES}_{\alpha }(P^{n}):=\frac{1}{\alpha }\int_{0}^{\alpha }\text{Var}%
_{a}(P^{n})\text{d}a.
\end{equation*}%
Note that this definition of ES$_{\alpha }$ is equivalent to our earlier
definition of\textbf{\ }ES$_{\alpha }$ in Section \ref{Sect:sect433} as a
conditional expectation of $y_{n+1}$ with respect to a given\textbf{\ }$%
P^{n}.$

\cite{ziegel2020robust} propose to measure the accuracy with which ES can be
predicted, across any $P^{n}\in \mathcal{P}^{n}$, using the following
(positively-oriented) scoring function:{\ for $y\in \mathcal{Y}$, }%
\begin{eqnarray}
f_{\eta }\left[ T_{\alpha }(P^{n}),{y}\right] &=&-I\left[ \eta \leq \text{ES}%
_{\alpha }\left( P^{n}\right) \right] \left\{ (1/\alpha )I\left[ {y}\leq 
\text{VaR}_{\alpha }\left( P^{n}\right) \right] \left[ \text{VaR}_{\alpha
}\left( P^{n}\right) -{y}\right] \right.  \notag \\
&&\left. -\left[ \text{VaR}_{\alpha }\left( P^{n}\right) -\eta \right]
\right\} -I\left( \eta \leq {y}\right) \left( {y}-\eta \right) ,
\label{Eq:ESscore}
\end{eqnarray}%
where $\eta $ is a known scalar. It is shown in \cite{fissler2016higher}
that {the scoring function in (\ref{Eq:ESscore})} is \textit{consistent} for 
$T_{\alpha }(P^{n})$, {for any }$\eta \in \mathbb{R}$, {and, hence, is
consistent} for ES. The consistency of $f_{\eta }$ guarantees that, for all $%
P^{n}\in \mathcal{P}^{n}$, 
\begin{equation}
\int_{\mathcal{Y}}f_{\eta }\left[ T_{\alpha }(F),y\right] \text{d}F(y)\geq
\int_{\mathcal{Y}}f_{\eta }\left[ T_{\alpha }(P^{n}),y\right] \text{d}F(y),
\label{Eq:consistency}
\end{equation}%
where it is assumed that $\int_{\mathcal{Y}}f_{\eta }\left[ T_{\alpha
}(P^{n}),y\right] $d$F(y)$ exists. Consistency then implies that if we
measure predictive accuracy according to $f_{\eta }\left[ T_{\alpha
}(P^{n}),y\right] $, the forecaster's most sensible course of action (to
obtain accurate predictions for ES) is to quote the element $P^{n}$ such
that the functional $T_{\alpha }(P^{n})$ is `closest' to the true but
unknown predictive functional,\textbf{\ }$T_{\alpha }(F).$

{Each value of $\eta \in \mathbb{R}$ within (\ref{Eq:ESscore}) yields a
scoring function that is consistent for ES. Collectively, this set of
functionals, one for each value of $\eta \in \mathbb{R}$, can be employed to
assess {the }}ES{{\textbf{\ }predictive} \textit{dominance} of one
predictive over another. To this end, denote the FBP mean predictive
distribution as $P_{w}^{n}$, and the exact Bayes mean predictive
distribution as $P_{\text{EB}}^{n}$. Following \cite{kruger2020generic}, the
predictive \textit{dominance}} of FBP over exact Bayes will be {in evidence}
if, for all $\eta \in \mathbb{R}$, 
\begin{equation}
\int_{\mathcal{Y}}f_{\eta }\left[ T_{\alpha }(P_{w}^{n}),y\right] \text{d}%
F(y)\geq \int_{\mathcal{Y}}f_{\eta }\left[ T_{\alpha }(P_{\text{EB}}^{n}),y%
\right] \text{d}F(y).  \label{Eq:dominance}
\end{equation}%
\cite{ehm2016quantiles} propose the use of Murphy diagrams as a
visualization tool {for revealing }predictive dominance in empirical
applications. {The idea is} to create a plot whose $x$-axis indicates the
value $\eta $ that defines $f_{\eta }$, while the $y$-axis indicates a
sample estimate of the corresponding difference between the predictions made
under the two methods: 
\begin{equation*}
\Delta _{\eta }(P_{{w}}^{n},P_{\text{EB}}^{n})=\int_{\mathcal{Y}}\left\{
f_{\eta }\left[ T_{\alpha }(P_{w}^{n}),y\right] -f_{\eta }\left[ T_{\alpha
}(P_{\text{EB}}^{n}),y\right] \right\} \text{d}F(y).
\end{equation*}%
Dominance will be visually in evidence when $\Delta _{\eta }\geq 0$ for all $%
\eta $.

\end{document}